\theoremstyle{plain}
\newtheorem{theorem}{Theorem}
\newtheorem{lemma}[theorem]{Lemma}
\newtheorem{corollary}[theorem]{Corollary}
\newtheorem{proposition}[theorem]{Proposition}
\theoremstyle{definition}
\newtheorem{remark}[theorem]{Remark}
\newtheorem{definition}[theorem]{Definition}
\newcommand{\NPhard}{NP-hard}
\newcommand{\NPhardness}{NP-hardness}
\newcommand{\barsuchthat}{|}
\newcommand{\tlint}{2Lin(2)}
\newcommand{\threelint}{3Lin(2)}
\newcommand{\threeSat}{3SAT}
\newcommand{\maxtlint}{Max-2Lin(2)}
\newcommand{\maxthreelint}{Max-3Lin(2)}
\newcommand{\maxcut}{Max-Cut}
\newcommand{\maxcsp}{Max-CSP}
\newcommand{\maxtsat}{Max-2Sat}
\newcommand{\maxflow}{Max-Flow}
\DeclareMathOperator{\maxflowval}{max\_flow}
\newcommand{\hadk}[1][k]{Had\textsubscript{$#1$}}
\newcommand{\maxhadk}[1][k]{Max-Had\textsubscript{$#1$}}
\newcommand{\hadkttlint}[1][k]{\hadk[#1]{}-to-\tlint{}}
\newcommand{\mincsp}{Min-CSP-deletion}
\newcommand{\mintlint}{Min-2Lin(2)-deletion}
\newcommand{\mincut}{Min-Cut}
\DeclareMathOperator{\mincutval}{min\_cut}
\DeclareMathOperator{\rs}{rs}
\DeclareMathOperator{\rsinf}{rs_\infty}
\DeclareMathOperator{\val}{val}
\DeclareMathOperator{\rsLP}{rsLP}
\DeclareMathOperator{\rsinfLP}{rs_\infty LP}
\DeclareMathOperator{\Sat}{Sat}
\DeclareMathOperator{\uniform}{uniform}
\DeclareMathOperator{\expect}{\mathbb{E}}
\newcommand{\nin}{\not\in}
\DeclareMathOperator{\Ffold}{\mathcal{F}_{\mathrm{fold}}}
\DeclareMathOperator{\Fk}{\mathcal{F}}
\newcommand{\bool}{\mathbb{F}_2}
\newcommand{\boolk}{\bool^k}
\newcommand{\boolkp}{\bool{}^{k'}}
\newcommand{\real}{\mathbb{R}}
\newcommand{\integers}{\mathbb{Z}}
\DeclareMathOperator{\affine}{affine}
\DeclareMathOperator{\dist}{dist}
\DeclareMathOperator{\Mk}{\mathcal{M}}
\DeclareMathOperator{\Nk}{\mathcal{N}}
\DeclareMathOperator{\fout}{out}
\DeclareMathOperator{\fin}{in}
\DeclareMathOperator{\leak}{leak}
\DeclareMathOperator{\lift}{lift}
\DeclareMathOperator{\dimension}{dim}
\begin{document}

\title{On the NP-Hardness Approximation Curve for Max-2Lin(2)}

\author{Bj{\"o}rn Martinsson\\
KTH\\
\texttt{bmart@kth.se}
}

\begin{titlepage}

\maketitle

\begin{abstract}
  In the \maxtlint{} problem you are given a system of equations on the form $x_i + x_j \equiv b \pmod{2}$, and your objective is to find an assignment that satisfies as many equations as possible. Let 
  $c \in [0.5, 1]$ denote the maximum fraction of satisfiable equations. In this paper we construct a curve $s (c)$ such that it is 
  \NPhard{} to find a solution satisfying at least a 
  fraction $s$ of equations. This curve either matches or improves all of the previously known inapproximability NP-hardness results for \maxtlint{}. In particular, we show that if $c \geqslant 0.9232$ then $\frac{1 - s (c)}{1 - c} > 1.48969$, which improves the NP-hardness inapproximability constant for the min deletion version of \maxtlint{}. Our work complements the work of O'Donnell and Wu that studied the same question assuming the Unique Games Conjecture.
  
  Similar to earlier inapproximability results for \maxtlint{}, we use a gadget reduction from the  $(2^k - 1)$-ary Hadamard predicate. Previous works used $k$ ranging from $2$ to $4$. Our main result is a procedure for taking a gadget for some fixed $k$, and use it as a building block to construct better and better gadgets as $k$ tends to infinity. Our method can
  be used to boost the result of both smaller gadgets created by hand $(k = 3)$ or larger gadgets constructed using a computer $(k = 4)$.
\end{abstract}

\end{titlepage}

\section{Introduction}
Maximum constraint satisfaction problems (\maxcsp{}s) form one of the most
fundamental classes of problems studied in computational complexity theory. A
\maxcsp{} is a type of problem where you are given a list of variables and a
list of constraints, and your goal is to find an assignment that satisfies as
many of the constraints as possible. Some common examples of \maxcsp{} are
\maxcut{} and \maxtsat{}. Every \maxcsp{} also has a corresponding \mincsp{}
problem where your objective is deleting as few constraints as possible to
make all of the remaining constraints satisfiable. The \mincsp{} problem is fundamentally the same optimisation problem as its corresponding \maxcsp{}, however their objective values are different.

\subsection{History of \maxcut{}}

The \maxcut{} problem is arguably both the simplest \maxcsp{} as well as the
simplest \NPhard{} problem. In the \maxcut{} problem you are given an undirected
graph, and your objective is to find a cut of the largest possible size. A cut
of an undirected graph is a partition of the vertices into two sets and the
size of a cut is the fraction of edges that connect the two sets relative to
the total number of edges. Solving \maxcut{} exactly is difficult, but there are
trivial approximation algorithms that get within a factor of $\frac{1}{2}$ of
the optimum. One such algorithm is randomly picking the cut by tossing one
coin per vertex.

Knowing this, one natural question is, how close can a polynomial time
algorithm get to the optimum? Goemans and Williamson partly answered this in a
huge breakthrough in 1995 {\cite{GW}} by applying semi-definite programming
(SDP) to create a polynomial time algorithm that finds a solution that is
within a factor of $\alpha_{\mathrm{GW}} \approx 0.87856$ of the optimum. At the
time, there was hope that Goemans and Williamson's algorithm could be improved
further to get even better approximation factors than $0.87856$, but no such
improvements were ever found. Instead, in 2004 Khot et al {\cite{KKMO}} proved
using the Unique Games Conjecture (UGC), that approximating \maxcut{} within a
factor of $\alpha_{\mathrm{GW}} + \varepsilon$ is \NPhard{} for any $\varepsilon >
0$. This conjecture had been introduced by Khot two years prior {\cite{Khot}}.
This was possibly the first result establishing the close connection between
UGC and SDP based algorithms.

To this day, UGC remains an open problem, and in particular no one has been
able to find an approximation algorithm for \maxcut{} with a better
approximation ratio than $\alpha_{\mathrm{GW}}$. In 2008 O'Donnell  and  Wu
{\cite{OW}} were able to very precisely describe the tight connection between
SDP based approximation algorithms for Max-Cut and UGC. They constructed a
curve $\mathrm{Gap}_{\mathrm{SDP}} (c) : [0.5, 1] \rightarrow [0.5, 1]$ with the
following two properties:

\begin{enumerate}
\item It is UGC-hard to find a cut of size $\mathrm{Gap}_{\mathrm{SDP}} (c) +
\varepsilon$ given that the optimal cut has size $c$ for any $\varepsilon >
0$. We here use \emph{UGC-hard} as a short hand for ``NP-hard under UGC''.

\item Within the $\mathrm{RPR}^2$-framework {\cite{rpr2,OW}}, there are polynomial time
algorithms that are guaranteed to find a cut of size at least
$\mathrm{Gap}_{\mathrm{SDP}} (c) - \varepsilon$ if the optimal cut has size $c$. The
$\mathrm{RPR}^2$-framework is a generalisation of Goemans and~Williamson's
algorithm.
\end{enumerate}

This means that their work both describe the best known polynomial time
approximation algorithms for Max-Cut, and also show that under UGC these
approximation algorithms cannot be improved. It is important to note that
their algorithmic results do not require UGC. We emphasis that one implication
of their result is that giving efficient algorithms with a better performance
would disprove UGC.

\subsection{NP-hardness inapproximability of \maxtlint{}}

\maxtlint{} is a \maxcsp{} that is very closely related to \maxcut{}. An instance
of \maxtlint{} is a system of linear equations on the form $x_i + x_j \equiv b
\pmod{2}$, and the objective is to find an assignment that satisfies as many
equations as possible. \maxcut{} is the special case where we only allow
equations with right hand side equal to $1$. This implies that any hardness
result for \maxcut{} immediately yields the same hardness result for
\maxtlint{}. One example of this is the UGC-hardness of \maxcut{} described by
the $\mathrm{Gap}_{\mathrm{SDP}} (c)$ curve by O'Donnell  and  Wu {\cite{OW}}.

Furthermore, O'Donnell  and  Wu's algorithmic results {\cite{OW}} also
directly carries over to \maxtlint{}. This is because the
$\mathrm{RPR}^2$-framework that they relied on uses odd rounding functions, and
therefore does not differentiate between \maxcut{} and \maxtlint{}.

The conclusion is that the $\mathrm{Gap}_{\mathrm{SDP}} (c)$ describes a tight
connection between the UGC-hardness of \maxtlint{} as well as the best known
polynomial time approximation algorithms for \maxtlint{}. On the other hand,
the NP-hardness inapproximability of \maxtlint{} is not well understood. The
strongest \NPhardness{} inapproximability results known for \maxtlint{}
({\cite{Has}}, {\cite{Wiman}}) are still far off from the UGC-hardness
described by the $\mathrm{Gap}_{\mathrm{SDP}} (c)$ curve.

The aim of this paper is to improve the state of the art \NPhardness{}
inapproximability of \maxtlint{} and also to give the full
picture of the state of the art \NPhardness{} inapproximability of
\maxtlint{}. We do this by constructing a curve $s(c) :
[0.5, 1] \rightarrow [0.5, 1]$ such that it is \NPhard{} to distinguish between
instances where the optimal assignment satisfies a fraction of $c$ of the
equations, and instances where all assignments satisfy at most a fraction of
$s(c)$ of the equations. Our curve either matches or improves all previously
known \NPhardness{} inapproximability results for \maxtlint{}. We construct the curve by solving a separate optimisation problem for
each value of $c$, so our result covers the entire spectrum of $c \in [0.5,1]$.

Our result complements the work by O'Donnell  and  Wu {\cite{OW}}. Our curve
describes the state of the art \NPhardness{} inapproximability of \maxtlint{}
while O'Donnell  and  Wu's $\mathrm{Gap}_{\mathrm{SDP}} (c)$ curve describes the
UGC-hardness of \maxtlint{}. It is worth noting that UGC is still an open
problem that over the years has been the subject of much debate. There are
results that indicate that UGC might be true, such as the proof of the closely
related $2$-to-$2$ Games Conjecture {\cite{2to2}}. But on the other hand there
are also results that indicate the UGC might be false, such as the existence
of subexponential algorithms for Unique Games {\cite{subexp}}. Currently there
is no consensus for whether UGC is true or not. It is for this reason that it
is important to study \NPhardness{} independent of UGC, especially for
fundamental problems such as \maxtlint{}.

\subsection{Gadget reductions}

Gadgets are the main tools used to create reductions from one \maxcsp{} $\Phi$
to another \maxcsp{} $\Psi$. A gadget is a description of how to translate a
specific constraint $\varphi$ of $\Phi$ into one or more constraints of
$\Psi$. For example, if $\Phi$ is \maxthreelint{} and $\Psi$ is \maxcut{}, then a
gadget from $\varphi$ to $\Psi$ is a graph. A gadget is allowed to use both
the original variables in the constraint $\varphi$, which are called
\emph{primary variables}, and new variables specific to the gadget, which
are called \emph{auxiliary variables}.

The standard technique used to construct gadgets is to follow the ``automated
gadget'' framework of Trevisan et al {\cite{Trev}}. This framework describes
how to construct a gadget by solving a linear program and also proves that the constructed gadget is optimal. This framework is mainly used to construct
gadgets for small and simple \maxcsp{}s. This is because the number of variables
in the gadget scales exponentially with the number of satisfying assignments
of $\varphi$. Furthermore, the number of constraints in the LP scales
exponentially with the number of variables, so it scales double exponentially
with the number of satisfying assignment of $\varphi$.

As an example let us take the gadget from \maxthreelint{} to \maxtlint{} used
by Håstad {\cite{Has}}, which was originally constructed by Trevisan et
al {\cite{Trev}}. A constraint in \maxthreelint{} has $4$ satisfiable assignments.
Having $4$ satisfiable assignments means that the gadget uses $2^4 = 16$
variables. Furthermore, since \maxtlint{} allow negations, half of these
variables can be removed because of negations. So the actual number of
variables in the gadget is $2^{4 - 1} = 8$. This in turn implies that the
number of constraints in the LP is $2^8 = 256$. This number is small enough
that it is feasible for a computer to solve the LP. In this paper we are
interested in constructing gadgets from generalisations of \maxthreelint{}, called
the Hadamard \maxcsp{}s. These have significantly more satisfying assignments
than \maxthreelint{}. It is easy to see that a simple-minded application of the
``automated gadget'' framework leads to an LP that is far too large to naively
be solved by a computer. This means that we have to deviate from the
``automated gadget'' framework in order to construct our gadgets.

Gadgets have two important properties, called \emph{soundness $s$} and \emph{completeness
$c$}. If a gadget is constructed using the ``automated gadget'' framework, then
it is trivial to calculate the completeness of the gadget. On the other hand,
calculating the soundness of a gadget from $\Phi$ to $\Psi$ involves solving
instances of $\Psi$. In practice, calculating the soundness of a large gadget
can be very difficult since $\Psi$ is usually an NP-hard problem.

Gadgets can be constructed with different goals in mind. The case that we are
interested in is finding the gadget with the largest soundness for a fixed
completeness. This is what allows us to construct our curve $s(c)$. In
general there are also other objectives that could be of interest when
constructing gadgets. One such case is finding the gadget with the smallest
ratio of $\frac{s}{c}$. This corresponds to finding the best lower bound for
the approximation ratio of \maxtlint{}. Another possibility is to maximise
$\frac{1 - s}{1 - c}$. This corresponds to finding the best upper bound for
the approximation ratio of \mintlint{}. It is possible to use the
``automated gadget'' framework by Trevisan et al {\cite{Trev}} to find the
optimal gadgets for all of these scenarios.

\subsection{The Hadamard \texorpdfstring{\maxcsp{}s \maxhadk{}}{Max-CSPs Max-Hadk}}

One of the earliest gadget reductions used to show \NPhardness{}
inapproximability of \maxtlint{} is a gadget reduction from
\maxthreelint{} used by Håstad in his classical paper from 1997
{\cite{Has}}, which was constructed by Trevisan et al {\cite{Trev}}.
More recently, \NPhardness{} inapproximability results for \maxtlint{} have
used gadget reductions from a generalisation of \maxthreelint{} called
the Hadamard \maxcsp{}s {\cite{Has2,Wiman}}. The $(2^k - 1)$-ary
Hadamard \maxcsp{}, $k \geqslant 2$, is a constraint satisfaction problem where
a clause is satisfied if and only if its literals form the truth table of a
linear $k$-bit Boolean function. The $(2^k - 1)$-ary Hadamard CSP is denoted
by \maxhadk{}. One special case is $k = 2$, where the number of
literals of a clause is $3$. It turns out that this case coincides with
\maxthreelint{}. This means that \maxhadk{} can be seen as a
generalisation of \maxthreelint{}.

There are mainly two reasons as to why \maxhadk{} is useful for gadget
reductions. The first reason is that \maxhadk{} is a very sparse CSP.
It being sparse refers to the number of satisfiable assignments of a clause
being few in relation to the total number of possible assignments. The number
of satisfying assignments of a clause is just $2^k$, one for each linear
$k$-bit Boolean function, while the total number of possible assignments is
$2^{(2^k - 1)}$.

The second reason is that \maxhadk{} is a \emph{useless
predicate} for any $k \geqslant 2$, which is an even stronger property than
being approximation resistant. This was shown by Chan in 2013 {\cite{Chan}}.
\maxhadk{} being a useless predicate means that if you are given a
nearly satisfiable instance of \maxhadk{}, then it is \NPhard{} to find an
assignment such that the distribution over the $(2^k - 1)$ long bit strings
given by the literals of the clauses is discernibly different from the uniform
distribution.

\subsubsection{Historical overview of \texorpdfstring{\hadkttlint{}}{Hadk-to-2Lin(2)} gadgets}

In 1996, Trevisan et al {\cite{Trev}} constructed the optimal gadget from \maxhadk[2]{} to \maxtlint{}. They showed that the \maxhadk[2]{}
gadget that minimises $\frac{s}{c}$ is the same gadget as the one that
maximises $\frac{1 - s}{1 - c}$. Furthermore, since this gadget is very small,
using only $8$ variables, they were able to construct it using the ``automated
gadget'' framework.

In 2015, Håstad et al. {\cite{Has2}} constructed gadgets from
\maxhadk[3]{} to \maxtlint{}. They showed that the \maxhadk[3]{}
gadget that minimises $\frac{s}{c}$ is equivalent to the \maxhadk[2]{}
gadget. So using \maxhadk[3]{} over \maxhadk[2]{} does not give an
improved hardness for the approximation ratio of \maxtlint{}. However, the
\maxhadk[3]{} gadget that maximises $\frac{1 - s}{1 - c}$ is notably
better than the \maxhadk[2]{} gadget. This gadget is relatively small,
only using 128 variables. This is too many variables for it to be possible to
naively apply the ``automated gadget'' framework. However, Håstad et
al. were still able to construct and analyse the optimal gadget by hand based
on ideas from the ``automated gadget'' framework.

In 2018, Wiman {\cite{Wiman}} constructed gadgets from \maxhadk[4]{} to
\maxtlint{}. Note that \maxhadk[4]{} gadgets have $2^{15}$ variables.
Calculating the soundness of a such a gadget requires solving an instance of
\maxtlint{} with $2^{15}$ variables, which is infeasible to do by hand or even
with a computer. Wiman initially followed the ``automated gadget'' framework.
However, in order to be able to calculate the soundness of the gadget, Wiman
relaxed the \maxtlint{} problem into a \maxflow{} problem. This
\emph{relaxed soundness} $\rs$ is an upper bound of the true soundness.
This relaxation made it possible for Wiman to use a computer to find the
gadget that maximises $\frac{1 - \rs}{1 - c}$. Wiman's relaxation was
successful, since by using it he was able to find a \maxhadk[4]{} gadget
that was better than the optimal \maxhadk[3]{} gadget. Note, however,
that by using a relaxation, it is uncertain whether Wiman found the optimal
\maxhadk[4]{} gadget or not.

\subsubsection{Our \texorpdfstring{\hadkttlint{}}{Hadk-to-2Lin(2)} gadgets}

In this paper, we construct gadgets from \maxhadk{} to \maxtlint{} for
$k$ approaching infinity. Recall that a gadget uses $2^{2^k - 1}$ variables,
so using a computer to construct gadgets for $k > 5$ is normally impossible.
We get around this limitation by introducing a procedure for taking
\maxhadk{} gadgets and transforming them to \maxhadk[k']{}
gadgets, for $k' > k$. We refer to this procedure as the \emph{lifting} of
a \maxhadk{} gadget into a \maxhadk[k']{} gadget. Two of the
properties of lifting is that the completeness stays the same and the
soundness does not decrease.

To show \NPhardness{} of approximating \maxtlint{}, we start by constructing
\maxhadk{} to \maxtlint{} gadgets for $k = 4$ using a computer. We
then analytically prove an upper bound of Wiman's relaxed soundness of the
lifting of these gadgets as $k' \rightarrow \infty$.

The method we use to construct our gadgets is by solving an LP. This LP is
similar to what Wiman could have used to construct his gadget. The difference
is that the LP we use is made to minimise the soundness of the lifted gadget,
instead of minimising the soundness of the gadget itself. If done naively,
this LP would have roughly $2^{3 \cdot (2^k - 1)} = 2^{45}$ variables. But by
making heavy use of symmetries of the LP, we are able to bring it down to a
feasible size.

The main technical work of this paper is proving an upper bound on Wiman's
relaxed soundness of a lifted gadget as $k' \rightarrow \infty$. Recall that
calculating Wiman's relaxed soundness involves solving instances of \maxflow{}.
As $k'$ tends to infinity, the size of these instances also tend to infinity.
In order to lower bound the value of these \maxflow{} problems, we introduce the
concept of a type of infeasible flows which we call \emph{leaky flows}. A leaky flow is a flow for which the conservation of flows constraint has been
relaxed. This allows leaky flows to attain higher values compared to feasible
flows. We then show that by randomly overlapping leaky flows onto the large
\maxflow{} instances, we are able to get closer and closer to a feasible flow as
the size of the instances tend to infinity.

\subsection{Our results and comparison to previous results}

Using a gadget reductions from \maxhadk{} to \maxtlint{}, we are able
to construct a curve $s (c) : [0.5, 1] \rightarrow [0.5, 1]$ such that it is
\NPhard to distinguish between instances of \maxtlint{} where the optimal
assignment satisfies a fraction of $c$ of the equations and instances where
all assignments satisfy at most a fraction of $s(c)$ of the equations. This curve does not have an explicit formula. Instead, each point on the curve is defined as the solution to an LP, which we solve using a computer.

\begin{theorem}
  \label{main_res}Let $s(c) : [0.5, 1] \rightarrow [0.5, 1]$ be the curve defined in Definition \ref{curve}. Then for every sufficiently small $\varepsilon > 0$, it is NP-hard to distinguish between instances of \maxtlint{} such that
  \begin{description}
  \item[Completeness] There exists an assignment that satisfies a fraction at
  least $c - \varepsilon$ of the constraints.
  \item[Soundness] All assignments satisfy at most a fraction $s(c) + \varepsilon$
  of the constraints.
  \end{description}
\end{theorem}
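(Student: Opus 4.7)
The plan is to assemble the result from three ingredients, all of which are foreshadowed in the introduction. The starting point is Chan's theorem that \maxhadk{} is a useless predicate for every $k \geqslant 2$; in particular, for any fixed $k$ it is \NPhard{} to distinguish instances of \maxhadk{} of completeness $1-\varepsilon$ from those of soundness $2^{-k}+\varepsilon$. The second ingredient is a gadget reduction from \maxhadk{} to \maxtlint{}: a gadget with completeness $c$ and soundness $\sigma$ turns a $(1-\varepsilon,\,2^{-k}+\varepsilon)$ gap instance of \maxhadk{} into a $(c-O(\varepsilon),\,\sigma+O(\varepsilon))$ gap instance of \maxtlint{} in the standard way (replace each clause by a fresh copy of the gadget with the appropriate primary variables, normalise the weights, etc.). The third ingredient is the \emph{lifting} procedure promised in the introduction, which, given a \maxhadk{} gadget, produces a \maxhadk[k']{} gadget for every $k' > k$ with the same completeness and with soundness that does not increase. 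Composing these three ingredients, it suffices to exhibit, for each $c \in [0.5,1]$, a sequence of gadgets whose soundness converges to $s(c)$ as $k' \to \infty$.

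The definition of $s(c)$ as an LP makes this last point explicit. For each $c$ I would set up the LP referenced in Definition~\ref{curve}, whose variables parameterise a \maxhadk[4]{} gadget (after the symmetry reductions needed to bring the $\sim 2^{45}$ raw variables down to a tractable size) and whose objective is to minimise the \emph{limiting relaxed soundness} $\rsinf$ of the lifted gadget subject to a completeness of at least $c$. Solving the LP by computer yields both the value $s(c)$ and a concrete base gadget $G_c$. I would then apply the gadget composition above to $G_c$ lifted to $k'$, so that the \NPhardness{} gap becomes $(c - O(\varepsilon),\, \rs(\lift(G_c)) + O(\varepsilon))$, and choose $k'$ large enough that $\rs$ is within $\varepsilon$ of its limit $\rsinf = s(c)$; the observation that relaxed soundness is an upper bound on true soundness keeps everything consistent.

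The main obstacle, and where essentially all the work lies, is justifying that the LP value really equals the limit of $\rs(\lift(G_c))$. One direction, namely that $s(c)$ upper bounds the limit, is built into the LP formulation, because its constraints are chosen so that any feasible dual object exhibits a fractional assignment of the lifted gadget of the claimed value. The hard direction is the matching lower bound on the \maxflow{} instances whose values define $\rs$. These instances blow up with $k'$, so a direct combinatorial analysis is hopeless; instead, the plan is to use the \emph{leaky flow} construction, where the conservation constraint is relaxed so that small, analysable flow patterns can be placed inside the giant instance. By randomly overlapping many copies of such leaky flows I expect the net leakage to average out and the resulting (random) flow to approach feasibility as $k' \to \infty$, with expected value matching the LP bound. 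This probabilistic rounding, together with a concentration or averaging argument to turn the random flow into a deterministic feasible one of at least the claimed value, is the technical heart of the proof.

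Once the limiting soundness identity is established, the theorem follows by combining Chan's hardness, the gadget composition, the lifting, and a choice of $k' = k'(\varepsilon)$ large enough to absorb the $\varepsilon$ slack in both completeness and soundness; no further step should require more than routine bookkeeping of the error terms.
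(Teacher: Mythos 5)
Your overall architecture is exactly the paper's: Chan's uselessness theorem gives hardness for \maxhadk{}, the automated-gadget reduction transfers it to \maxtlint{} with the gadget's completeness and soundness as the gap, relaxed soundness upper-bounds true soundness, and the lifting of a $k=4$ gadget to $k'\to\infty$ combined with the leaky-flow/random-overlap/second-moment argument shows $\rs$ of the lifted gadget converges to the LP value $\rsinf(G)=s(c)$. That is precisely the chain Theorem~\ref{main} $\leftarrow$ Lemma~\ref{thm_important} $\leftarrow$ Proposition~\ref{prop_goal} in the paper, so the technical heart of your plan is the right one.

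There is, however, one concrete gap: your reduction to ``exhibit, for each $c\in[0.5,1]$, a sequence of gadgets whose soundness converges to $s(c)$'' cannot work for $c>1-2^{-4}=15/16$. A \hadkttlint[4]{} gadget has completeness at most $1-2^{-k}$ (every edge has length at least $2^{-k}$), and lifting preserves completeness, so no gadget in your scheme ever attains $c>15/16$. The curve is defined on that range by linear extrapolation, $s(c)=1+2^{k}\bigl(s(1-2^{-k})-1\bigr)(1-c)$, and the paper closes this case not with a gadget but with a convexity argument: a convex combination (on disjoint variable sets) of the hard instance at $\bigl(1-2^{-k},\,s(1-2^{-k})\bigr)$ with a trivially satisfiable instance at $(1,1)$ yields a hard instance at any intermediate point. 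You need to add this step explicitly. A second, smaller bookkeeping point: the LP actually solved is a \emph{restricted} compressed $\rsinfLP$ (some gadget variables are heuristically discarded), so one must note that its optimum is only an upper bound on the unrestricted optimum --- which is harmless for an NP-hardness claim but should be said. Finally, your remark that the ``other direction'' follows because a feasible dual object certifies the value is not needed for the theorem: NP-hardness only requires the upper bound on soundness, not tightness of $s(c)$.
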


A notable point on the curve is $c = \frac{590174949}{639271832} \approx
0.9232$ and $s (c) = \frac{141533171}{159817958} \approx 0.8856$. This is the
point on the curve that gives the highest \NPhardness{} inapproximability factor
$\frac{1 - s}{1 - c}$ of \mintlint{}.

\begin{corollary}
  It is \NPhard{} to approximate \mintlint{} within a factor of
  $\frac{73139148}{49096883} + \varepsilon \approx 1.48969 + \varepsilon$.
\end{corollary}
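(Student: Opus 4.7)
The plan is to derive the corollary as an immediate consequence of Theorem~\ref{main_res}, applied at the distinguished point $c = \frac{590174949}{639271832}$ highlighted just above the corollary, where by construction $s(c) = \frac{141533171}{159817958}$. The whole argument consists of translating the \maxtlint{} gap supplied by the theorem into a gap for \mintlint{} and then simplifying the resulting ratio; no further reduction or combinatorial work is required.

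The key observation is that \maxtlint{} and \mintlint{} are the same optimisation problem with complementary objectives: an assignment satisfying a fraction $\sigma$ of the equations leaves a fraction $1-\sigma$ unsatisfied, so the minimum deletion fraction of an instance equals $1$ minus its maximum satisfaction fraction. Theorem~\ref{main_res} at the chosen point therefore yields, for every sufficiently small $\varepsilon > 0$, two \NPhard{}-to-distinguish families of instances whose optimum deletion fraction is at most $1 - c + \varepsilon$ in the completeness case and at least $1 - s(c) - \varepsilon$ in the soundness case. A polynomial-time algorithm achieving approximation factor $\alpha < \frac{1 - s(c) - \varepsilon}{1 - c + \varepsilon}$ would tell the two families apart, so \NPhardness{} holds for every such $\alpha$; letting $\varepsilon \to 0$ pushes the threshold up to $\frac{1 - s(c)}{1 - c}$, yielding the factor quoted in the corollary.

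All that remains is arithmetic. Directly, $1 - c = \frac{49096883}{639271832}$ and $1 - s(c) = \frac{18284787}{159817958}$, and the convenient identity $639271832 = 4 \cdot 159817958$ gives
\[
\frac{1 - s(c)}{1 - c} = \frac{4 \cdot 18284787}{49096883} = \frac{73139148}{49096883} \approx 1.48969,
\]
which is exactly the factor claimed in the corollary. There is no technical obstacle here: the combinatorial work lives inside Theorem~\ref{main_res}, and the corollary simply reads off the value of the ratio $(1-s)/(1-c)$ along the curve at the point where it is maximised.
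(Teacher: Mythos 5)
Your derivation is correct and is exactly the (implicit) argument the paper intends: apply Theorem~\ref{main_res} at the distinguished point $c = \frac{590174949}{639271832}$, note that the deletion optimum is one minus the satisfaction optimum so the gap $(1-c+\varepsilon,\,1-s(c)-\varepsilon)$ rules out approximation ratios approaching $\frac{1-s(c)}{1-c}$, and the arithmetic (using $639271832 = 4\cdot 159817958$) checks out, giving $\frac{73139148}{49096883}$. Nothing further is needed.
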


In order to be able to compare our curve $s (c)$ to prior results, we plot our
curve together with O'Donnell and Wu's $\mathrm{Gap}_{\mathrm{SDP}} (c)$ curve
{\cite{OW}}, which, as discussed earlier, describes both the UGC-hardness of
\maxtlint{}, as well as the best known polynomial time approximation
algorithms of \maxtlint{}. Additionally, we also include historical
NP-hardness inapproximability results as points in the diagram. We have also
marked the point $(c, s)$ where Goemans and Williamson's \ algorithm achieves
the approximation ratio of $\frac{s}{c} = \alpha_{\mathrm{GW}} \approx 0.87856$.
This point was shown to be UGC-hard by Khot et al. in 2004 {\cite{KKMO}}.

\begin{figure}
  \centering
  {\includegraphics[width=0.85\textwidth]{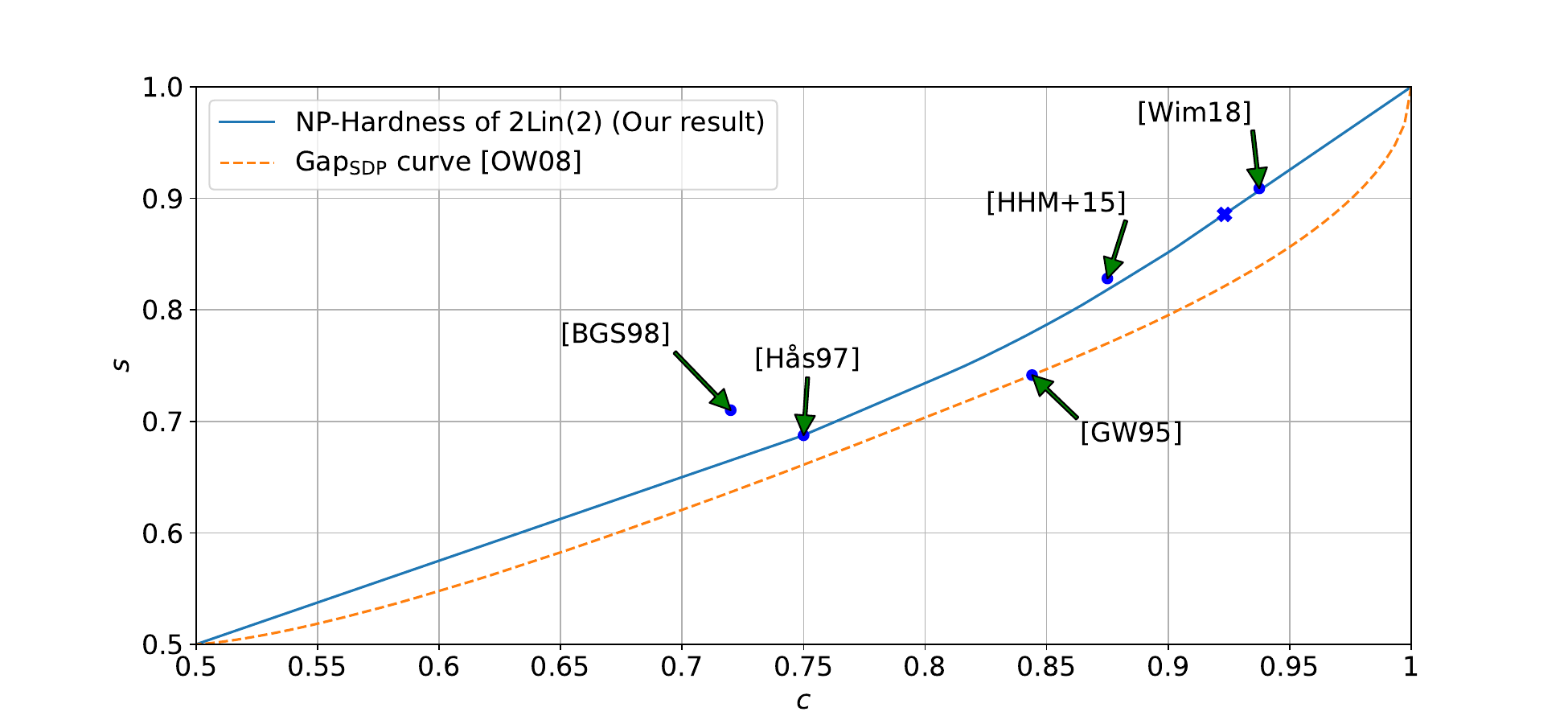}}
  \caption{\label{fig1}The $y$-axis shows the soundness $s$ and the $x$-axis
  the completeness $c$. The blue filled curve is our \NPhardness{} curve $s
  (c)$. The red dashed curve is the $\mathrm{Gap}_{\mathrm{SDP}} (c)$ by O'Donnell
  and Wu's {\cite{OW}}. The points marked with arrows are prior
  inapproximability results of \maxtlint{}. The blue cross on the curve
  marks our best inapproximability result for \mintlint{}, see Figure
  \ref{fig3}. Note that both of the curves in this figure are convex
  functions.}
  {\includegraphics[width=0.85\textwidth]{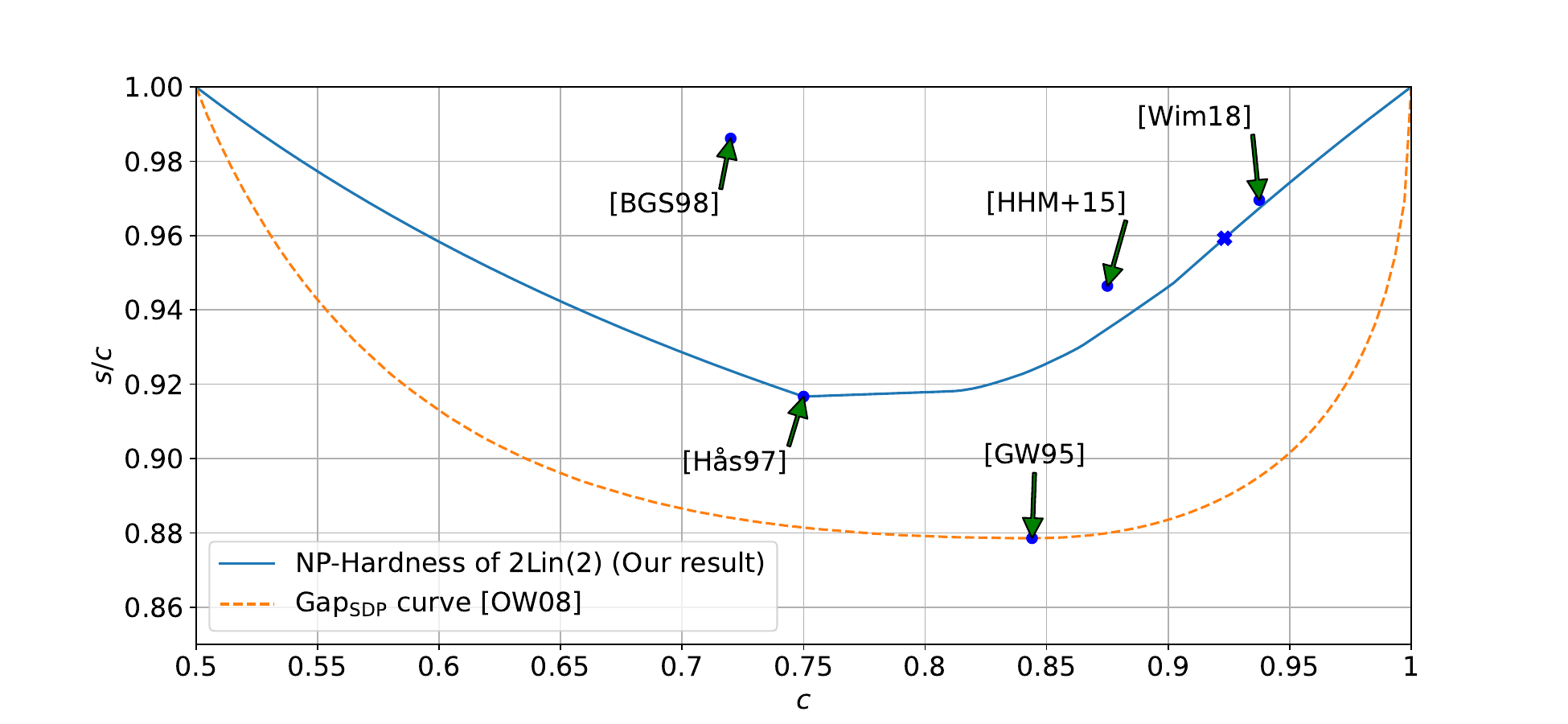}}
  \caption{\label{fig2}The $y$-axis shows $s / c$, which corresponds to the
  approximation ratio of \maxtlint{}. The point on the curve $c (s)$ that
  minimises this ratio is $c = 3 / 4$ and $s (c) = 11 / 16$, which exactly
  matches Håstad's result from 1997 {\cite{Has}}.}
  {\includegraphics[width=0.85\textwidth]{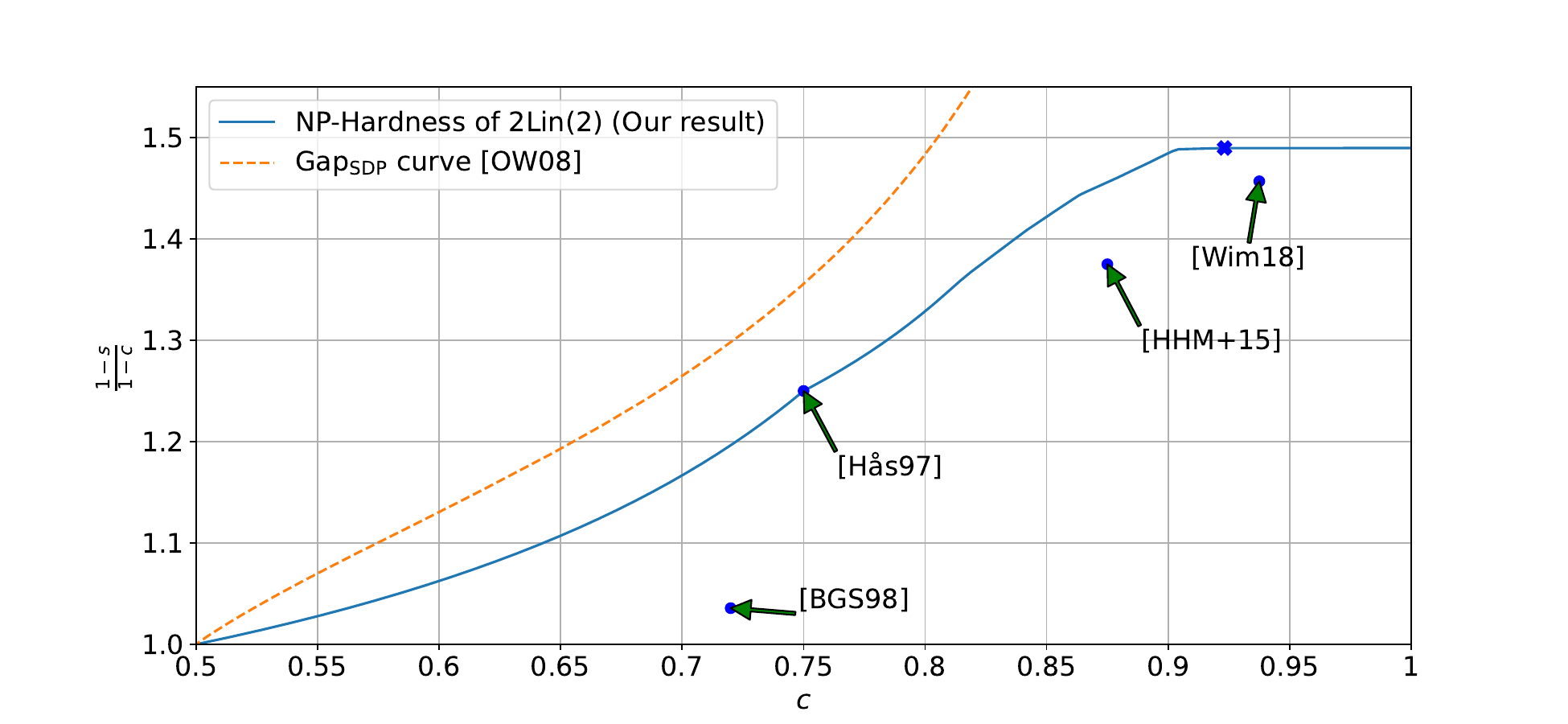}}
  \caption{\label{fig3}The $y$-axis shows $(1 - s) / (1 - c)$, which
  corresponds to the approximation ratio of \mintlint{}. This ratio
  reaches its maximum $\frac{1 - s (c)}{1 - c} = \frac{73139148}{49096883}
  \approx 1.4896$ at $c = \frac{590174949}{639271832}$, which is marked by a
  blue cross. The curve stays constant after this point.}
\end{figure}

The curve $s(c)$ is plotted in three Figures. All three Figures contain the
same exact same data, but the data is plotted in different ways. In Figure \ref{fig1} the soundness $s(c)$ is on the $y$-axis and the completeness $c$
is on the $x$-axis. This plot has the disadvantage that to the eye, it is
difficult to distinguish the exact shape of the curve $s (c)$. In the next
plot, Figure \ref{fig2}, $\frac{s (c)}{c}$ is on the $y$-axis and $c$ is on
the $x$-axis. This plot describes the approximation ratio of
\maxtlint{}. The third plot, in Figure \ref{fig3}, has $\frac{1 - s
(c)}{1 - c}$ on the $y$ axis and $c$ on the $x$-axis. This plot describes the
approximation ratio of \mintlint{}.

It is important to note that the curves in Figure \ref{fig1} are convex functions since it is possible to take the convex combination of two hard
instances using disjoint sets of variables. One implication from this is that
it is possible to construct NP-hardness curves using any of the points $(c,
s)$ by drawing two lines, one from $(0.5, 0.5)$ to $(c, s)$ and one from $(c,
s)$ to $(1, 1)$. This means that all of the historical inapproximability results can also be described using convex curves.

In Figures $\ref{fig1}$-\ref{fig3} prior inapproximability results for
\maxtlint{} are marked as dots. Bellare et al {\cite{BGS}} was first to give
an explicit NP-hardness result, which had $c = 0.72$ and $s = 0.71$. In 2015,
Håstad et al {\cite{Has2}} used Chan's result {\cite{Chan}} to create a
gadget reduction from \maxhadk[3]{} which had $c = \frac{7}{8}$ and $s =
\frac{53}{64}$. This result became the new record for the upper bound of the
approximation ratio of \mintlint{}, as seen in Figure
\ref{fig3}. Three years later, Wiman {\cite{Wiman}} further improved on this
result by using \maxhadk[4]{} instead of \maxhadk[3]{}. Wiman's
\maxhadk[4]{} gadget has $c = \frac{15}{16}$ and $s =
\frac{3308625759}{3640066048} \approx 0.9089$. This further improved the upper
bound on the approximation ratio of \mintlint{}.

Similar to earlier results, the technique we use to construct our curve is
also a gadget reduction from \maxhadk{} to \maxtlint{}. But
instead of using a gadget reduction from \maxhadk{} for a fixed $k$,
we instead let $k$ tend to infinity. This improves the quality of our gadget.
One example of such an improvement is our upper bound on the approximation
ratio of \mintlint{}, which can be seen in Figure \ref{fig3}. The
ratio $\frac{1 - s (c)}{1 - c}$ is maximised on our curve at $c =
\frac{590174949}{639271832} \approx 0.9232$ and $s =$
$\frac{141533171}{159817958} \approx 0.8856$, which is marked by a blue cross
in Figure \ref{fig3}.

\subsection{The limitations of \texorpdfstring{\hadkttlint{}}{Hadk-to-2Lin(2)} gadget
reductions}

In Figure \ref{fig1}, it is possible to see a clear gap between our $s(c)$
curve and O'Donnell and Wu's $\mathrm{Gap}_{\mathrm{SDP}}(c)$ curve {\cite{OW}}.
The gap is especially noticeable in Figure \ref{fig3}, since the behaviour of
the two curves are completely different when $c$ is close to $1$. One natural
question is, how close can a \hadkttlint{} gadget
reduction get to the $\mathrm{Gap}_{\mathrm{SDP}} (c)$ curve?

Håstad et al {\cite{Has2}} showed that any gadget reduction from a
Hadamard \maxcsp{} to \maxtlint{} can never achieve an approximation ratio for \mintlint{} better than $\frac{1}{1 - e^{- 0.5}} \approx 2.54$. In
Appendix \ref{appendix:B} we show that any gadget reduction from a Hadamard CSP
to \maxtlint{} that uses Wiman's soundness relaxation can never achieve an
approximation ratio of \mintlint{} better than $2$. Both $2.54$ and
$2$ are fairly large in comparison to the current best value of $1.48969$
shown in Figure \ref{fig3}. So it is potentially possible to still improve our
results in the future using a \hadkttlint{} gadget
reduction for some $k \geqslant 4$. However, these limitations means that it
is impossible to make $s(c)$ match the behaviour of $\mathrm{Gap}_{\mathrm{SDP}}
(c)$ when $c$ is close to $1$.

\subsection{Outline of proof}

Our result is based on \hadkttlint{} gadget reduction
for arbitrary large values of $k$. We start from the ``automated gadget''
framework by Trevisan et al {\cite{Trev}}. In this framework, computing the
soundness of of a \hadkttlint{} gadget involves solving
a \maxtlint{} problem. Following the work of Wiman {\cite{Wiman}},
we relax the soundness computation to a \maxflow{} problem on the
$2^k$-dimensional hypercube. Using symmetries, it is computationally feasible
to construct \hadkttlint{} gadgets that are optimal
with respect to the relaxed soundness for $k \leqslant 4$.

In order to be able harness the power of arbitrarily large $k$, we define a
procedure of embedding a \hadkttlint{} gadget $G$
inside a \hadkttlint[k']{} gadget where $k' > k$. By
overlapping multiple different embeddings of $G$, we construct a gadget $G'$
for an arbitrarily large $k'$.

Recall that the relaxed soundness computation is a \maxflow{} problem, which
can be expressed as an LP. By carefully relaxing this LP, we are able to
create an infeasible flow solution to $\rs(G)$, such that if we lift
it, it becomes an almost feasible flow of $\rs(G')$. The underlying
idea for this relaxation is based on leaky flows (flows where the flow
entering a node can be different than the flow exiting the node). The
``leaks'' of a leaky flow are signed, so random overlap of leaky flows can
result in a feasible flow. We show that this is actually the case for the
solution to our relaxed LP using a second order moment analysis.

The final step is to construct the \hadkttlint{} gadget
$G$ and its corresponding leaky flow for $k = 4$ used in the embedding. This
construction is naturally done using a rational LP solver to solve the relaxed
LP.

\subsection{Organisation of paper}

Section \ref{sec2} contain the preliminaries. It introduces \maxcsp{}s and the automated gadget framework. Section
\ref{sec_gadget} introduces Wiman's relaxed soundness and the infinity relaxed
soundness in terms of an LP. This section also states our main Lemma, Lemma
\ref{thm_important}, relating the infinity relaxed soundness to the relaxed
soundness. Appendix \ref{2.3} is about Max-Flow, and it proves some general theorems about how symmetries can be used to simplify Max-Flow problems. Appendix \ref{appendix:B} contain an analysis of relaxed soundness,
and how it relates to the (true) soundness. Appendix \ref{sec:affine} studies
affine maps. These affine maps are used both to analyse the infinity relaxed
soundness, and to describe the symmetries of the LPs. Appendix \ref{sec_proof}
contains the proof of Lemma \ref{thm_important} using the affine maps.
Appendix \ref{sec_comp} describes the procedure we use for constructing and
verifying the gadgets. Section \ref{sec:num} contains our numerical
results. This includes both plots and tables of various \hadkttlint{} gadgets. Finally, Appendix \ref{sec:used} contains a compact description of all gadgets that we construct.

\section{Preliminaries\label{sec2}}

This section is split into three parts. In Subsection \ref{2.1} we introduce
some basic concepts and notations for Boolean functions $\boolk \rightarrow \{
1, - 1 \}$. After that, in Subsection \ref{2.2} we formally define the $(2^k -
1)$-ary Hadamard predicate. The last subsection,
Subsection \ref{sec:auto}, introduces the ``automated gadget'' framework by
Trevisan et al {\cite{Trev}}, and explains the classical result of how to
construct reductions from the $(2^k - 1)$-ary Hadamard predicate to \maxtlint{}.

\subsection{Boolean functions}\label{2.1}

A $k$-bit Boolean function is a function that takes in $k$ bits and outputs
one bit. The $k$ input bits should be thought of as a vector in a
$k$-dimensional vector field over $\bool{}$. On the other hand, the output bit
is a scalar. For convenience, we denote the vectors as being elements in
$\boolk{}$ and the scalars as elements in $\real{}$, where a scalar bit is
represented as $1$ (False) or $- 1$ (True).

\begin{definition}
  The set of $k$-bit Boolean functions is denoted by $\Fk_k = \left\{
  f : \boolk{} \rightarrow \{ 1, - 1 \} \right\}$.
\end{definition}

One special type of Boolean functions that is of great importance is the set
of linear Boolean functions. Each linear Boolean function in $\boolk{}$ corresponds to an element $\alpha \in \boolk{}$, and is denoted by
$\chi_{\alpha}$.

\begin{definition}
  For $\alpha \in \boolk{}$ let $\chi_{\alpha} \in \Fk_k$ be denote the
  function
  \begin{eqnarray*}
    \chi_{\alpha} (x) & = & (- 1)^{(\alpha, x)}
  \end{eqnarray*}
  where $(\alpha, x) = \sum_{i = 1}^k \alpha_i x_i  \pmod{2}$.
\end{definition}

Any Boolean function can be represented as a sum of linear Boolean functions
using the Fourier transform.

\begin{proposition}
  \label{prop:finv}Given $f \in \Fk_k$, then
  \begin{eqnarray*}
    f (x) & = & \sum_{\alpha \in \boolk{}} \chi_{\alpha} (x)  \hat{f}_{\alpha},
  \end{eqnarray*}
  where $\hat{f}_{\alpha}$ denotes the Fourier transform of $f$ at $\alpha$,
  defined as
  \begin{eqnarray*}
    \hat{f}_{\alpha} & = & \frac{1}{2^k} \sum_{x \in \boolk{}} \chi_{\alpha} (x)
    f (x), \quad \alpha \in \boolk{} .
  \end{eqnarray*}
\end{proposition}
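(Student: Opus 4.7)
The plan is to derive the inversion formula from the orthogonality of the characters $\chi_\alpha$ under the inner product $\langle f, g \rangle = \frac{1}{2^k} \sum_{x \in \boolk} f(x) g(x)$. The key identity I would prove first is that for any $\gamma \in \boolk$,
\[
\sum_{x \in \boolk} \chi_\gamma(x) \;=\; 2^k \cdot [\gamma = 0].
\]
When $\gamma = 0$ this is immediate since $\chi_0 \equiv 1$. When $\gamma \neq 0$, the cleanest argument is to factorise the sum coordinatewise: $\sum_{x \in \boolk} (-1)^{(\gamma, x)} = \prod_{i=1}^k \sum_{x_i \in \{0,1\}} (-1)^{\gamma_i x_i}$, and observe that any factor with $\gamma_i = 1$ equals zero. (Alternatively, pair $x$ with $x \oplus e_i$ for some $i$ with $\gamma_i = 1$.)

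Next I would promote this to the orthogonality relation for characters. Since $(-1)^{(\alpha,x)} \cdot (-1)^{(\beta,x)} = (-1)^{(\alpha+\beta, x)}$, we get $\chi_\alpha(x) \chi_\beta(x) = \chi_{\alpha+\beta}(x)$, and by the previous step
\[
\frac{1}{2^k} \sum_{x \in \boolk} \chi_\alpha(x) \chi_\beta(x) \;=\; [\alpha = \beta].
\]
Equivalently, by symmetry in the roles of input and frequency, $\sum_{\alpha \in \boolk} \chi_\alpha(x) \chi_\alpha(y) = 2^k [x = y]$.

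With this in hand, the proposition follows by a direct substitution. Unfolding the definition of $\hat{f}_\alpha$ on the right-hand side of the claimed identity and swapping the two finite sums gives
\[
\sum_{\alpha \in \boolk} \chi_\alpha(x)\, \hat{f}_\alpha
\;=\; \frac{1}{2^k} \sum_{y \in \boolk} f(y) \sum_{\alpha \in \boolk} \chi_\alpha(x) \chi_\alpha(y)
\;=\; \frac{1}{2^k} \sum_{y \in \boolk} f(y) \cdot 2^k [x = y]
\;=\; f(x),
\]
which is exactly the desired equality. There is no real obstacle here; the only substantive content is the character sum $\sum_x \chi_\gamma(x) = 2^k [\gamma = 0]$, and everything else is a bookkeeping manipulation of double sums. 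Note that the argument never uses the codomain constraint $f(x) \in \{1,-1\}$, so as a bonus the same proof establishes Fourier inversion for arbitrary real-valued functions on $\boolk$.
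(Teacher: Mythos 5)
Your proof is correct; it is the standard character-orthogonality argument for Fourier inversion over $\boolk$, and every step (the vanishing character sum, the multiplicativity $\chi_\alpha\chi_\beta=\chi_{\alpha+\beta}$, and the swap of the two finite sums) is sound. The paper states this proposition without proof as a classical fact, so your derivation is precisely the canonical argument being implicitly invoked, and your closing remark that the codomain $\{1,-1\}$ is never used is also accurate.
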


The Fourier transform is used to define the supporting affine subspace of a
Boolean function. This also gives a natural definition for the dimension of a
Boolean function.

\begin{definition}
  Given $f \in \Fk_k$, its supporting affine sub-space $\affine
  (f)$ is the affine span of $\left\{ \alpha \in \boolk{} : \hat{f}_{\alpha}
  \neq 0 \right\}$.
\end{definition}

\begin{definition}
  Let $\dimension (f), f \in \Fk_k$, denote the dimension of $\affine
  (f)$.
\end{definition}

\begin{remark}
  Affine functions have dimension $0$.
\end{remark}

The distance between two Boolean function is given by the normalised Hamming
distance.

\begin{definition}
  Let $\dist : \Fk_k \times \Fk_k \rightarrow
  \real$ be the normalised Hamming distance between two Boolean
  functions, i.e.
  \begin{eqnarray*}
    \dist (f_1, f_2) & = & \frac{1}{2^k} \sum_{x \in \boolk{}} \frac{1 -
    f_1 (x) f_2 (x)}{2} .
  \end{eqnarray*}
\end{definition}

\subsection{\maxcsp{}}\label{2.2}

This section introduces Constraint Satisfaction Problems (CSP) and
\maxcsp{}. The framework we use is that CSPs are defined by predicates,
which describe which kind of constraints that can appear in the CSP.

\begin{definition}
  An $m$-ary predicate is a function $\phi \in \Fk_m$. The predicate
  is said to be satisfied by $x \in \Fk_2^m$ if $\phi (x) = - 1$.
  Otherwise $x$ is said to violate $\phi$. The set of $x \in \Fk_2^m$
  that satisfies $\phi$ is denoted by $\Sat (\phi)$.
\end{definition}

Given a set of Boolean variables $V$ and a $m$-ary predicate $\phi$, a
$\phi$-\emph{constraint} $\mathcal{C}$  is a tuple $((x_1, b_1), \ldots, (x_m, b_m))$
where $x_i \in V, i = [m]$, and $b_i \in \bool$, $i \in [m]$, where all of the
$x_i$'s are distinct. The constraint $\mathcal{C}$ is said to be satisfied if
\begin{eqnarray*}
  \phi (b_1 + x_1, \ldots, b_n + x_m) & = & - 1,
\end{eqnarray*}
where $+$ denotes the xor-operation. In other words, if $b_i = 1$ then $x_i$
is negated.

\begin{definition}
  Given a $m$-ary predicate $\phi$, an instance $\mathcal{I}$ of the
  Max-$\phi$-CSP is a variable set $V$ and a distribution of $\phi$-constraints
  over $V$. The Max-$\phi$-CSP optimisation problem is; given an instance
  $\mathcal{I}$, find the assignment $A : V \rightarrow \bool$ that maximises
  the fraction of satisfied constraints in $\mathcal{I}$. The optimum is
  called the value of $\mathcal{I}$.
\end{definition}

The main \maxcsp{}s of interest in this paper are the Hadamard
\hadk{} \maxcsp{}, and \maxtlint{} and \maxthreelint{}. These have
the following predicates.

\begin{definition}
  The \tlint{} predicate is the function $f (x, y) = (- 1)^{x + y +
  1}$. Similarly, the \threelint{} predicate is the function $f (x, y,
  z) = (- 1)^{x + y + z + 1}$.
\end{definition}

\begin{definition}
  The Hadamard \hadk{} predicate for $k \geqslant 2$ is a $(2^k -
  1)$-ary predicate. There is one input variable per non-empty subset $S
  \subseteq [k]$. The \hadk{} predicate is satisfied by a binary input
  string $\{ x_S \}_{\varnothing \neq S \subseteq [k]}$ if and only if there
  exists some $\beta \subseteq [k]$ such that
  \begin{eqnarray*}
    \chi_{\beta} (S) & = & (- 1)^{x_S} 
  \end{eqnarray*}
  for all non-empty subset $S \subseteq [k]$. I.e. the \hadk{}
  predicate is satisfied if and only if the input string forms the truth table
  of a linear function.
\end{definition}

\begin{remark}
  The \threelint{} predicate and the \hadk[2]{} predicate are in fact
  identical. Thus the family of Hadamard \maxcsp{}s can be seen as a natural
  generalisation of \maxthreelint{}.
\end{remark}

\begin{remark}
  The set $\Sat(\text{\hadk{} predicate})$ can be expressed using
  a $2^k$ dimensional Hadamard matrix. Let $M_k$ be a $2^k \times (2^k - 1)$
  matrix, where the rows are index by subsets $\beta \subseteq [k]$ and the
  columns are indexed by non-empty subsets $S \subseteq [k]$. Let
  \begin{eqnarray*}
    (M_k)_{\beta, S} & = & \left\{\begin{array}{lll}
      0 & \mathrm{if} & \chi_{\beta} (S) = 1,\\
      1 & \mathrm{if} & \chi_{\beta} (S) = - 1.
    \end{array}\right.
  \end{eqnarray*}
  The matrix $M_k$ is the set $\Sat(\text{\hadk{} predicate})$
  expressed on the form of a matrix, with one row per element. Note that $M_k$
  is almost the $2^k$-dimensional Hadamard matrix. $M_k$ can be made into the
  Hadamard matrix by prepending an all 0 column to it, and then switching out
  $0 / 1$ to $1 / - 1$. This connection between $\Sat(\text{\hadk{} predicate})$ and Hadamard matrices is one of the reasons as to why
  this \maxcsp{} is called the Hadamard \maxcsp{}. An example of the matrix $M_k$
  can be found in Figure \ref{fig:mat}.
  
  \begin{figure}
  \centering
    $\left(\begin{array}{ccccccc}
      0 & 0 & 0 & 0 & 0 & 0 & 0\\
      1 & 0 & 1 & 0 & 1 & 0 & 1\\
      0 & 1 & 1 & 0 & 0 & 1 & 1\\
      1 & 1 & 0 & 0 & 1 & 1 & 0\\
      0 & 0 & 0 & 1 & 1 & 1 & 1\\
      1 & 0 & 1 & 1 & 0 & 1 & 0\\
      0 & 1 & 1 & 1 & 1 & 0 & 0\\
      1 & 1 & 0 & 1 & 0 & 0 & 1
    \end{array}\right)$
    \caption{The matrix $M_k$ for $k = 3$. It is an $8 \times 7$ matrix. Note
    that prepending a zero column to $M_k$ and switching $0/1$ to $1/-1$ would make it into a Hadamard
    matrix, which is symmetric. \label{fig:mat}}
  \end{figure}
\end{remark}

The Hadamard predicate has been shown to be a \emph{useless predicate}.
The concept of useless predicates was first introduced in {\cite{useless}}.
This property of \hadk{} was originally proven by Austrin and Mossel
using UGC {\cite{pair}}, which relies on the fact that $\Sat
(\hadk{})$ admits a balanced pairwise independent set. Later on Chan was
able to show that \hadk{} is a useless predicate without requiring UGC
{\cite{Chan}}. To state this result we first need two definitions.

\begin{definition}
  Given an instance $\mathcal{I}$ of an $m$-ary \maxcsp{} and an assignment $A$,
  let $\mathcal{D} (A, \mathcal{I})$ denote the distribution of binary strings
  $\bool^m$ generated by sampling \\
  $((x_1, b_1), \ldots, (x_m, b_m))
  \sim \mathcal{I}$ and outputting the binary string $((A (x_1) + b_1),
  \ldots, (A (x_m) + b_m))$.
\end{definition}

\begin{definition}
  The total variation distance $d_{\mathrm{TV}}$ between two probability
  measures $\mu_1$ and $\mu_2$ over a finite set $\Omega$ is defined as
  \begin{eqnarray*}
    d_{\mathrm{TV}} (\mu_1, \mu_2) & = & \frac{1}{2} \sum_{\omega \in \Omega} |
    \mu_1 (\omega) - \mu_2 (\omega) | .
  \end{eqnarray*}
\end{definition}

\begin{theorem}
  \label{thm:useless}{\cite{Chan}} For every $\varepsilon > 0$, it is NP-hard
  to distinguish between instances $\mathcal{I}$ of the
  \hadk{} \maxcsp{} such that
  \begin{description}
  \item[Completeness] There exists an assignment $A$ such that 
  \[
    d_{\mathrm{TV}} (\mathcal{D} (A, \mathcal{I}), \uniform (\{\Sat\}
    (\text{\hadk{} predicate}))) \leqslant \varepsilon.
  \]
  
  \item[Soundness] For every assignment $A$, 
  \[
    d_{\mathrm{TV}} (\mathcal{D} (A, \mathcal{I}), \uniform
    (\bool^{2^k - 1})) \leqslant \varepsilon.
  \]
  \end{description}
  Here $\uniform (\Sat (\text{\hadk{} predicate}))$ denotes
  the uniform distribution over binary strings that satisfy the \hadk{} predicate. Similarly, $\uniform (\bool^{2^k - 1})$
  denotes the uniform distribution over all binary strings of length $2^k -
  1$.
\end{theorem}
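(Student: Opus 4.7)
The plan is to prove this in the style of Chan's ``useless predicate'' reduction: a Long Code PCP starting from Smooth Label Cover. Fix $\varepsilon > 0$. Take a Smooth Label Cover instance with vertex sets $U, V$, label sets $[L], [R]$, projections $\pi_e : [R] \to [L]$, completeness $1 - \eta$ and soundness $\eta$ for a sufficiently small $\eta = \eta(\varepsilon, k)$, with smoothness parameter chosen accordingly. For each $v \in V$ introduce Long Code variables $f_v : \bool^R \to \{1,-1\}$ with suitable folding. A constraint of the reduced \hadk{} \maxcsp{} is a single Long Code test: sample a random edge $(u,v)$, pick a uniformly random satisfying assignment $s$ of the \hadk{} predicate, and query $(2^k - 1)$ correlated points of $f_u$ and $f_v$ (correlations defined via $\pi_e$) so that on a valid Long Code encoding the answered bits equal $s$.

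For completeness, I would show that given a Label Cover labeling satisfying a $(1-\eta)$-fraction of edges, assigning each $f_v$ to the corresponding Long Code makes the answers distributed exactly as $\uniform(\Sat(\text{\hadk{} predicate}))$ except on the $\eta$-fraction of badly labeled edges, which is absorbed into the $\varepsilon$ slack.

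The soundness direction is the technical heart. Suppose some assignment $A$ produces $\mathcal{D}(A,\mathcal{I})$ that is not $\varepsilon$-close in total variation to $\uniform(\bool^{2^k - 1})$. Expanding in the Fourier basis of $\bool^{2^k-1}$, there is a non-zero character $\chi_\alpha$ with $\left|\expect_{y \sim \mathcal{D}(A,\mathcal{I})}[\chi_\alpha(y)]\right| \geq \varepsilon / 2^{2^k - 1}$. Rewriting this expectation in terms of the Long Code functions $\{f_v\}$ yields a sum of Fourier coefficient products $\prod_j \widehat{f_{v_j}}(T_j)$, sampled according to the Long Code test distribution. Using hypercontractivity together with the smoothness of the Label Cover, one argues that the contributing characters $T_j$ must have small support and project consistently across edges; the standard influence-decoding strategy --- label each vertex by a coordinate drawn from its influence distribution --- then satisfies strictly more than an $\eta$-fraction of Label Cover edges, contradicting its assumed soundness.

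The main obstacle is making this Fourier step quantitative for the Hadamard predicate. The crucial structural fact is that $\Sat(\text{\hadk{} predicate})$ is a coset of a $(2^k - 1)$-wise independent subgroup --- the group spanned by the non-trivial linear characters on $\bool^k$ --- so the test's correlations against any non-trivial character vanish when evaluated over a truly random Long Code. Chan's contribution is making this vanishing robust: any assignment whose induced distribution is $\varepsilon$-far from uniform on $\bool^{2^k - 1}$ must exhibit correlations on Long Code characters of bounded degree that are extractable as a Label Cover labeling. Once this robust vanishing lemma is in place, the uselessness of \hadk{} follows from the standard PCP machinery outlined above.
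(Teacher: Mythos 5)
This theorem is not proven in the paper at all: it is imported verbatim from Chan \cite{Chan}, and the only original content nearby is the remark that the completeness case can be made uniform over satisfying assignments (a point your completeness paragraph does handle, since your test draws a uniformly random satisfying assignment $s$). So any comparison is against Chan's proof, and there your sketch has two genuine problems. First, the structural claim is wrong: $\Sat(\text{\hadk{} predicate})$ is the Hadamard code, a \emph{balanced pairwise independent} subgroup of $\bool^{2^k-1}$ of size $2^k$, not a $(2^k-1)$-wise independent one. Indeed $\chi_\beta(S)\chi_\beta(T)=\chi_\beta(S\triangle T)$ for every satisfying assignment, so the character supported on $\{S,T,S\triangle T\}$ has expectation $1$ under $\uniform(\Sat)$; it is exactly these surviving higher-degree characters that make the soundness analysis hard, and your assertion that ``the test's correlations against any non-trivial character vanish'' is false in the form you need it.

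Second, the route you describe --- smooth Label Cover, a single Long Code test, hypercontractivity, and influence decoding --- is essentially the Austrin--Mossel argument, which is precisely the argument that is only known to work under UGC for pairwise independent predicates. The sentence ``Chan's contribution is making this vanishing robust'' defers the entire technical content of the theorem to a black box. Chan's actual NP-hardness proof does something structurally different: it takes a direct sum of many Label Cover layers with differently folded Long Codes, and shows that if the induced answer distribution is noticeably non-uniform then \emph{some} layer yields a consistent labeling; the pairwise independence is used to kill the low-degree terms, and the direct-sum structure (not smoothness plus influence decoding alone) handles the surviving high-degree terms. As written, your outline would not assemble into a proof; if you intend to use this theorem you should cite it, as the paper does, rather than re-derive it.
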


\begin{remark}
  The uniform distribution on satisfiable instances in the completeness case
  is a subtle detail. The result by {\cite{Chan}} is not formulated like this.
  However, it is trivial to take the instances constructed by Chan and modify
  them to make the completeness case be uniformly distributed over satisfied
  instances. The first time this was used was by {\cite{Wiman}}. However, this
  detail turns out to not actually matter in the end since all of the gadgets
  that we construct and all of the gadgets that Wiman construct are symmetric.
  So this uniform randomness assumption is only there because of convenience,
  and is not actually used in the end.
\end{remark}

\subsection{The automated gadget framework}\label{sec:auto}

The ``automated gadget'' framework by Trevisan et al {\cite{Trev}} describes
how to construct optimal gadgets when reducing from one predicate to another.
Let us denote the starting predicate as $\phi$ and the target predicate as
$\psi$. A \emph{$\phi$-to-$\psi$-gadget} is a description for how to
reduce a $\phi$-constraint to one or more $\psi$-constraints. As an example,
let us take a gadget from \threeSat{} to \tlint{}. In this case the gadget
describes a system of linear equations that both involve the three original
variables from the \threeSat{} constraint (called \emph{primary variables},
denoted by $\mathbb{X}$) as well as new extra variables (called
\emph{auxiliary variables}, denoted by $\mathbb{Y}$).

Gadgets have two important properties, called \emph{completeness} and
\emph{soundness}. These properties describe how closely the
$\psi$-constraints are able to mimic the satisfiability of the original
$\phi$-constraint. The completeness of a gadget is a value between $0$ and $1$
that describe how many of the $\psi$-constraints that can be satisfied under
the restriction that $\mathbb{X}$ satisfies the original $\phi$-constraint. In
a similar fashion, the soundness of a gadget is a value between $0$ and $1$
that describes the case when $\mathbb{X}$ does not satisfy the original
$\phi$-constraint. When we construct our gadgets, we fix the completeness of
the gadget, and then we find the gadget that minimises the soundness for this
fixed completeness. A gadget that minimises the soundness for a given
completeness is referred to as an \emph{optimal gadget}.

There is no a priori upper bound on how many auxiliary variables that a
$\phi$-to-$\psi$-gadget can have. However, the ``automated gadget'' framework
by Trevisan et al {\cite{Trev}} proves that, under some reasonable
assumptions, the number of variables $| \mathbb{X} \cup \mathbb{Y} |$ in an
optimal gadget can be assumed to be at most $2^{| \Sat (\phi) |}$.
Furthermore, if $\psi$ allows the negations of variables, then this number
drops to $2^{| \Sat (\phi) | - 1}$.

In the case of a \hadkttlint{} gadget, the number of
satisfying assignments of \hadk{} is $2^k$, and \tlint{}
allow the negation of variables. This means that the total number of variables
in the gadget is $2^{2^k - 1}$. Out of these, $2^k - 1$ variables are in
$\mathbb{X}$, and $2^{2^k - 1} - \left( {2^k}^{} - 1 \right)$ variables are in
$\mathbb{Y}$. Furthermore, the ``automated gadget'' framework gives a natural
way to index these variables in terms of $| \Sat (\hadk{}) |$-long
bitstrings. According to the framework, each primary variable should be
indexed by a bitstring describing that variable's assignment to all of the
satisfying assignments to \hadk{}, meaning a column in the matrix shown
in Figure \ref{fig:mat}. The auxiliary variables are indexed by the bitstrings
that do not appear in the matrix.

Instead of using $2^k$-long bitstrings to index the variables, it is arguably
more natural to index the variables using functions in $\boolk{}$. These
representations are equivalent since every $2^k$ long bitstring can be
interpreted as a truth table of a function in $\Fk_k$, and vice versa.
By indexing the set of variables using functions in $\Fk_k$, the set
of primary variables are indexed by linear functions $\{ \chi_{\alpha}
\}_{\varnothing \subset \alpha \subseteq [k]}$, and the negations of linear
functions $\{ - \chi_{\alpha} \}_{\varnothing \subset \alpha \subseteq [k]}$.
This gives us the following description of a \hadkttlint{} gadget.

\begin{definition}
  A \hadkttlint{} gadget is given by a tuple $(G,
  \mathbb{X}_k, \mathbb{Y}_k)$, where $G$ is a probability distribution over
  $\binom{\Fk_k}{2}$ where $G (f_1, f_2) = 0$ if $f_1 = - f_2$.
  $\mathbb{X}_k$ is the set of primary variables and $\mathbb{Y}_k$ is the set
  of auxiliary variables. The set of variables $\mathbb{X}_k \cup
  \mathbb{Y}_k$ are indexed by functions in $\Fk_k$, meaning
  $\mathbb{X}_k \cup \mathbb{Y}_k = \{ x_f : f \in \Fk_k \}$. A
  variable $x_f$ is a primary variable if and only if $f$ is a linear function
  or $- f$ is a linear function.
  
  The reduction from a \hadk{} constraint $\hadk{} \left( {b_{\{ 1
  \}}}  + y_{\{ 1 \}}, \ldots, b_{[k]} + y_{[k]} \right)$ to \tlint{} is given by the distribution formed by
  \begin{enumerate}
    \item Sampling $(f_1, f_2) \sim G$,
    
    \item Outputting the constraint $T (f_1) = T (f_2)$ where
  \end{enumerate}
  \begin{eqnarray*}
    T (f) & = & \left\{\begin{array}{ll}
      x_f & \text{if } x_f \in \mathbb{Y}_k,\\
      b_{\alpha}  + y_{\alpha} & \text{if } f = \chi_{\alpha} \text{ for some } \alpha \in \boolk, \\
      b_{\alpha}  + y_{\alpha} + 1 & \text{if } f = -\chi_{\alpha} \text{ for some } \alpha \in \boolk .
    \end{array}\right.
  \end{eqnarray*}
\end{definition}

Let us now precisely define the soundness and completeness of a
\hadkttlint{} gadget $(G, \mathbb{X}_k, \mathbb{Y}_k)$.
From Theorem \ref{thm:useless} it follows that the natural definition of
soundness is to uniformly at random assign the primary variables
$\mathbb{X}_k$ to $\bool$, and then assign the rest of the variables in order
to satisfy as many of the equations as possible.

\begin{definition}
  Given a set of Boolean variables $\mathbb{X}$. Let $\Fk
  (\mathbb{X})$ denote the set of assignments $\mathbb{X} \rightarrow \bool$.
  Let $\Ffold (\mathbb{X})$ the set of all folded
  assignments, i.e. functions $P : \mathbb{X} \rightarrow \bool$ such that $P
  (1 + x) = 1 + P (x) \forall x \in \mathbb{X}.$ Here $1 + x$ denotes the negation of the variable $x$.
\end{definition}

\begin{definition}
  The soundness of $G$ is defined as
  \begin{eqnarray*}
    s (G) & = & \underset{\begin{array}{l}
      P \in \Ffold (\mathbb{X}_k)
    \end{array}}{\expect} \max_{\begin{array}{l}
      A \in \Ffold (\mathbb{X}_k \cup \mathbb{Y}_k),\\
      A (x) = P (x), x \in \mathbb{X}_k
    \end{array}} \val (A, G),
  \end{eqnarray*}
  where
  \begin{eqnarray*}
    \val (A, G) & = & \sum_{\begin{array}{l}
      (f_1, f_2) \in \binom{\Fk_k}{2}
    \end{array}} G (f_1, f_2)  [A (x_{f_1}) = A (x_{f_2})] .
  \end{eqnarray*}
\end{definition}

The completeness of $G$ is defined using dictator cuts. A \emph{dictator
cut} $\delta_y$ of $y \in \boolk$ is an assignment where $(- 1)^{\delta_y
(x_f)} = f (y)$. From Theorem \ref{thm:useless} we see that that the natural
definition for completeness is the expectation over $\val (\delta_y,
G)$, where $\delta_y$ is a random dictator cut.

\begin{definition}
  \label{def:comp}The completeness of $G$ is defined as
  \begin{eqnarray*}
    \begin{array}{lllll}
      c (G) & = & \underset{\begin{array}{l}
        y \in \boolk
      \end{array}}{\expect} \val (\delta_y, G) & = & 1 -
      \sum\limits_{\begin{array}{l}
        (f_1, f_2) \in \binom{\Fk_k}{2}
      \end{array}} G (f_1, f_2) \dist (f_1, f_2) .
    \end{array} &  & 
  \end{eqnarray*}
\end{definition}

There is a result based on Theorem \ref{thm:useless} that relates the
soundness and completeness of \hadkttlint{} gadgets to
\NPhardness{} results for \maxtlint{}.

\begin{proposition}
  \label{prop:gap}{\cite[Proposition 2.17]{Has2}} Given a \hadkttlint{} gadget $(G, \mathbb{X}_k, \mathbb{Y}_k)$ with $s = s (G)$
  and $c = c (G)$, where $c > s$. Then for every sufficiently small
  $\varepsilon > 0$, it is \NPhard{} to distinguish between instances
  $\mathcal{I}$ of \maxtlint{} such that
  
  (Completeness) There exists an assignment that satisfies a fraction at
  least $c - \varepsilon$ of the constraints.
  
  (Soundness) All assignments satisfy at most a fraction $s + \varepsilon$ of
  the constraints.
\end{proposition}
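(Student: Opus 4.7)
The plan is to chain Chan's uselessness result for the Hadamard predicate (Theorem~\ref{thm:useless}) with the gadget $G$. Given a \hadk{}-\maxcsp{} instance $\mathcal{I}$, build a \maxtlint{} instance $\mathcal{I}'$ by replacing each \hadk{}-constraint $\mathcal{C}$ with an independent copy of $G$: identify the primary variables $x_{\chi_\alpha}$ with the literals of $\mathcal{C}$ via the $T$-map (which absorbs the biases $b_\alpha$) and attach a fresh copy of $\mathbb{Y}_k$ per $\mathcal{C}$. Because the reduction represents $x_{-f}$ as the negation of $x_f$, the restriction of any \maxtlint{}-assignment to a gadget copy automatically lies in $\Ffold(\mathbb{X}_k \cup \mathbb{Y}_k)$, which is exactly the class used to define $s(G)$ and $c(G)$. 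I will show that picking the $\varepsilon'$ of Theorem~\ref{thm:useless} sufficiently small in terms of $\varepsilon$ translates the completeness / soundness cases of $\mathcal{I}$ into the $(c - \varepsilon)$ / $(s + \varepsilon)$ cases of $\mathcal{I}'$.

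\emph{Completeness.} Take the assignment $A$ from the completeness case of Theorem~\ref{thm:useless}. For each constraint $\mathcal{C}$ on which the biased restriction of $A$ realises a Hadamard-satisfying pattern corresponding to some linear $\chi_\beta$, extend $A$ on that gadget's auxiliary variables by the dictator cut $\delta_\beta$. A short unwinding of the $T$-map shows this matches exactly what $\delta_\beta$ prescribes on $\{x_{\chi_\alpha}\}_\alpha$, so conditional on $\beta$ the gadget contributes $\val(\delta_\beta, G)$. The TV-guarantee makes the marginal distribution of $\beta$ an $\varepsilon'$-perturbation of $\uniform(\Sat(\text{\hadk{} predicate}))$, so Definition~\ref{def:comp} yields a satisfied fraction of at least $c(G) - \varepsilon'$.

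\emph{Soundness.} Conversely, let $A'$ be any assignment to $\mathcal{I}'$ and $A$ its restriction to the Hadamard variables. Since different gadget copies have disjoint auxiliaries, the fraction of \tlint{}-constraints of $\mathcal{I}'$ satisfied by $A'$ is at most the per-copy maximum over folded auxiliary extensions, averaged over gadget copies. The induced distribution of primary patterns across a uniformly random copy is exactly $\mathcal{D}(A, \mathcal{I})$, so the satisfied fraction is upper bounded by
\begin{equation*}
  \expect_{P \sim \mathcal{D}(A, \mathcal{I})} \Bigl[\max_{\tilde{A} \in \Ffold(\mathbb{X}_k \cup \mathbb{Y}_k),\; \tilde{A}|_{\mathbb{X}_k} = P} \val(\tilde{A}, G)\Bigr].
\end{equation*}
If $\mathcal{D}(A, \mathcal{I})$ were exactly $\uniform(\bool^{2^k - 1})$ the right-hand side would equal $s(G)$ by definition; the soundness case of Theorem~\ref{thm:useless} bounds its TV distance from uniform by $\varepsilon'$, so the right-hand side is at most $s(G) + \varepsilon'$. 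Picking $\varepsilon' < \varepsilon$ contradicts the supposed $s + \varepsilon$, completing the reduction.

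The main subtlety will be the folding and bias bookkeeping: one has to verify that absorbing the $b_\alpha$ into the literals really makes the primary restriction of $A'$ distributed according to $\mathcal{D}(A, \mathcal{I})$, and that the canonical bijection between $\bool^{2^k - 1}$ and $\Ffold(\mathbb{X}_k)$ sends $\uniform(\bool^{2^k - 1})$ to the distribution underlying the definition of $s(G)$. Once that identification is in place, the two inequalities above close together and the claimed NP-hard gap for \maxtlint{} follows directly from Theorem~\ref{thm:useless}.
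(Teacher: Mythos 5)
The paper does not prove this proposition itself but cites it from H{\aa}stad et al.\ (Proposition 2.17 of \cite{Has2}); your argument is precisely the standard composition proof used there — replace each \hadk{} constraint with a fresh copy of $G$, get completeness from dictator cuts matched to the near-uniform satisfying patterns, and soundness from the TV-closeness of $\mathcal{D}(A,\mathcal{I})$ to uniform, which makes the average of the per-copy folded maxima approach $s(G)$. The folding/bias bookkeeping you flag is handled exactly as you describe by the $T$-map, so the proposal is correct and matches the intended proof.
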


One particularly interesting case is the inapproximability of \mintlint{}. From UGC it follows that it is \NPhard{} to approximate
\mintlint{} within any constant {\cite{OW}}. The
following proposition from Håstad et al. {\cite{Has2}} tells us that a
\hadkttlint{} gadget reduction can never be used to
show an inapproximability factor of \mintlint{} better than $2.54$.
This means that any \NPhardness{} result for \mintlint{} shown using
a gadget reduction from \hadkttlint{} cannot match
results obtained by UGC.

\begin{proposition}
  \label{prop:3.7}{\cite[Proposition 2.29 and Theorem 6.1]{Has2}} For any
  given \hadkttlint{} gadget $(G, \mathbb{X}_k,
  \mathbb{Y}_k)$. There exists a \hadkttlint{} gadget
  $(\tilde{G}, \mathbb{X}_k, \mathbb{Y}_k)$ with completeness $1 - 2^{- k}$
  such that
  \begin{eqnarray*}
    \frac{1 - s (G)}{1 - c (G)} & \leqslant & \frac{1 - s (\tilde{G})}{1 - c
    (\tilde{G})},
  \end{eqnarray*}
  and
  \begin{eqnarray*}
    \frac{1 - s (\tilde{G})}{1 - c (\tilde{G})} & \leqslant & \frac{1}{1 -
    e^{- 0.5}} \approx 2.54.
  \end{eqnarray*}
\end{proposition}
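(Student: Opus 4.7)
My plan is to split the proposition into two parts: a normalization reducing to gadgets at the maximum possible completeness $1 - 2^{-k}$, and then a universal soundness lower bound at that completeness.

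\textbf{Step 1 (Normalization).} The quantity $1 - 2^{-k}$ is the maximum attainable completeness of any \hadkttlint{} gadget: any two distinct functions $f_1, f_2 \in \Fk_k$ differ on at least one of the $2^k$ inputs, so $\dist(f_1, f_2) \geq 2^{-k}$ and hence $1 - c(G) = \sum G(f_1, f_2) \dist(f_1, f_2) \geq 2^{-k}$. To upgrade $G$'s completeness to this maximum, I would construct $\tilde G$ by replacing each non-adjacent pair $(f_1, f_2) \in G$ of distance $d > 2^{-k}$ with a shortest path $f_1 = g_0, g_1, \dots, g_{d \cdot 2^k} = f_2$ of pairwise-adjacent Boolean functions, reusing the available variables in $\mathbb{X}_k \cup \mathbb{Y}_k = \{x_f : f \in \Fk_k\}$. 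After renormalization every pair has distance exactly $2^{-k}$, so $c(\tilde G) = 1 - 2^{-k}$. The ratio direction is preserved because satisfying a length-$m$ path of adjacent constraints is strictly more demanding than satisfying the single long-distance parent, so a per-edge soundness comparison yields $(1 - s(\tilde G))/(1 - c(\tilde G)) \geq (1 - s(G))/(1 - c(G))$.

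\textbf{Step 2 (Universal ratio bound).} For $\tilde G$ at completeness $1 - 2^{-k}$, the distribution is supported on adjacent pairs $(f_1, f_2)$, each naturally labelled by the input coordinate $y \in \boolk$ at which they disagree. I would lower bound $s(\tilde G)$ via a probabilistic assignment strategy: take primaries from a random folded dictator cut $\delta_y$ and extend to auxiliaries by an independent random rounding. The constant $1 - e^{-0.5}$ enters as a Poisson-tail estimate: as $k \to \infty$, the probability that a typical adjacent pair is covered by the rounding tends to $1 - e^{-1/2}$, arising from independent ``covering'' events with total expected mass $1/2$ on each coordinate class. Combining the per-pair bound with linearity of expectation gives $s(\tilde G) \geq 1 - 2^{-k}/(1 - e^{-0.5})$, which rearranges to $(1 - s(\tilde G))/(1 - c(\tilde G)) \leq 1/(1 - e^{-0.5})$.

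The main obstacle I expect is Step 2: obtaining the tight constant $1 - e^{-0.5}$ rather than the trivial factor of $1/2$ that a naive random-rounding strategy would yield. This requires a carefully designed rounding distribution whose analysis exploits both the symmetry of the Hadamard predicate and the adjacency structure of $\tilde G$, and likely a second-moment or Poissonization argument to justify the exponential. Step 1 is combinatorially clean but still requires verifying that path-splitting preserves the ratio direction in every edge case; this should reduce to an inequality on a single split that follows from a short convexity argument or a direct soundness comparison.
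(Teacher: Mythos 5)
Your Step~1 is correct and is the standard normalisation: replace each edge $\{f_1,f_2\}$ of length $d$ by a path of $d\cdot 2^k$ edges of length $2^{-k}$ with the same capacity and renormalise by $(1-c(G))2^k$; since any assignment with $A(x_{f_1})\neq A(x_{f_2})$ must violate at least one edge of the path, $1-s(\tilde G)\geqslant (1-s(G))\cdot 2^{-k}/(1-c(G))$, which is the first inequality. This is precisely the argument the paper reproduces for the relaxed-soundness analogue in Proposition~\ref{prop:pi}(b), and it is the content of the cited \cite[Proposition 2.29]{Has2}.

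The genuine gap is Step~2, which is where all the substance of the second inequality lives. You name the target constant $1-e^{-0.5}$ and attribute it to a ``Poisson-tail estimate'' coming from ``independent covering events with total expected mass $1/2$'', but you do not specify the rounding distribution, do not justify why the relevant mean is $1/2$, and your covering claim is only asymptotic in $k$ while the proposition must hold for every fixed $k\geqslant 2$. As written, Step~2 restates the goal rather than proving it; note that the trivial random extension only gives $1-s(\tilde G)\leqslant 1/2$, i.e.\ a ratio bound of $2^{k-1}$, so a genuinely clever assignment is unavoidable. Be aware also that this paper does not prove the proposition at all: it is imported wholesale from \cite[Theorem 6.1]{Has2}, where the $1/(1-e^{-0.5})$ bound is obtained from a carefully designed randomised assignment exploiting the balanced pairwise independent support of the Hadamard predicate, and reconstructing that analysis is the actual work. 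The paper's only in-house argument of this flavour, Proposition~\ref{prop:pi}(c), is an elementary max-flow/min-cut computation (every source--sink path at completeness $1-2^{-k}$ has at least $2^{k-1}$ edges, so the flow is at most $2^{1-k}$), but it bounds $\frac{1-\rs(\tilde G)}{1-c(\tilde G)}$ by $2$ and does not transfer to the true soundness, since $s\leqslant\rs$ makes $1-s$ larger than $1-\rs$, not smaller.
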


\begin{remark}
  A \hadkttlint{} gadget having completeness $1 - 2^{-
  k}$ implies that the gadget only have positive weight edges of length $2^-k$. So
  far fewer edges are used compared to the total number of possible edges.
\end{remark}

\begin{remark}
  The upper limit of $2.54$ shown by {\cite{Has2}} is much more general than
  what is stated here. In fact, they show that the bound of $2.54$ holds for
  any gadget reduction from a useless predicate $\phi$ such that $\Sat
  (\phi)$ has a balanced pairwise independent subset.
\end{remark}

\section{Relaxed soundness and infinity relaxed soundness  }\label{sec_gadget}

The main difficulty when designing and analysing gadgets is that the soundness
is difficult to compute. In the case of a gadget reduction from \maxhadk{}
to \maxtlint{}, computing the soundness of the gadget involves solving an
instance of \maxtlint{}. For $k \leqslant 3$ this is computationally
feasible, since the \maxtlint{} instance is rather small, but for $k \geqslant
4$ the instances can become so large that, even using a computer, it is practically impossible to solve them.

To get around this issue, Wiman {\cite{Wiman}} proposed to relax the
definition of the soundness by not requiring that the assignment $A$ of the
auxiliary variables $\mathbb{Y}_k$ is folded. Note that the assignment $A$ is
still required to be folded on the primary variables $\mathbb{X}_k$, meaning
$A (x_f) = 1 + A (x_{- f}) \forall x_f \in \mathbb{X}_k$. Removing the
requirement that $A$ is folded over $\mathbb{Y}_k$ makes it significantly
easier to compute the soundness.

\begin{definition}
  {\cite[Definition 3.3]{Wiman}} Wiman's relaxed soundness
  \[
    \rs(G) = \underset{P \in \Ffold
    (\mathbb{X}_k \cup \{ x_1, x_{- 1} \})}{\expect}
    \max_{\begin{array}{l}
      A \in \Fk (\mathbb{X}_k \cup \mathbb{Y}_k),\\
      A (x) = P (x), x \in \mathbb{X}_k \cup \{ x_1, x_{- 1} \}
    \end{array}} \val(A, G),
  \]
\end{definition}

where
\[
  \val(A, G) = \sum_{\begin{array}{l}
    (f_1, f_2) \in \binom{\Fk_k}{2}
  \end{array}} G (f_1, f_2)  [A (x_{f_1}) = A (x_{f_2})] .
\]
This relaxation fundamentally changes the soundness computation from being a
\maxtlint{} problem to being an $s$-$t$ \mincut{} problem. This is because the
computation of $1 - \rs(G)$ for a fixed $P$ is a minimisation problem
where the goal is to minimise the number of times that $A (x_{f_1}) \neq A
(x_{f_2})$, which makes it a $s$-$t$ \mincut{} problem. According to the
\maxflow{} \mincut{} Theorem, this also means that $\rs(G)$ can be computed by solving a \maxflow{} problem.

The conclusion from this is that $1 - \rs(G)$ can be interpreted as
the average max flow on the fully connected $2^k$-dimensional hypercube, where
the placement of sources and sinks is randomly distributed over nodes labeled
by affine functions. The sources correspond to primary variables $x_f$ where
$P (x_f) = 1$ and the sink nodes correspond to primary variables $x_f$ where
$P (x_f) = 0$. The capacity of an edge $\{ x_{f_1}, x_{f_2} \}$ in the fully
connected hypercube is given by $G (f_1, f_2)$. Note that the sum over
capacities in the graph is equals to 1.

There are some significant benefits to using the relaxed soundness. Firstly,
it is significantly simpler to solve a \maxflow{} problem compared to a
\maxtlint{} problem. The implication from this is that it is computationally
simple to compute the relaxed soundness of \hadkttlint[4]{} gadgets and even possible to compute relaxed soundness of
\hadkttlint[5]{} gadgets if given enough computational
resources. Furthermore, the relaxed soundness allows us to analyse
\hadkttlint{} gadgets even in the case where $k$ is
very large. The disadvantage to using relaxed soundness is that it is not guaranteed to be close to the true soundness.

\subsection{Relaxed soundness described as an LP}

Recall that $1 - \rs(G)$ can be expressed as the average max flow on a
fully connected $2^k$-dimensional hypercube with randomised source/sink
placements. This means that $\rs(G)$ can be stated as an LP. One reason
for why it is preferable to express this \maxflow{} problem as an LP is because
it is possible to move the capacities (i.e. the ``gadget variables'') of the
\maxflow{} problem to the variable side of the LP. So the same LP can be used both to calculate the the relaxed soundness of a specific gadget and to construct new gadgets.

One additional step we use in the formulation of this LP is to use a
function $g \in \Fk_k$ to describe
the source/sink placement instead of using the assignment $P$. A node $v_{\chi_{\alpha}}, \alpha \in \boolk{}$ is a
sink node if $g (\alpha) = 1$, and a source node if $g (\alpha) = - 1$. These two
representations of the source/sink placement are equivalent, but using a
Boolean function $g$ is more helpful for understanding the symmetries of the LP, as done in Appendix \ref{sec4}. The following is the LP
reformulation of the relaxed soundness $\rs (G)$.

\begin{definition}
  A flow $w$ of a \hadkttlint{} gadget $(G,\mathbb{X}_k, \mathbb{Y}_k)$ is a function $\Fk_k^3 \rightarrow
  \real_{\geqslant 0}$. The flow $w$ is said to be feasible if and only
  if
  \begin{alignat}{2}
    w (f_1, f_2, g) + w (f_2, f_1, g) & \leqslant & \, G (f_1, f_2) \quad & \forall
    f_1, f_2, g \in \Fk_k,  \label{eq1}\\
    \fout_w (f, g) & = & \, \fin_w (f, g) \quad & \forall f, g \in
    \Fk_k, \dimension (f) \geqslant 1.  \label{eq2}
  \end{alignat}
  where
  $
    \fout_w (f, g) = \sum_{f_2 \in \Fk_k} w (f, f_2, g) \text{ and }
    \fin_w (f, g) = \sum_{f_2 \in \Fk_k} w (f_2, f, g)
  $.
  The value of $w$ for at a source/sink placement $g \in \Fk_k$ is
  defined as
  \[
    \val_g (w) = \sum_{\alpha \in \boolk{}} \fout_w (g (\alpha)
    \chi_{\alpha}, g) - \fin_w (g (\alpha) \chi_{\alpha}, g) .
  \]
\end{definition}

\begin{definition}
  \label{sound}The relaxed soundness LP for a \hadkttlint{} gadget $(G, \mathbb{X}_k, \mathbb{Y}_k)$, denoted by $\rsLP (G)$, is
  the following LP
  \[
    \rs(G) = 1 - \max_w \expect_{g \in \Fk_k}
    \val_g (w),
  \]
  where the maximum is taken over feasible flows $w$ of $G$.
\end{definition}

\begin{remark}
  Recall that $1 - \rs (G)$ is the average of $2^{2^k}$ independent
  \maxflow{} problems. The different \maxflow{} problems are indexed by the
  function $g \in \Fk_k$, which describes the placements of sinks and
  sources. The nodes in each \maxflow{} problem are indexed by functions in
  $\Fk_k$. The sink nodes in the $g$-th \maxflow{} problem are the nodes
  $v_{g (\alpha) \chi_{\alpha}}, \alpha \in \boolk{}$, and the source nodes are
  the nodes $v_{- g (\alpha) \cdot \chi_{\alpha}}, \alpha \in \boolk{}$. The
  flow from $v_{f_1} \rightarrow v_{f_2}$ is $w (f_1, f_2, g)$, and the
  capacity of the undirected edge $\{ v_{f_1}, v_{f_2} \}$ is $G (f_1, f_2)$.
\end{remark}

\begin{remark}
  Note that it is possible to modify the $\rsLP (G)$ to include the capacities
  of the graph (i.e. gadget $G$) as variables. The implications of this is
  that the optimisation problem of finding a \hadkttlint{} gadget with the maximum relaxed soundness for a fixed completeness can
  also be expressed as an LP.
\end{remark}

\begin{remark}
  Note that the $\rsLP (G)$ has roughly $| \Fk_k |^3 = 2^{3 \cdot
  2^k}$ variables. This is a very large number, even for small values of $k$.
  So in order to be able to solve this LP, we have to make use of the
  symmetries of the LP in order to reduce the number of variables.
\end{remark}

\subsection{Introduction of infinity relaxed soundness}

One natural question is, how small can one make the relaxed soundness if we
fix the completeness of a \hadkttlint{} gadget and let
$k \rightarrow \infty$? In practice, even just finding the gadget minimising
the relaxed soundness when $k = 5$ is a very daunting task, so cannot hope to
calculate this limit directly from the $\rsLP (G)$.

Our method to handle large values of $k$ is to create a \hadkttlint{} gadget for some small value of $k$, for example $k = 4$, and
then introduce the concept of embedding a \hadkttlint{} gadget $G$ into a \hadkttlint[k']{} gadget $G'$,
where $k' > k$. It is also possible to embed the flow of the $\rsLP (G)$ onto
the $\rsLP(G')$. This embedding has the property that the completeness and the soundness of both gadgets are the same.

The key insight is that by using multiple overlapping embeddings of $G$, we
can improve the soundness of $G'$ without affecting its completeness. Our
argument for why multiple overlapping embeddings improve the relaxed soundness
is based on leaky flows. Note that the leaks of a leaky flow have signs. This
means that it is possible that overlapping embeddings of leaky flows could
become a feasible flow, since the overlap of the embeddings could cause the
signed leaks to sum to 0. We use this type of argument to show an upper bound
on $\rs (G')$ based on a leaky flow solution to the $\rsLP(G)$.

The exact procedure for the embeddings is defined in Appendix \ref{sec4} and
analysed in detail in Appendix \ref{sec_proof} using second moment analysis.
The conclusion from that analysis is that the following relaxation of the
$\rsLP(G)$, which we call the \emph{infinity relaxed soundness LP},
denoted by $\rsinfLP(G)$, has the following two important properties. Firstly, the
solution of $\rsinfLP(G)$ is a leaky flow of $\rsLP(G)$, and secondly,
overlapping embeddings of this leaky flow tends to a feasible flow of
$\rsLP(G')$ as $k' \rightarrow \infty$.

\begin{definition}
  \label{def:relaxed_constraint}A flow $\tilde{w}$ of a \hadkttlint{} gadget $(G, \mathbb{X}_k, \mathbb{Y}_k)$ is said to be a
  infinity relaxed flow if constraint \eqref{eq1} is satisfied and
  \begin{eqnarray}
    \sum_{g'} \fout_w (f, g') = \sum_{g'} \fin_w (f, g') \quad
    \forall g, f \in \Fk_k : \dimension (f) \geqslant 1,  \label{eq:weird}
  \end{eqnarray}
  where the sums are over functions $g' \in \Fk_k'$ such that $g' \barsuchthat_{\affine(f)} = g \barsuchthat_{\affine (f)}$. The
  (signed) leak at $(f, g)$, where $f, g \in \Fk_k, \dimension (f) \geqslant
  1$, is defined as $\leak_{\tilde{w}} (f, g) = \fin_w (f, g) -
  \fout_w (f, g)$.
\end{definition}

\begin{definition}
  The infinity relaxed soundness of $G$, denoted by $\rsinf (G)$, is the
  solution to the $\rsinfLP(G)$
  \begin{eqnarray*}
    \rsinf (G) = 1 - \max_{\tilde{w}} \expect_{g \in
    \Fk_k } \val_g (\tilde{w}),
  \end{eqnarray*}
  where the maximum is taken over all infinity relaxed flows $\tilde{w}$ of
  $G$.
\end{definition}

\begin{remark}
  The $\rsinfLP(G)$ is a constraint relaxation of the $\rsLP(G)$ where
  constraint \eqref{eq2} has been relaxed to constraint \eqref{eq:weird}. So a
  solution of the $\rsinfLP(G)$ is a leaky flow in the $\rsLP(G)$.
\end{remark}

\begin{remark}
  Constraint \eqref{eq:weird} is used for a proof in
  Appendix \ref{sec_proof} of Lemma \ref{lemma_key}, which is a 2nd order moment analysis of the
  overlap of leaks from random embeddings. In the proof, constraint
  \eqref{eq:weird} is used to show that if $w$ is an infinity relaxed flow
  then $\forall g, f \in \Fk_k, \dimension (f) \geqslant 1:$
    $\sum_{g'} \leak_w (f, g') = 0$,
  where the sum is over $g' \in \Fk_k$ such that
  $g' |_{\affine (f)} = g  |_{\affine (f)}$.
\end{remark}

The following lemma describes a relationship between the $\rsLP(G)$ and the
$\rsinfLP(G)$. This is the key Lemma, which is proven in Appendix
\ref{sec_proof}.

\begin{lemma}
  \label{thm_important}Let $(G, \mathbb{X}_k, \mathbb{Y}_k)$ be a
  \hadkttlint{} gadget. For any $\varepsilon > 0$ there
  exists a \hadkttlint[k']{} gadget $(G',
  \mathbb{X}_{k'}, \mathbb{Y}_{k'})$ for some $k' > k$ such that $c (G) = c
  (G')$ and $\rs (G') \leqslant \rsinf (G) + \varepsilon$.
\end{lemma}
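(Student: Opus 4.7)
The plan is to build $G'$ by overlapping many random ``lifts'' of $G$ into dimension $k'$, and to exhibit a feasible flow of $G'$ whose value matches $1-\rsinf(G)$ up to an error that vanishes with $k'$. The lift is driven by a surjective affine map $\psi:\bool^{k'}\to\boolk$ whose pull-back $\psi^{*}:\Fk_k\to\Fk_{k'}$, $f\mapsto f\circ\psi$, preserves linearity, $\dimension$, and the normalised $\dist$. Consequently $\psi^{*}$ sends $\mathbb{X}_k$-variables to $\mathbb{X}_{k'}$-variables, and the push-forward of $G$ along $\psi^{*}$ is a \hadkttlint[k']{} gadget of completeness $c(G)$ by Definition~\ref{def:comp}. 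I would take $G'$ to be the uniform mixture of these push-forwards along $N$ independently chosen affine surjections $\psi_1,\ldots,\psi_N$, which yields $c(G')=c(G)$ for every choice of $N$ and $\psi_i$.

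Next I would fix an optimal infinity relaxed flow $\tilde w$ for $\rsinfLP(G)$ and form the candidate $w'=\tfrac{1}{N}\sum_i \psi_i^{*}\tilde w$ on $G'$, where $\psi_i^{*}\tilde w$ denotes the natural lift of $\tilde w$ along $\psi_i$ in the sense of Appendix~\ref{sec4}. The capacity bound \eqref{eq1} for $w'$ is inherited from $\tilde w$ by linearity and the definition of $G'$. A direct averaging argument shows $\expect_{g'\in\Fk_{k'}}\val_{g'}(\psi_i^{*}\tilde w)=\expect_{g\in\Fk_k}\val_g(\tilde w)=1-\rsinf(G)$ for each~$i$, since the source/sink pattern induced by $g'$ on the primary vertices in the image of $\psi_i^{*}$ is uniformly distributed over $\Fk_k$ when $g'$ is uniform on $\Fk_{k'}$. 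Consequently $\expect_{g'}\val_{g'}(w')=1-\rsinf(G)$, independently of~$N$.

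The hard step is to show that $w'$ is almost feasible, so that it can be rounded to a true feasible flow of $\rsLP(G')$ at a cost of at most $\varepsilon$ in value. For any $(f',g')$ with $\dimension(f')\geqslant 1$, the leak $\leak_{w'}(f',g')$ decomposes as $\tfrac{1}{N}\sum_i \leak_{\psi_i^{*}\tilde w}(f',g')$, and the $i$-th summand vanishes unless $f'$ lies in the image of $\psi_i^{*}$, in which case it equals the pulled-back leak of $\tilde w$ at the corresponding preimage and restriction of $g'$. Constraint \eqref{eq:weird}, in the equivalent formulation given in the remark before the lemma, asserts that for each $f$ and each fixing of $g$ on $\affine(f)$ the leaks $\leak_{\tilde w}(f,g')$ sum to zero across extensions; randomising over the $\psi_i$ therefore makes each summand in the decomposition of $\leak_{w'}$ conditionally mean zero. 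A careful second-moment calculation should then bound the expected $L^1$-mass of the total leak by a quantity tending to $0$ as $N\to\infty$. Fixing a realisation of $\psi_1,\ldots,\psi_N$ where this bound is attained and absorbing the residual leaks along short augmenting paths in the $2^{k'}$-dimensional hypercube converts $w'$ into a feasible flow of $\rsLP(G')$ of value at least $1-\rsinf(G)-\varepsilon$. The main obstacle is precisely this second-moment calculation: it has to exploit the algebraic form of \eqref{eq:weird} so that leaks from distinct embeddings genuinely cancel rather than merely average down, while the value contributions continue to add up coherently.
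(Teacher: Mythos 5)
Your overall architecture is the same as the paper's: lift $G$ and an optimal infinity relaxed flow $\tilde w$ along affine embeddings into dimension $k'$, average the lifts, observe that completeness and expected flow value are preserved, show via a second moment argument exploiting \eqref{eq:weird} that the leaks cancel, and finish by rounding the leaky flow to a feasible one (the paper's Theorem \ref{leaky}). The completeness and value-preservation steps are fine. However, there is a genuine gap in the key step, and it is not only that you leave the second-moment calculation as ``the main obstacle'' without doing it: the asymptotics you assert for it are wrong. You claim the expected $L^1$-mass of the leak tends to $0$ as $N \to \infty$ with $k'$ apparently fixed. It does not. By the law of large numbers, as $N \to \infty$ your flow $\frac{1}{N}\sum_i \psi_i^* \tilde w$ converges to the \emph{full} lift $\lift_{k\to k'}(\tilde w)$ (the average over all of $\Mk_{k\to k'}$), whose leak is in general nonzero for fixed $k'$. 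Concretely, in the second moment the diagonal terms $i=j$ are killed by $1/N$, but the cross terms $i\ne j$ are not: for two independent lifts that both send $f$ to $f'$, the conditional mean-zero property from \eqref{eq:weird} only makes the cross-correlation vanish when the affine spans $\affine(M_1)$ and $\affine(M_2)$ intersect exactly in $\affine(f)$. The probability that this fails is roughly $(2^k-2^d)^2/(2^{k'}-2^d)$ (the paper's Proposition \ref{prop_pair}), which is independent of $N$ and small only when $k'$ is large. So the surviving contribution to the second moment is of order $2^{2k}/2^{k'}$ regardless of $N$, and the cancellation you need is driven by $k' \to \infty$, not by adding more independent embeddings at fixed $k'$.

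The fix is exactly what the paper does: take $G'$ and $w'$ to be the full $k\to k'$ lifts (equivalently, let $N = |\Mk_{k\to k'}|$ with every affine map used once), and prove a bound of the form $L_{k'} \leqslant 2^{2^k+k}/\sqrt{2^{k'}-2^k}$ (Proposition \ref{prop_goal}), so that choosing $k'$ large enough yields $\rs(G') \leqslant \rsinf(G) + \varepsilon$. Two smaller remarks. First, your embeddings $f \mapsto f\circ\psi$ omit the $(-1)^c\chi_\beta$ twist of Definition \ref{def:affinemap}; the full family $\Mk_{k\to k'}$ is what makes the group/orbit structure and the independence statement about $M_1^{\#}(g'), M_2^{\#}(g')$ off $\affine(f)$ clean, so you should work with it. Second, the ``absorbing residual leaks along short augmenting paths'' step should be replaced by the cut-based argument of Theorem \ref{leaky}, which gives the clean loss bound $\sum_v |\leak(v)|$ without having to exhibit augmenting paths.
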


From this Lemma, it follows that $\rsinf (G) + \varepsilon$ is the upper bound
of $\rs(G')$ for some gadget $G'$, which in turn is an upper bound of $s
(G')$. This means that the NP-hardness result of \maxtlint{} stated
in Proposition \ref{prop:gap} for $\rs (G)$ also holds for
$\rsinf (G) .$This gives us our main result.

\begin{theorem}
  \label{main}Let $(G, \mathbb{X}_k, \mathbb{Y}_k)$ be a \hadkttlint{} gadget with $s = \rsinf (G)$ and $c = c (G)$, where $c > s$.
  Then for every sufficiently small $\varepsilon > 0$, it is NP-hard to
  distinguish between instances of \maxtlint{} such that
  \begin{description}
      \item[Completeness] There exists an assignment that satisfies a fraction at
      least $c - \varepsilon$ of the constraints.
      \item[Soundness] All assignments satisfy at most a fraction $s + \varepsilon$ of
      the constraints.
  \end{description}
\end{theorem}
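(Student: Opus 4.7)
The plan is to deduce Theorem \ref{main} by chaining Lemma \ref{thm_important} with the elementary inequality $s(G') \leq \rs(G')$ and then invoking Proposition \ref{prop:gap}. Given a target $\varepsilon > 0$, I would first set $\varepsilon' = \varepsilon/2$ and apply Lemma \ref{thm_important} to the gadget $(G, \mathbb{X}_k, \mathbb{Y}_k)$ to obtain a \hadkttlint[k']{} gadget $(G', \mathbb{X}_{k'}, \mathbb{Y}_{k'})$ with $c(G') = c(G) = c$ and $\rs(G') \leq \rsinf(G) + \varepsilon' = s + \varepsilon'$.

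Next I would use that Wiman's relaxed soundness is a genuine relaxation of the true soundness, i.e.\ $s(G') \leq \rs(G')$. This is immediate from the definitions: in the inner optimisation defining $s(G')$ the extension $A$ is required to lie in $\Ffold(\mathbb{X}_{k'} \cup \mathbb{Y}_{k'})$, whereas in $\rs(G')$ it is only required to lie in $\Fk(\mathbb{X}_{k'} \cup \mathbb{Y}_{k'})$; since every folded extension of $P$ is also an unfolded extension of $P$, and $\val(A,G')$ is the same quantity, taking the max over the larger set can only increase the value. Averaging over the common outer distribution $P$ preserves the inequality, so $s(G') \leq \rs(G') \leq s + \varepsilon'$.

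Finally I would apply Proposition \ref{prop:gap} to $G'$. For $\varepsilon$ small enough that $s + \varepsilon' < c$, the gadget $G'$ satisfies the gap assumption $c(G') > s(G')$. Proposition \ref{prop:gap} then yields NP-hardness of distinguishing \maxtlint{} instances where some assignment satisfies at least a fraction $c(G') - \varepsilon' = c - \varepsilon'$ of constraints from instances where every assignment satisfies at most a fraction $s(G') + \varepsilon' \leq s + 2\varepsilon' = s + \varepsilon$. Since $c - \varepsilon' \geq c - \varepsilon$, this is exactly the claim of Theorem \ref{main}.

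I do not expect any real obstacle here: the heavy lifting is done by Lemma \ref{thm_important} (whose proof, via the leaky-flow second-moment analysis and embeddings, is deferred to Appendix \ref{sec_proof}), and Proposition \ref{prop:gap} is a standard consequence of Theorem \ref{thm:useless}. The only bookkeeping points are to choose the intermediate slack $\varepsilon'$ so that the two additive errors combine correctly, and to check that the gap $c > s$ survives the perturbation, both of which are handled by taking $\varepsilon$ sufficiently small.
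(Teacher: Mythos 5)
Your proposal is correct and follows essentially the same route as the paper: apply Lemma \ref{thm_important} to get $G'$ with $\rs(G') \leqslant \rsinf(G) + \varepsilon'$, use $s(G') \leqslant \rs(G')$ (which the paper records as Proposition \ref{prop:pi}(a), including the small point that fixing the constant-function variables $x_1, x_{-1}$ does not change the soundness), and then invoke Proposition \ref{prop:gap} on $G'$. The only difference is your slightly more explicit $\varepsilon/2$ bookkeeping, which the paper leaves implicit.
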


\section{Numerical results }\label{sec:num}

This section contains a presentation of constructed \hadkttlint{} gadgets. Recall that there are three different ways to measure
the soundness of a \hadkttlint{} gadget. There is the
true soundness of a gadget, which can be used to show \NPhardness{} results for
\maxtlint{}, see Proposition \ref{prop:gap}. Then there is the relaxed
soundness, denoted by $\rs$. This is an upper bound of the true
soundness, see Proposition \ref{prop:pi}. Finally there is the infinity
relaxed soundness, denoted by $\rsinf$, which according to our
main result, Theorem \ref{main}, also imply \NPhardness{} results for \maxtlint{}.

We compute gadgets for $k = 2, 3, 4$, optimised either for $\rs$ or
$\rsinf$. The short rundown of the process of constructing a
gadget is to first decide on the completeness of the gadget, and then call an
LP-solver to find the gadget with that completeness that either minimises
$\rs$ or $\rsinf$, depending on which measure of soundness
we want to optimise the gadget for.

\subsection{Edges used/unused in constructed gadgets}\label{list}

The capacity $G$ of a \hadkttlint{} gadget $(G,
\mathbb{X}_k, \mathbb{Y}_k)$ is a probability distribution over (undirected) edges.
Every gadget that we construct is symmetrical under the mappings of
$\Mk_{k \rightarrow k}$, so edges from the same edge orbit share the
same capacity. \cref{fig11,fig12,fig13} in Appendix \ref{sec:used} list all edge
orbits that have non-zero weight in at least one of our constructed gadgets for $k = 2, 3,
4$. Note that as discussed in Appendix \ref{sec:rest}, in the case of $k = 4$
it is possible that the gadgets we construct are sub-optimal if $c < 1 - 2^{-
k}$. This means that it is possible that the Table for $k = 4$, Table
\ref{fig13}, could look slightly different had we constructed optimal gadgets.

\subsection{Lists and plots of gadgets}

Figures \ref{fig21}, \ref{fig22} and \ref{fig23} show \hadkttlint{} gadgets with completeness on the $x$-axis, and either maximal
$\frac{1 - \rs (G)}{1 - c (G)}$ or maximal $\frac{1 - \rsinf
(G)}{1 - c (G)}$ \ on the $y$-axis. To create this plot, we construct one
gadget for each completeness value from $0.5$ to $1 - 2^{- k}$ (inclusive),
with a spacing of $2^{- 9}$. The curve is constructed using interpolation by
taking convex combinations of pairs of neighbouring gadgets.

\begin{figure}
  \centering
  {\includegraphics[width=0.8\textwidth]{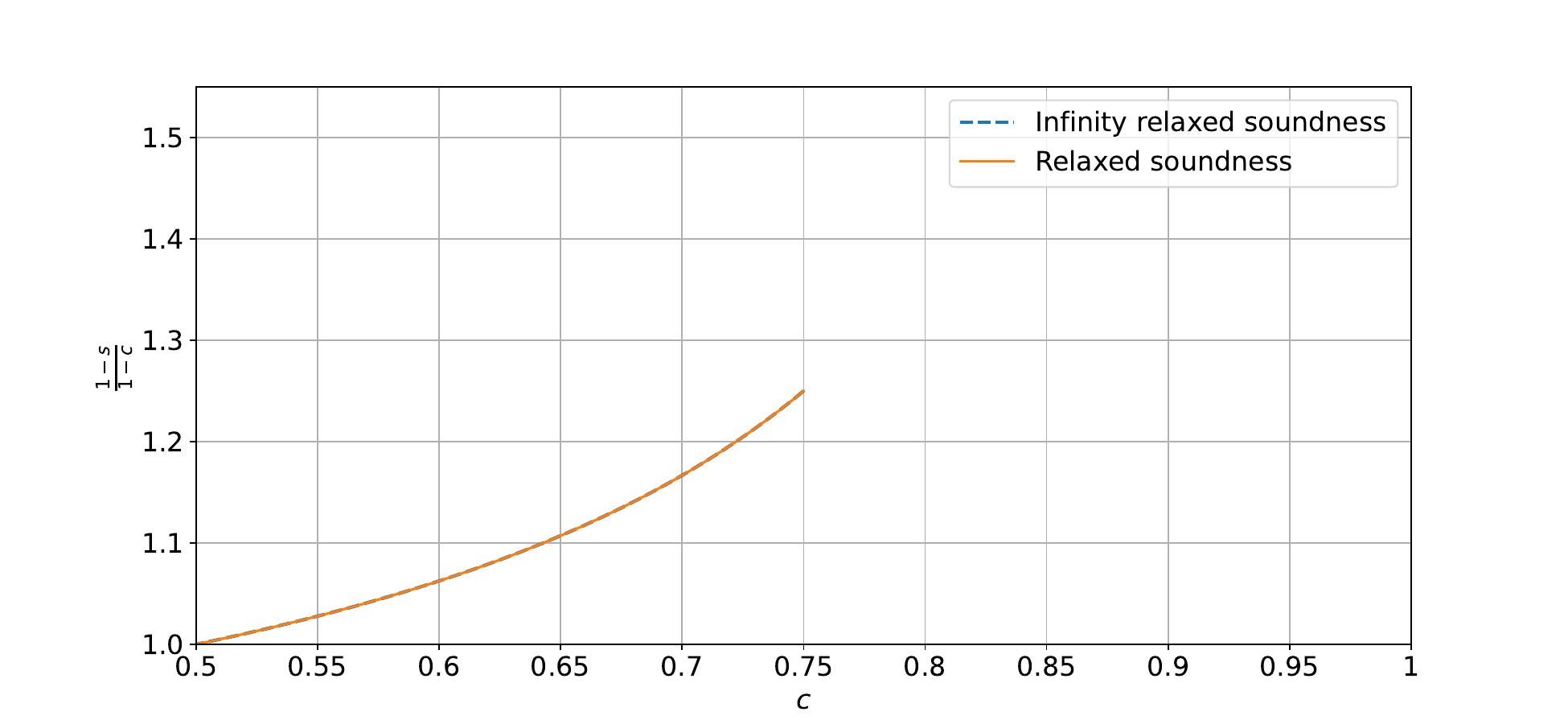}}
  \caption{This plot shows two types of \hadkttlint[2]{}
  gadgets. The filled curve describes the minimisation of $\rs$ and the
  striped curve describes the minimisation of $\rsinf$. The
  completeness value is on the $x$-axis, and either $\frac{1 - \rs (G)}{1 - c
  (G)}$ or $\frac{1 - \rsinf (G)}{1 - c (G)}$ on the $y$-axis. In this
  particular case, the case of $k = 2$, it turns out that these two curves are
  identical. \label{fig21}}
  {\includegraphics[width=0.8\textwidth]{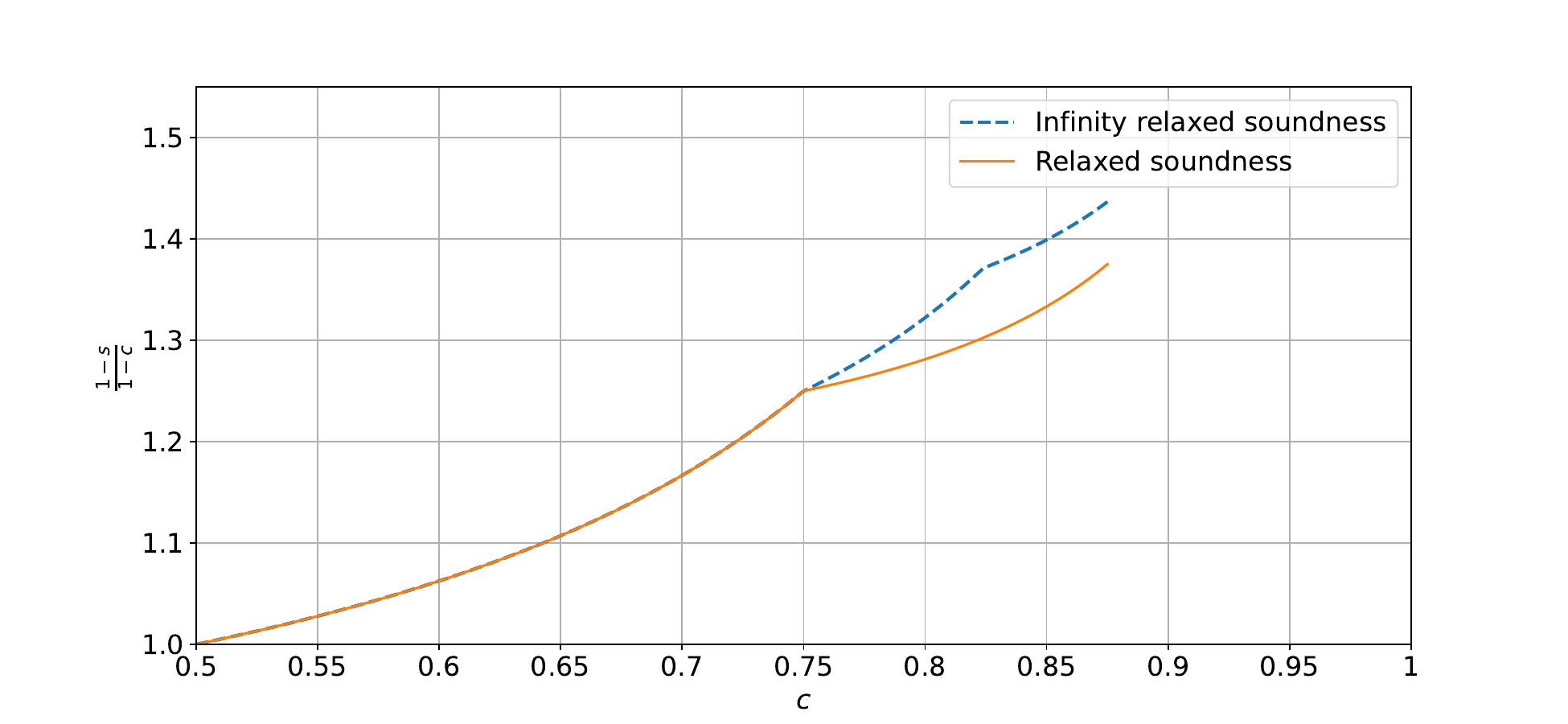}}
  \caption{This plot shows two types of \hadkttlint[3]{}
  gadgets. The filled curve describes the minimisation of $\rs$ and the
  striped curve describes the minimisation of $\rsinf$. The
  completeness value is on the $x$-axis, and either $\frac{1 - \rs
  (G)}{1 - c (G)}$ or $\frac{1 - \rsinf (G)}{1 - c (G)}$ on the
  $y$-axis. \label{fig22}}
  {\includegraphics[width=0.8\textwidth]{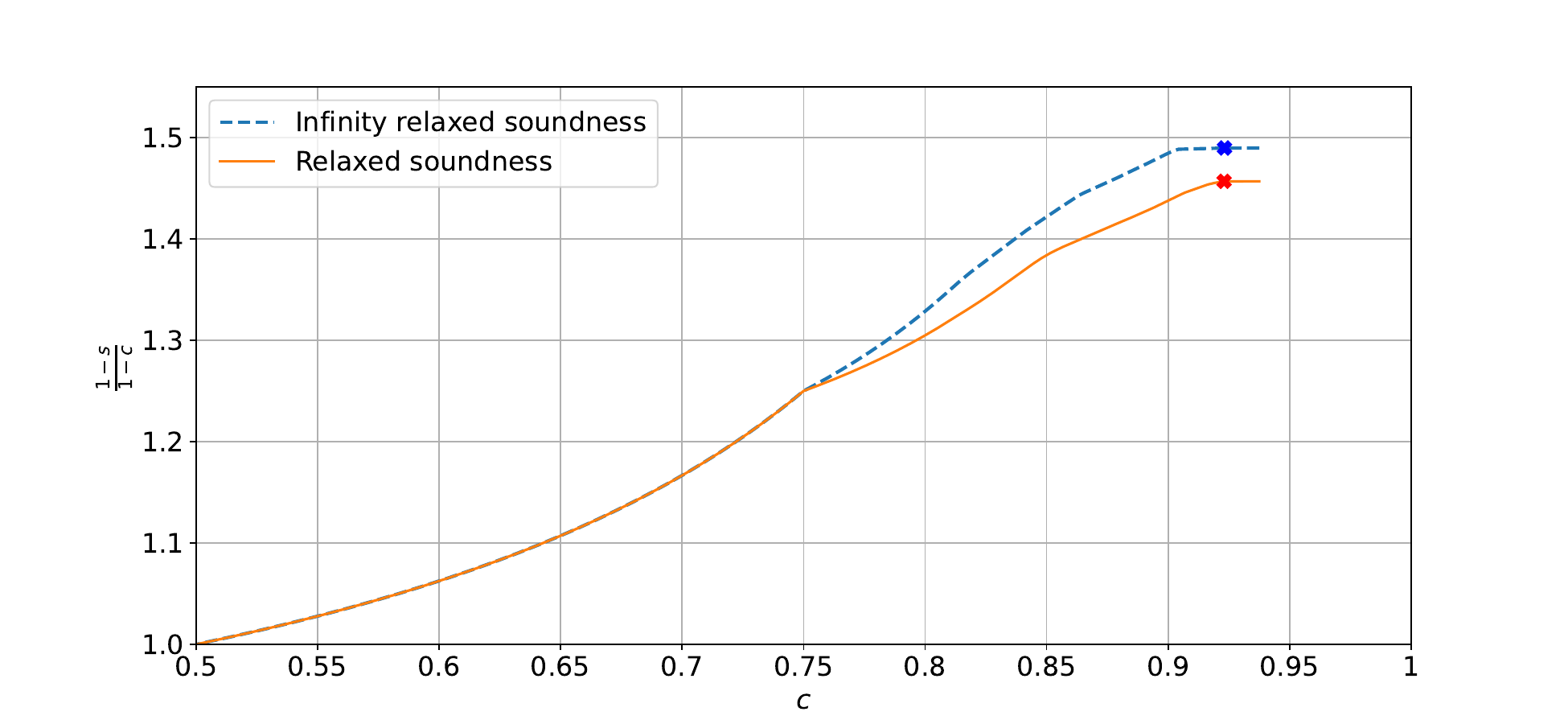}}
  \caption{This plot shows two types of \hadkttlint[4]{}
  gadgets. The filled curve describes the minimisation of $\rs$ and the
  striped curve describes the minimisation of $\rsinf$. The
  completeness value is on the $x$-axis, and either $\frac{1 - \rs
  (G)}{1 - c (G)}$ or $\frac{1 - \rsinf (G)}{1 - c (G)}$ on the
  $y$-axis. The top part of both of these curves are perfectly flat, which is
  not the case in Figure \ref{fig21} and Figure \ref{fig22}. The gadgets that
  mark the point where the curves become flat can be found in
  \cref{tab:rsound,tab:rsoundinf}, and are marked by crosses in the plot. \label{fig23}}
\end{figure}

\subsubsection{The curve \texorpdfstring{$s (c)$}{s(c)}}

The curve $s (c)$ describes the infinity relaxed soundness of
\hadkttlint[4]{} gadgets as a function of completeness,
shown as the upper curve in Figure \ref{fig23}, as well as in Figures
\ref{fig1}, \ref{fig2} and \ref{fig3}. The data for this curve can be found in
Table \ref{tab:sc}. It has the following formal definition.

\begin{definition}
  \label{curve}The curve $s (c) : [0.5, 1] \rightarrow [0.5, 1]$, $k = 4$, is
  for $c \in [0.5, 1 - 2^{- k}]$ defined as the solution to the restricted
  compressed $\rsinfLP$. For $c \geqslant 1 - 2^{- k}$ the curve is defined
  as $s (c) = 1 + 2^k (s (1 - 2^{- k}) - 1) (1 - c)$, meaning $\frac{1 - s
  (c)}{1 - c}$ is constant for all $c \geqslant 1 - 2^{- k}$.
\end{definition}

\begin{proof}[Proof of Theorem \ref{main_res}]
  For $c \in [0.5, 1 - 2^{- k}]$, the NP-hardness result follows directly from
  Theorem \ref{main} since the solution of the restricted compressed $\rsinfLP
  (G)$ is an upper bound of the (non-restricted) $\rsinfLP (G)$. For $c
  \geqslant 1 - 2^{- k}$ the \NPhardness{} result follows from taking the convex
  combination of $(c, s) = (1 - 2^{- k}, s (1 - 2^{- k}))$ and $(c, s) = (1,
  1)$. Since it is possible to create a hard instance by taking the convex
  combination of two hard instances using separate variables.
\end{proof}

\begin{table}
  \small
  \centering
  $\begin{array}{|c|c|c|c|c|c|c|c|c|c|c|c|c|c|}
    \hline
    c & s (c) &  & c & s (c) &  & c & s (c) &  & c & s (c) &  & c & s (c)\\
    0.500 & 0.5000 &  & 0.600 & 0.5750 &  & 0.700 & 0.6500 &  & 0.800 & 0.7343
    &  & 0.900 & 0.8516\\
    0.505 & 0.5038 &  & 0.605 & 0.5788 &  & 0.705 & 0.6538 &  & 0.805 & 0.7390
    &  & 0.905 & 0.8586\\
    0.510 & 0.5075 &  & 0.610 & 0.5825 &  & 0.710 & 0.6575 &  & 0.810 & 0.7437
    &  & 0.910 & 0.8661\\
    0.515 & 0.5113 &  & 0.615 & 0.5863 &  & 0.715 & 0.6613 &  & 0.815 & 0.7485
    &  & 0.915 & 0.8735\\
    0.520 & 0.5150 &  & 0.620 & 0.5900 &  & 0.720 & 0.6650 &  & 0.820 & 0.7535
    &  & 0.920 & 0.8809\\
    0.525 & 0.5188 &  & 0.625 & 0.5938 &  & 0.725 & 0.6688 &  & 0.825 & 0.7588
    &  & 0.925 & 0.8884\\
    0.530 & 0.5225 &  & 0.630 & 0.5975 &  & 0.730 & 0.6725 &  & 0.830 & 0.7642
    &  & 0.930 & 0.8958\\
    0.535 & 0.5263 &  & 0.635 & 0.6013 &  & 0.735 & 0.6763 &  & 0.835 & 0.7696
    &  & 0.935 & 0.9032\\
    0.540 & 0.5300 &  & 0.640 & 0.6050 &  & 0.740 & 0.6800 &  & 0.840 & 0.7752
    &  & 0.940 & 0.9107\\
    0.545 & 0.5338 &  & 0.645 & 0.6088 &  & 0.745 & 0.6838 &  & 0.845 & 0.7809
    &  & 0.945 & 0.9181\\
    0.550 & 0.5375 &  & 0.650 & 0.6125 &  & 0.750 & 0.6875 &  & 0.850 & 0.7868
    &  & 0.950 & 0.9256\\
    0.555 & 0.5413 &  & 0.655 & 0.6163 &  & 0.755 & 0.6922 &  & 0.855 & 0.7927
    &  & 0.955 & 0.9330\\
    0.560 & 0.5450 &  & 0.660 & 0.6200 &  & 0.760 & 0.6969 &  & 0.860 & 0.7988
    &  & 0.960 & 0.9405\\
    0.565 & 0.5488 &  & 0.665 & 0.6238 &  & 0.765 & 0.7016 &  & 0.865 & 0.8050
    &  & 0.965 & 0.9479\\
    0.570 & 0.5525 &  & 0.670 & 0.6275 &  & 0.770 & 0.7063 &  & 0.870 & 0.8115
    &  & 0.970 & 0.9554\\
    0.575 & 0.5563 &  & 0.675 & 0.6313 &  & 0.775 & 0.7109 &  & 0.875 & 0.8181
    &  & 0.975 & 0.9628\\
    0.580 & 0.5600 &  & 0.680 & 0.6350 &  & 0.780 & 0.7156 &  & 0.880 & 0.8247
    &  & 0.980 & 0.9703\\
    0.585 & 0.5638 &  & 0.685 & 0.6388 &  & 0.785 & 0.7203 &  & 0.885 & 0.8313
    &  & 0.985 & 0.9777\\
    0.590 & 0.5675 &  & 0.690 & 0.6425 &  & 0.790 & 0.7250 &  & 0.890 & 0.8380
    &  & 0.990 & 0.9852\\
    0.595 & 0.5713 &  & 0.695 & 0.6463 &  & 0.795 & 0.7297 &  & 0.895 & 0.8448
    &  & 0.995 & 0.9926\\
    \hline
  \end{array}$
  \caption{The curve $s (c)$ as shown in Figure \ref{fig1}. The values of $s
  (c)$ in this table are rounded up to $4$ decimals. This table has the same
  format as the table describing the $\mathrm{Gap}_{\mathrm{SDP}} (c)$ curve,
  found in Appendix E of {\cite{OW}}. \label{tab:sc}}
\end{table}

\subsection{Notable gadgets}

There are two gadgets that are of particular interest. These are the gadgets
with minimal completeness among those that maximises either $\frac{1 -
\rs (G)}{1 - c (G)}$ or $\frac{1 - \rsinf (G)}{1 - c (G)}$. These gadgets are marked by crosses in Figure \ref{fig23}.
The gadget with minimal completeness that maximises $\frac{1 - \rs
(G)}{1 - c (G)}$ can be found in Table \ref{tab:rsound}. The gadget with
minimal completeness that maximises $\frac{1 - \rsinf (G)}{1 - c
(G)}$ can be found in Table \ref{tab:rsoundinf}, and is also marked by a cross on the
curve $s (c)$ in Figures \ref{fig1}-\ref{fig3}. The method used to construct
such minimal completeness gadgets is slightly different compared to the
construction of gadgets with fixed completeness. Propositions \ref{prop:3.7}
and \ref{prop:pi} guarantees that gadgets with completeness $1 - 2^{- k}$ can
be used to maximise $\frac{1 - \rs (G)}{1 - c (G)}$ and $\frac{1 -
\rsinf (G)}{1 - c (G)}$. This means that the maximum values of
$\frac{1 - \rs (G)}{1 - c (G)}$ and $\frac{1 - \rsinf (G)}{1
- c (G)}$ can be computed by fixing the completeness to $c(G) = 1-2^{-k}$. Using these maximums, it is possible to slightly modify the
objective of the LP such that its solution is the gadget with minimal
completeness that maximises either $\frac{1 - \rs (G)}{1 - c (G)}$ or
$\frac{1 - \rsinf (G)}{1 - c (G)}$.

\begin{table}
  \centering
  $
  \begin{array}{lllll}
    f_1 & f_2 & \mathrm{length} & G (f_1, f_2) & \text{\% of total} \\
    \hline
    1100000000000000 & 1110000000000000 & 1 & 5461/969636864 & 30.3\\
    1110000000000000 & 1111000000000000 & 1 & 17007/1616061440 & 18.9\\
    1110000000000000 & 1110100000000000 & 1 & 437/404015360 & 23.2\\
    1110100000000000 & 1110100010000000 & 1 & 19/92346368 & 4.4\\
    0000000000000000 & 1100000000000000 & 2 & 13/215360 & 23.2\\
    \hline
  \end{array}
  $
  \caption{\label{tab:rsound}The \hadkttlint[4]{} gadget
  $G$ with minimal completeness among those that minimise $\frac{1 - \rs
  (G)}{1 - c (G)}$. The completeness of $G$ is $c (G) = 9939 / 10768$ and
  relaxed soundness is $\rs (G) = 2623643487 / 2955083776$. The right
  most column tells how much of the total capacity is contained in each edge
  orbit. This column sums up to $100 \%$. }

  $
  \begin{array}{lllll}
    f_1 & f_2 & \mathrm{length} & G (f_1, f_2) & \text{\% of total} \\
    \hline
    1100000000000000 & 1110000000000000 & 1 & 4899 / 799089790 & 33.0\\
    1110000000000000 & 1111000000000000 & 1 & 11843 / 799089790 & 26.5\\
    1110000000000000 & 1110100000000000 & 1 & 1427 / 1917815496 & 16.0\\
    1110100000000000 & 1110100010000000 & 1 & 1427 / 19178154960 & 1.60\\
    0000000000000000 & 1100000000000000 & 2 & 6094929 / 102283493120 &
    22.9\\
    \hline
  \end{array}
  $
  \caption{\label{tab:rsoundinf}The \hadkttlint[4]{}
  gadget $G$ with minimal completeness among those that minimise $\frac{1 -
  \rsinf (G)}{1 - c (G)}$. The completeness of $G$ is $c (G) =
  590174949 / 639271832$ and the infinity relaxed soundness is
  $\rsinf (G) = 141533171 / 159817958$. The right most column
  tells how much of the total capacity is contained in each edge orbit. This
  column sums up to 100\%. }
\end{table}

\section{Conclusions}

In this work, we have introduced a procedure called lifting for taking a
\hadkttlint{} gadget for a fixed $k$ and using that
gadget to construct better and better \hadkttlint[k']{}
gadgets, as $k'$ tends to infinity. In order to be able to analyse this,
both numerically and analytically, we made use of a relaxation of the (true)
soundness, first introduced by Wiman {\cite{Wiman}} in their analysis of the
\hadkttlint[4]{} gadget. This procedure allowed us to show
new inapproximability results of \maxtlint{}, and most notably using $k = 4$,
we have shown that \mintlint{} has an inapproximability factor of
$\frac{73139148}{49096883} \approx 1.48969$.

Some open problems still remain. The most obvious one is that it is likely
within reach to carry out the analysis we did for $k = 4$ also for $k = 5$.
The main bottleneck is to find or write a very efficient LP solver that is
able to handle large instances and give consistent and stable results. The
solvers available to us were not quite able to get trustworthy results. This
being said, without substantial new ideas we do not see how to attack the $k = 6$ case.

Another open problem is to understand the best possible gadget reduction from
\hadkttlint{} as $k \rightarrow \infty$. More
specifically, which is the best possible inapproximability factor of
\mintlint{} attainable using such a gadget reduction? We were able
to show an inapproximability factor of $\frac{73139148}{49096883} \approx
1.48969$ using relaxed soundness. We have also shown that by using relaxed
soundness, it is impossible to go above 2 (see Proposition \ref{prop:pi}).
Furthermore, it is known from a previous work {\cite[Theorem 6.1]{Has2}} that
by using (non-relaxed) soundness, $\frac{1}{1 - e^{- 0.5}} \approx 2.54$ is an
upper bound. This leaves us with a fairly large gap. So it would be of
interest to close this gap.

In comparison, by assuming the Unique Games Conjecture (UGC), it is possible
to show that the inapproximability factor of \mintlint{} can be
made arbitrarily large. The main open problem here is to show this without
assuming UGC. This however, is not possible to do using a gadget reduction
from \hadkttlint{}, and would instead require a
completely new approach.

Finally, as a concluding remark, it would be interesting to see if our ideas
of lifting small gadgets and analysing them using a relaxed version of the
(true) soundness, could be used in other applications. Maybe there are other
gadgets out there that could be improved using a similar procedure?

{\small
\bibliographystyle{alphaurl}
\bibliography{mybibl}
}

\newpage

\appendix\section{Max-Flow and symmetries\label{2.3}}

This section introduces the concepts of feasible flows and
leaky flows, and show how to make use symmetries in a graph to more
efficiently solve the \maxflow{} problem. These \maxflow{} techniques and concepts
are used during the construction of gadgets. These techniques
are very general, and become easier to explain without involving the
intricacies of gadgets. Let us start by defining the \maxflow{} problem as an LP.

\begin{definition}
  \label{def:flowgraph}A flow graph is a tuple \ $G = (V, C, S, T)$, where $C
  (u, v) = C (v, u) \geqslant 0$ is the capacity of edge $(u, v) \in V \times
  V$, and $S \subset V$ is a set of sources and $T \subset V$ is a set of
  sinks, and $S \cap T = \varnothing$.
\end{definition}

\begin{definition}
  \label{def_maxflow}A flow $w$ of a flow graph $G = (V, C, S, T)$ is a
  function $V \times V \rightarrow \real_{\geqslant 0}$. The flow $w$ is
  said to be feasible if and only if
  \begin{eqnarray}
    w (v, u) + w (u, v) & \leqslant & C (u, v) \quad \forall v, u \in V, 
    \label{eqA}\\
    \fout_w (v) & = & \fin_w (v) \quad \forall v \in V
    \setminus (S \cup T) .  \label{eqB}
  \end{eqnarray}
  where
  \begin{eqnarray*}
    \fout_w (v) & = & \sum_{u \in V} w (v, u),\\
    \fin_w (v) & = & \sum_{u \in V} w (u, v) .
  \end{eqnarray*}
  The value of a flow is defined as
  \begin{eqnarray*}
    \val (w) & = & \sum_{s \in S} \fout_w (s) - \fin_w (s) .
  \end{eqnarray*}
\end{definition}

The value of the maximum flow of a flow graph $G$ is denoted by
$\maxflowval (G)$.

\subsection{Feasible flows and leaky flows}\label{sec:leak}

When solving a \maxflow{} problem we normally require the flow to be conserved
(constraint \eqref{eqB} above), meaning that the incoming flow into a node is
equal to the outgoing flow. This is the definition of a feasible flow.
However, to find an approximate solution to a \maxflow{} problem, it can be
helpful to relax the conservation of flows constraint, allowing for ``leaks''.
A flow that does not fulfil the conservation of flow constraint is called a
\emph{leaky flow}. This section aims to analyse the relation between leaky
flows and feasible flows, with the goal of showing that if the leaks of a
leaky flow are small, then there is a feasible flow with almost the same value
as the leaky flow.

\begin{definition}
  A flow $\tilde{w}$ is said to be a leaky flow if constraint \eqref{eqA} is
  satisfied. The (signed) leak at node $v$ be defined as
  $\leak_{\tilde{w}} (v) = \fin_{\tilde{w}} (v) - \fout_{\tilde{w}} (v)$ 
  for $v \in V \setminus (S \cup T)$. 
\end{definition}

\begin{remark}
  Note that a leaky flow $\tilde{w}$ is also a feasible flow if and only if
  $\leak_{\tilde{w}} (v) = 0$ for all $v \in V \setminus (S \cup T)$.
\end{remark}

The following theorem tells us that if the sum of absolute values of the leaks
are small, then there is a feasible flow having almost the same value as the
leaky flow. The implications from this is that we can use leaky flows to get
an approximation of the true \maxflow{}.

\begin{theorem}
  \label{leaky}Given a leaky flow $\tilde{w}$ of a flow graph $G = (V, C, S, T)$,
  there exists a feasible flow $w$ of $G$ such that
  \begin{eqnarray*}
    \val (w) & \geqslant & \val (\tilde{w}) - \sum_{v \in V
    \setminus (S \cup T)} | \leak_{\tilde{w}} (v) |.
  \end{eqnarray*}
\end{theorem}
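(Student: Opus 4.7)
My plan is to take a path-and-cycle decomposition of the nonneg edge-weighting $\tilde w$ and obtain $w$ by discarding exactly those pieces that touch a leaky interior node. Partition the imbalanced interior nodes into $L^+ = \{v \in V \setminus (S \cup T) : \leak_{\tilde w}(v) > 0\}$ and $L^- = \{v \in V \setminus (S \cup T) : \leak_{\tilde w}(v) < 0\}$. By the standard flow-decomposition lemma, $\tilde w$ admits a representation as a nonneg combination of directed simple paths and directed cycles, where each path starts at a node with $\fout_{\tilde w} > \fin_{\tilde w}$ and ends at a node with $\fin_{\tilde w} > \fout_{\tilde w}$. Since $\fout_{\tilde w}(v) - \fin_{\tilde w}(v) = -\leak_{\tilde w}(v)$ for every $v \notin S \cup T$, the only interior endpoints that can occur lie in $L^-$ (as starts) or in $L^+$ (as ends); every other endpoint lies in $S \cup T$.

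Define $w$ by removing every cycle together with every path having at least one endpoint in $L^+ \cup L^-$, keeping only those paths whose two endpoints both lie in $S \cup T$. Then $w \le \tilde w$ pointwise on every directed edge, so capacity constraint \eqref{eqA} is preserved; and since no path of $w$ starts or ends outside $S \cup T$, conservation \eqref{eqB} holds at every interior node, so $w$ is feasible. For the value, each unit of a decomposition path $P$ contributes $\mathbf{1}[\mathrm{start}(P) \in S] - \mathbf{1}[\mathrm{end}(P) \in S]$ to $\val$, and cycles contribute $0$; among the removed pieces only $S \to L^+$ paths (contribution $+1$) and $L^- \to S$ paths (contribution $-1$) shift the value. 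Writing $|X \to Y|$ for the total decomposition mass going from $X$ to $Y$, this yields $\val(\tilde w) - \val(w) = |S \to L^+| - |L^- \to S| \le |S \to L^+|$. Moreover the total decomposition mass terminating at any $v \in L^+$ is exactly its excess inflow $\leak_{\tilde w}(v)$, so $|S \to L^+| \le \sum_{v \in L^+} \leak_{\tilde w}(v) \le \sum_{v \in V \setminus (S \cup T)} |\leak_{\tilde w}(v)|$, which is the claimed inequality.

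The main subtlety is invoking the path-and-cycle decomposition for a real-valued flow rather than a unit integer flow; I would handle this by the usual greedy peeling -- repeatedly subtract a directed cycle or a directed path between currently imbalanced nodes at its minimum positive weight, each step strictly decreasing the support -- or alternatively by augmenting $G$ with a super-source/super-sink attached to the nodes of $L^-$ and $L^+$ through edges of capacity $|\leak_{\tilde w}(v)|$ and appealing to max-flow/min-cut on the augmented graph to certify the decomposition. Either approach gives the stated bound with the constant $1$ in front of $\sum_{v \in V \setminus (S \cup T)} |\leak_{\tilde w}(v)|$.
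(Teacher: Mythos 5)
Your proof is correct, but it takes a genuinely different route from the paper. The paper argues by duality: it builds an augmented graph $\tilde{G}$ with a super-source $\tilde{s}$ and super-sink $\tilde{t}$ attached to the leaky nodes through edges of capacity $|\leak_{\tilde{w}}(v)|$, observes that $\tilde{w}$ extends to a feasible flow there, and then bounds $\maxflowval(\tilde{G}) \leqslant \maxflowval(G) + \sum_v |\leak_{\tilde{w}}(v)|$ by comparing min cuts — so it never exhibits the feasible flow $w$ explicitly, only certifies that one of the required value exists. You instead perform direct surgery on $\tilde{w}$: decompose it into weighted directed paths and cycles, discard every cycle and every path with an endpoint at a leaky interior node, and account for the lost value path by path. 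Your accounting is right — only $S \to L^+$ paths cost value, their total mass is exactly $\sum_{v \in L^+} \leak_{\tilde{w}}(v)$, and the $L^- \to S$ paths can only help — so you in fact get the slightly sharper bound $\val(w) \geqslant \val(\tilde{w}) - \sum_{v \in L^+} \leak_{\tilde{w}}(v)$, which of course implies the stated one. The price is that you must invoke (or re-derive by greedy peeling) the flow-decomposition lemma for real-valued flows, and you should note the one graph-specific wrinkle: here both $\tilde{w}(u,v)$ and $\tilde{w}(v,u)$ may be positive simultaneously under constraint \eqref{eqA}, which is harmless for the decomposition and preserved by taking $w \leqslant \tilde{w}$ pointwise on directed edges. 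The paper's approach buys brevity by outsourcing the work to max-flow/min-cut; yours buys an explicit witness and a marginally stronger constant, at the cost of the decomposition machinery.
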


\begin{proof}
  Create a new graph $\tilde{G} = (V \cup \{ \tilde{s}, \tilde{t} \},
  \tilde{C}, S \cup \{ \tilde{s} \}, T \cup \{ \tilde{t} \})$ with an
  additional new source node $\tilde{s}$ and sink node $\tilde{t}$. We
  construct $\tilde{C}$ using $C$. Firstly let $\tilde{C} (u, v) = C (u, v)$
  for all nodes $u, v \in V$. Secondly, for every \ $v \in V \setminus (S \cup
  T)$ such that $\leak_{w'} (v) > 0$, let $\tilde{C} (u, \tilde{t}) =
  \leak_{w'} (v)$, and for every $v \in V \setminus (S \cup T)$ such
  that $\leak_{w'} (v) < 0$ let $\tilde{C} (u, \tilde{s}) = -
  \leak_{\tilde{w}} (v)$. Finally let $\tilde{C}$ be $0$ in all other
  cases.
  
  Note that for this new graph $\tilde{G}$, the leaky flow $\tilde{w}$ can be
  extended into a feasible flow since all of the leaks can be routed to either
  $\tilde{s}$ or $\tilde{t}$ depending on the sign of the leakage.
  Furthermore, if we can show that
  \begin{eqnarray}
    \maxflowval (\tilde{G}) & \leqslant & \maxflowval (G) + \sum_{v
    \in V \setminus (S \cup T)} | \leak_{\tilde{w}} (v) |
    ,  \label{eq:leak}
  \end{eqnarray}
  then that would imply the the Theorem.
  
  To show \eqref{eq:leak} we use the \maxflow{} \mincut{} Theorem. Note that any
  $S$-$T$ cut in $\tilde{G}$ has a corresponding $S$-$T$ cut in $G$ and vice
  versa since $G$ and $\tilde{G}$ share the same non-source/sink nodes.
  Additionally, note that the value of a $S$-$T$ cut in $\tilde{G}$ can be
  bounded from above by the value of the corresponding cut in $G$ plus the
  extra capacities in $\tilde{G}$. The conclusion from this is that
  \begin{eqnarray*}
    \maxflowval (\tilde{G}) & = & \mincutval (\tilde{G})\\
    & \leqslant & \mincutval (G) + \sum_{v \in V \setminus (S \cup T)} |
    \leak_{\tilde{w}} (v) |\\
    & = & \maxflowval (G) + \sum_{v \in V \setminus (S \cup T)} |
    \leak_{\tilde{w}} (v) | .
  \end{eqnarray*}
  
\end{proof}

\subsection{Symmetries of \maxflow{} graphs}\label{sec:flowsym}

If a flow graph $G = (V, C, S, T)$ has some kind of symmetry, then we can use
them to more efficiently solve the \maxflow{} problem. In our setting, the
symmetries are described by a group $H$ acting on $V$ with the property that
the capacities are invariant under the group action, meaning $C (u, v) = C (h
\cdot u, h \cdot v)$ for all $h \in H$ and $u, v \in V$. Here $h \cdot u$
denotes the group action of $h$ on $u$.

\begin{definition}
  \label{def_symgroup}Given a flow graph $G = (V, C, S, T)$ and a group $H$
  acting on $V$, then $H$ is said to be a symmetry group of $G$ if and only if
  $\forall h \in H$:
  \begin{enumerate}
    \item $h \cdot s \in S \, \forall s \in S$,
    
    \item $h \cdot t \in T \, \forall t \in T$,
    
    \item $C (u, v) = C (h \cdot u, h \cdot v) \forall h \in H$ and $\forall
    u, v \in V$.
  \end{enumerate}
\end{definition}

Using $G$ and the group $H$ acting on $V$, we can create a new flow graph
where the set of vertices is the quotient space $V / H$. This ``compresses''
the graph $G$ into one vertex per orbit. Let the capacities between two orbits
$A, B \in V / H$ be the sum capacities over all pairs in $A \times B$.

\begin{definition}
  Given a flow graph $G = (V, C, S, T) $and a symmetry group $H$ of $G$. Let
  the quotient flow graph $G / H = (V / H, C / H, S / H, T / H)$ where $V / H$
  is the set of all orbits of $V$ under the action of $H$, and similarly $S /
  H$ is the set of orbits of $S$ and $T / H$ is the set of orbits of $T$. Let
  $C / H$ be defined as a function $V / H \times V / H \rightarrow \real$
  such that
  \begin{eqnarray*}
    (C / H) (A, B) & = & \sum_{u \in A} \sum_{v \in B} C (u, v)
  \end{eqnarray*}
  for all $A, B \in V / H$.
\end{definition}

What remains to show is that the original graph $G$ and the compressed graph
$G / H$ has the same \maxflow{}.

\begin{theorem}
  \label{thm_symmax}Given a flow graph $G = (V, C, S, T) $and a symmetry group
  $H$ of $G$. Then $\maxflowval (G) = \maxflowval (G / H)$.
\end{theorem}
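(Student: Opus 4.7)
The plan is to prove the equality by showing both inequalities $\maxflowval(G) \leq \maxflowval(G/H)$ and $\maxflowval(G/H) \leq \maxflowval(G)$, each by lifting or projecting a feasible flow between the two graphs while preserving its value.

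For the forward inequality, I would start with an optimal feasible flow $w^*$ of $G$ and symmetrize it by averaging over the group: define $\bar{w}(u,v) = \frac{1}{|H|}\sum_{h \in H} w^*(h \cdot u, h \cdot v)$. Since $H$ fixes $S$ and $T$ setwise and preserves capacities, $\bar{w}$ is again feasible, has the same value as $w^*$, and is $H$-invariant. Then push it down to the quotient via $\tilde{w}(A,B) = \sum_{u \in A, v \in B}\bar{w}(u,v)$. The capacity bound on $G/H$ follows immediately by summing the capacity constraints of $\bar{w}$ over $A \times B$, the conservation condition at each orbit $A \notin (S/H) \cup (T/H)$ follows by summing conservation of $\bar{w}$ at each $u \in A$, and the value is preserved by summing over the orbits of $S$.

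For the reverse inequality, I would take an optimal feasible flow $\tilde{w}$ of $G/H$ and spread it back onto $G$ proportionally to the capacities, setting $w(u,v) = \tilde{w}(A,B)\, C(u,v)/(C/H)(A,B)$ for $u \in A$, $v \in B$ when $(C/H)(A,B) > 0$, and $0$ otherwise. The capacity constraint on $G$ is then a direct rescaling of that on $G/H$. For flow conservation the key technical identity needed is that, for $u \in A$, the row sum $\sum_{v \in B} C(u,v)$ does not depend on the choice of $u$ in the orbit $A$ and equals $(C/H)(A,B)/|A|$. Granted this, one computes $\fout_w(u) = \fout_{\tilde{w}}(A)/|A|$ and $\fin_w(u) = \fin_{\tilde{w}}(A)/|A|$, so conservation at every internal vertex of $G$ reduces to conservation at the corresponding orbit in $G/H$, and summing $\fout_w(s) - \fin_w(s)$ over $s \in S$ reproduces $\val(\tilde{w})$.

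The main obstacle I anticipate is precisely this row-sum identity $\sum_{v \in B} C(u,v) = (C/H)(A,B)/|A|$, as everything in the second direction hinges on it. It follows from the transitivity of $H$ on the orbit $A$: for any $u, u' \in A$ pick $h \in H$ with $h \cdot u = u'$, and then $\sum_{v \in B} C(u,v) = \sum_{v \in B} C(h \cdot u, h \cdot v) = \sum_{v' \in B} C(u',v')$, where $v' = h \cdot v$ ranges over $B$ because $B$ is itself an $H$-orbit. Once this invariance is established, the remaining verifications are routine manipulations of the capacity and conservation constraints in Definition \ref{def_maxflow}.
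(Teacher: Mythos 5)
Your proof is correct and follows essentially the same route as the paper: push a flow down to the quotient by summing over orbits, and lift a quotient flow back by distributing it proportionally to the edge capacities. In fact you are more careful than the paper on the one delicate point, namely verifying flow conservation for the lifted flow via the orbit-invariance of the row sums $\sum_{v \in B} C(u,v)$, which the paper dismisses with "since the constraints are linear."
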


\begin{proof}
  First let us show that $\maxflowval (G) \leqslant \maxflowval (G /
  H)$. Let $w$ be the max-flow of $G$. Now define $w / H$ as a function from
  $V / H \times V / H \rightarrow \real$ such that
  \begin{eqnarray*}
    (w / H) (A, B) & = & \sum_{a \in A} \sum_{b \in B} w (a, b) .
  \end{eqnarray*}
  What remains to show is that that $w / H$ is a feasible flow of $G / H$ and
  that $\val (w) = \val (w / G)$ since those two properties would
  imply that $\maxflowval (G) \leqslant \maxflowval (G / H)$.
  Firstly, note that $w / H$ fulfills \eqref{eqA} and \eqref{eqB} from
  Definition \ref{def_maxflow} for the graph $G / H$ since the constraints are
  linear. For example take constraint \eqref{eqA},
  \begin{eqnarray*}
    (w / H) (A, B) + (w / H) (B, A) & = & \sum_{a \in A} \sum_{b \in B} w (a,
    b) + w (b, a)\\
    & \leqslant & \sum_{a \in A} \sum_{b \in B} C (a, b)\\
    & = & (C / H) (A, B) .
  \end{eqnarray*}
  So $w / H$ is a feasible flow of $G / H$. Secondly note that the value of
  $w$ is the same as the value of $w / H$ since
  \begin{eqnarray*}
    \val (w / H) & = & \sum_{A \in S / H} \fout_{w / H} (A) -
    \fin_{w / H} (A)\\
    & = & \sum_{A \in S / H} \sum_{s \in A} \fout_w (s) - \fin_w
    (s)\\
    & = & \sum_{s \in S} \fout_w (s) - \fin_w (s)\\
    & = & \val (w) .
  \end{eqnarray*}
  It remains to show that $\maxflowval (G) \geqslant \maxflowval (G
  / H)$. Let $w'$ be a max-flow of $G / H$. Now define $w : V \times V
  \rightarrow \real$ such that
  \begin{eqnarray*}
    w (a, b) & = & w' (H \cdot a, H \cdot b) \frac{C (a, b)}{(C / H) (H \cdot
    a, H \cdot b)}
  \end{eqnarray*}
  where $a, b \in V$ and $H \cdot a$ is the orbit of $a$ and $H \cdot b$ is
  the orbit of $b$. What remains to show is that $w (a, b)$ is a feasible flow
  of $G$ and that the value of $w$ is the same as the value of $w'$. Firstly,
  note that $w / H$ fulfill constraints \eqref{eqA} and \eqref{eqB} from
  Definition \ref{def_maxflow} for the graph $G$ since the constraints are
  linear. For example take constraint \eqref{eqA},
  \begin{eqnarray*}
    w (a, b) + w (b, a) & = & (w' (H \cdot a, H \cdot b) + w' (H \cdot b, H
    \cdot a)) \frac{C (a, b)}{(C / H) (H \cdot a, H \cdot b)}\\
    & \leqslant & (C / H) (H \cdot a, H \cdot b)  \frac{C (a, b)}{(C / H) (H
    \cdot a, H \cdot b)}\\
    & = & C (a, b) .
  \end{eqnarray*}
  Secondly note that the value of $w$ is the same as $w'$ since
  \begin{eqnarray*}
    \val (w') & = & \sum_{A \in S / H} \fout_{w'} (A) -
    \fin_{w'} (A)\\
    & = & \sum_{A \in S / H} \sum_{B \in V / H} w' (A, B) - w' (B, A)\\
    & = & \sum_{A \in S / H} \sum_{B \in V / H} (w' (A, B) - w' (B, A))
    \left( \sum_{a \in A} \sum_{b \in B} \frac{C (a, b)}{(C / H) (A, B)}
    \right)\\
    & = & \sum_{A \in S / H} \sum_{B \in V / H} \sum_{a \in A} \sum_{b \in B}
    (w' (A, B) - w' (B, A)) \frac{C (a, b)}{(C / H) (A, B)}\\
    & = & \sum_{A \in S / H} \sum_{B \in V / H} \sum_{a \in A} \sum_{b \in B}
    w (a, b) - w (b, a)\\
    & = & \sum_{a \in S} \sum_{b \in V} w (a, b) - w (b, a)\\
    & = & \sum_{a \in S} \fout_w (a) - \fin_w (a)\\
    & = & \val (w) .
  \end{eqnarray*}
  So $w$ is a feasible flow of $G$ and $\val (w) = \val (w')$, so
  $\maxflowval (G) \geqslant \maxflowval (G / H)$.
\end{proof}

\section{Properties of relaxed soundness }\label{appendix:B}

The relaxed soundness share many similarities with the (true) soundness. One
example is the following Proposition, which is an analogue to Proposition
\ref{prop:3.7} but for relaxed soundness.

\begin{proposition}
  \label{prop:pi} For any \hadkttlint{} gadget $(G,
  \mathbb{X}_k, \mathbb{Y}_k)$

\begin{enumerate}[(a)]
\item
\begin{equation*}
s (G) \leqslant \rs (G) .
\end{equation*}
\item There exists a \hadkttlint{} gadget $(\tilde{G},
\mathbb{X}_k, \mathbb{Y}_k)$ with completeness $1 - 2^{- k}$ such that
\begin{equation*}
\frac{1 - \rs (G)}{1 - c (G)} \leqslant \frac{1 - \rs
(\tilde{G})}{1 - c (\tilde{G})},
\end{equation*}
\item and for any \hadkttlint{} gadget $(\tilde{G},
\mathbb{X}_k, \mathbb{Y}_k)$ with completeness $1 - 2^{- k}$
\begin{equation*}
\frac{1 - \rs (\tilde{G})}{1 - c (\tilde{G})} \leqslant 2.
\end{equation*}
\end{enumerate}
\end{proposition}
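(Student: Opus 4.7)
For (a), the inequality follows directly from the definitions: with $P$ fixed, the inner maximum in $\rs(G)$ is taken over all extensions of $P$ to $\mathbb{X}_k\cup\mathbb{Y}_k$, whereas the inner maximum in $s(G)$ is restricted to the folded extensions; maximising over a superset can only increase the value, so $s(G)\le\rs(G)$ pointwise in $P$ and hence in expectation.

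For (b), the plan is an edge-subdivision construction. For every edge $\{f_1,f_2\}$ whose truth tables differ at $\ell\ge 2$ coordinates, fix a shortest path $f_1=h_0,h_1,\dots,h_\ell=f_2$ in the Hamming hypercube on $\Fk_k$ (consecutive $h_i$'s differing at a single truth-table entry) and replace this one edge of capacity $G(f_1,f_2)$ by the $\ell$ length-$1$ edges $\{h_{i-1},h_i\}$, each carrying capacity $G(f_1,f_2)$; let $\tilde G'$ denote the resulting (unnormalised) graph. By construction $\sum_e \tilde G'(e)\dist(e)=1-c(G)$ is preserved, while the raw total becomes $2^k(1-c(G))$. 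Any feasible flow of $\rsLP(G)$ lifts to $\rsLP(\tilde G')$ by pushing the same units along each subdivision path, so $1-\rs(\tilde G')\ge 1-\rs(G)$, and if any intermediate $h_i$ happens to be primary this only increases the attainable flow, since as a source or sink it may freely emit or absorb additional units for the corresponding placement. Rescaling $\tilde G'$ by $1/(2^k(1-c(G)))$ yields a probability distribution $\tilde G$ with $c(\tilde G)=1-2^{-k}$, and since the rescaling multiplies both $1-\rs$ and $1-c$ by the same factor, $\frac{1-\rs(\tilde G)}{1-c(\tilde G)}\ge\frac{1-\rs(G)}{1-c(G)}$.

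For (c), observe that two distinct linear functions $\chi_\alpha,\chi_\beta$ have truth-table Hamming distance $2^{k-1}\ge 2$ for $k\ge 2$, so a length-$1$ gadget has no primary-primary edges: every edge is PA (primary-auxiliary) or AA. Write $E_{PA}, E_{AA}$ for the two classes; the plan is to combine two simple extensions of $P$ with a Fourier-based randomised extension. The all-zero-auxiliary extension $A(x_f)=0$ on $\mathbb{Y}_k$ satisfies every AA edge and, in expectation over uniformly random folded $P$, exactly half of each PA edge, giving $1-\rs\le G(E_{PA})/2$. The match-the-closest-primary extension (well-defined for $k\ge 3$, since the $2^{k-1}$ pairwise distance between distinct linear functions forces any length-$1$ neighbour of a primary to have a \emph{unique} primary neighbour) satisfies every PA edge and at least half of each AA edge. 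These extremes cover the regimes $G(E_{PA})$ small and $G(E_{PA})$ large respectively; for the intermediate range I would use the randomised extension that samples $\alpha$ with probability $\hat f_\alpha^2$ and sets $A(x_f)=P(x_{\mathrm{sgn}(\hat f_\alpha)\chi_\alpha})$, which by Parseval reproduces $P$ exactly on primaries.

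The principal technical obstacle is the intermediate regime for $k\ge 4$: one must bound, for each length-$1$ edge $\{f_1,f_2\}$, the overlap of the two induced sampling distributions on signed primaries, which is a sum over $\alpha$ of products of squared Fourier coefficients. Because a single-input perturbation shifts every Fourier coefficient by exactly $\pm 2/2^k$, this overlap is $1-O(2^{-k})$, and a second-moment argument in the spirit of the one used for Lemma \ref{thm_important} should then deliver the uniform bound $1-\rs(\tilde G)\le 2\cdot 2^{-k}$, completing (c).
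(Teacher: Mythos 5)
Your parts (a) and (b) are correct and essentially identical to the paper's proof. For (a) the paper additionally uses the degree of freedom $\val (A, G) = \val (1 + A, G)$ to reconcile the fact that the outer expectations in $s(G)$ and $\rs(G)$ range over slightly different sets of folded assignments ($\Ffold(\mathbb{X}_k)$ versus $\Ffold(\mathbb{X}_k \cup \{x_1,x_{-1}\})$); you gloss over this, but it is harmless. For (b) your subdivision of each edge into a path of length-$2^{-k}$ edges, the rescaling by $2^k(1-c(G))$, and the lifting of feasible flows along the subdivision paths is exactly the paper's construction.

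Part (c) has a genuine gap. You attempt to lower bound $\rs(\tilde G)$ by exhibiting good assignments, but none of your three extensions can reach the required bound $1-\rs(\tilde G)\leqslant 2^{1-k}$. The two ``extreme'' extensions only give $1-\rs(\tilde G)\leqslant \min(G(E_{PA}),G(E_{AA}))/2\leqslant 1/4$, with no dependence on $k$, so they cover no regime at all relative to the target. The Fourier-sampling extension cannot rescue this: your claim that the overlap of the two spectral-sample distributions on a length-$2^{-k}$ edge is $1-O(2^{-k})$ is false. A single truth-table flip shifts every one of the $2^k$ Fourier coefficients by $\pm 2\cdot 2^{-k}$, so the total variation distance between $\{\hat{f_1}_\alpha^2\}_\alpha$ and $\{\hat{f_2}_\alpha^2\}_\alpha$ is of order $2^{-k}\sum_\alpha|\hat{f_1}_\alpha+\hat{f_2}_\alpha|$, and the $L^1$ Fourier norm can be as large as $2^{k/2}$ (Cauchy--Schwarz with Parseval). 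Even under an optimal coupling this yields a per-edge violation probability of order $2^{-k/2}$, hence at best $1-\rs(\tilde G)=O(2^{-k/2})$, which gives a ratio of order $2^{k/2}$ rather than the constant $2$. The step you defer to a second-moment argument is therefore not merely unfinished: the quantity you would need to control is of the wrong order of magnitude.

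The paper's proof of (c) avoids assignments entirely and argues on the flow side. The quantity $1-\rs(\tilde G)$ is an expected max-flow on a graph of total capacity $1$ whose sources and sinks are signed affine functions, any two distinct ones being at normalised Hamming distance at least $1/2$. Since every edge of $\tilde G$ has length exactly $2^{-k}$, every source-to-sink path contains at least $2^{k-1}$ edges; decomposing a feasible flow into paths, each unit of flow consumes at least $2^{k-1}$ units of capacity, so the max flow is at most $2^{1-k}$ for every source/sink placement, and the ratio is at most $2^{1-k}/2^{-k}=2$. If you insist on the assignment (cut) side, the dual statement you would need is a cut of capacity at most $2^{1-k}$ for every placement, which your constructions do not provide.
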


\begin{proof}
  \
  \begin{description}
  \item[(a)] Note that interpreting $x_1$ and $x_{- 1}$ as being primary
  variables do not affect soundness, i.e.
  \begin{eqnarray}
    s (G) & = & \underset{P \in \Ffold{} (\mathbb{X}_k \cup
    \{ x_1, x_{- 1} \})}{\expect} \max_{\begin{array}{l}
      A \in \Ffold{} (\mathbb{X}_k \cup \mathbb{Y}_k),\\
      A (x) = P (x), x \in \mathbb{X}_k \cup \{ x_1, x_{- 1} \}
    \end{array}} \val (A, G) .  \label{eq:sound4}
  \end{eqnarray}
  The reason for this is that there exists a degree of freedom in the choice
  of $A$ since for any $A$, $\val (A, G) = \val (1 + A, G)$. This
  means for example that we can add one extra constraint like $A (x_1) = 1 + A
  (x_{- 1}) = 1$ to the definition of $s (G)$ without affecting its value.
  
  Comparing \eqref{eq:sound4} and the definition of relaxed soundness, we can
  clearly see that $s (G) \leqslant \rs (G)$ since the relaxed soundness
  is a less constrained maximisation problem compared to the right hand side
  of \eqref{eq:sound4}.
  
  \item[(b)] This proof is analogous to the proof of {\cite[Proposition
  2.29]{Has2}}. Note that by definition $1 - c (\tilde{G})$ is the average
  length of edges $(f_1, f_2)$ of the gadget $\tilde{G}$, weighted by
  $\tilde{G} (f_1, f_2)$. For $\tilde{G}$ to have completeness $1 - 2^{- k}$,
  the edges in $\tilde{G}$ need to have an average length of $2^{- k} .$ Since
  there are no edges shorter than $2^{- k}$, $\tilde{G}$ can only put non-zero
  capacity on edges of length exactly $2^{- k}$.
  
  Construct $\tilde{G}$ using the following procedure. Start with $G$. Split
  up each edge $(f_1, f_2)$ in $G$ into an arbitrary path starting at $f_1$,
  ending at $f_2$, with edges of length $2^{- k}$, where the sum of lengths of
  edges in the path should be equal to the length of the original edge $(f_1,
  f_2)$. Remove the capacity of edge $(f_1, f_2)$ and give each edge in the
  path the same capacity as the capacity of the original edge $(f_1, f_2)$.
  This will increase the total capacity of the graph by a factor of $(1 - c
  (G)) / 2^k$. As a final step, normalize the capacity by dividing the
  capacity of all edges by $(1 - c (G)) / 2^k$. Let the resulting graph be
  $\tilde{G}$. Note that $\tilde{G}$ is a \hadkttlint{}
  consisting only of edges of length $2^{- k}$, so its completeness is $1 -
  2^{- k}$.
  
  Recall that $1 - \rs (G)$ can be interpreted as the expected value of
  a \maxflow{} problem on a fully connected $2^k$-dimensional hypercube, where
  the placements of sources and sinks have been randomised. Note that any
  feasible flow $\omega$ of $G$, when scaled down by a factor of $(1 - c (G))
  / 2^{- k}$, corresponds to a feasible flow of $\tilde{G}$. This implies that
  $(1 - \rs (G)) \leqslant (1 - \rs (\tilde{G})) (1 - c (G)) /
  2^{- k}$.
  
  The conclusion from this is that
  \begin{eqnarray*}
    \frac{1 - \rs (\tilde{G})}{1 - c (\tilde{G})} = \frac{1 - \rs
    (\tilde{G})}{2^{- k}} & \geqslant & \frac{1 - \rs (G)}{1 - c (G)} .
  \end{eqnarray*}
  \item[(c)] Let $\tilde{G}$ be the gadget from b). Recall that $1 -
  \rs (\tilde{G})$ can be interpreted as the expected value of a
  \maxflow{} problem on a fully connected $2^k$-dimensional hypercube, where the
  placements of sources and sinks have been randomised. The capacities of this
  flow graph sum to $1$.
  
  Note that the sources and sinks correspond to affine functions, which have a
  normalised Hamming distance of at least $1 / 2$. Furthermore, since all
  edges in $\tilde{G}$ has length $2^{- k}$, any path in $\tilde{G}$ between a
  source and a sink must contain at least $2^{k - 1}$ edges.
  
  For any flow graph, if all paths between sources and sinks contain at least
  $2^{k - 1}$ edges, and the sum of capacity over all edges in the graph is
  $1$, then the maximum flow is at most $2^{1 - k}$. So $1 - \rs
  (\tilde{G}) \leqslant 2^{1 - k}$, which implies that
  \begin{eqnarray*}
      \frac{1 - \rs (\tilde{G})}{1 - c (\tilde{G})} \leqslant \frac{2^{1
      - k}}{2^{- k}} = 2.
  \end{eqnarray*}
  \end{description}
\end{proof}

\begin{remark}
  Since the relaxed soundness is an upper bound of the true soundness, it
  follows that the \NPhardness{} result of \maxtlint{} as stated in
  Proposition \ref{prop:gap} also holds for $s = \rs (G) .$
\end{remark}

\section{Affine maps and lifts \label{sec:affine}\label{sec4}}

Recall that the $\rsLP(G)$ can be interpreted as the expected value of a
\maxflow{} problem with a randomised source/sink placement over a fully
connected $2^k$-dimensional hypercube, where the nodes are indexed by Boolean
functions $f \in \Fk_k$. The source/sink nodes are indexed by affine
Boolean functions. In order to be able to describe the symmetries of these
graphs, we want to study mappings $M : \Fk_k \rightarrow
\Fk_k$ with the following properties:
\begin{enumerate}
  \item Source and sink nodes map to source and sink nodes, i.e. if $f$ is an
  affine function then $M (f)$ is also an affine function.
  
  \item The length of all edges $\{ v_{f_1}, v_{f_2} \}$ are preserved by the
  mapping, i.e. $\dist (M (f_1), M (f_2)) = \dist (f_1, f_2)$.
\end{enumerate}
There is a natural choice of mappings from $\Fk_k \rightarrow
\Fk_k$ for which Property 1 and 2 hold. Additionally as a bonus, the
same natural choice of mappings can also be extended to construct mappings
from $\Fk_k \rightarrow \Fk_{k'}, k \leqslant k'$, and still
have that both Property 1 and 2 hold. This can then be used to embed the
$2^k$-dimensional hypercube in the $2^{k'}$-dimensional hypercube.

\begin{definition}
  \label{def:affinemap}Let $M_{A, b, \beta, c} : \Fk_k \rightarrow
  \Fk_{k'}$ be defined as
  \begin{eqnarray*}
    M_{A, b, \beta, c} (f) (y) & = & f (Ay + b)  (- 1)^c \chi_{\beta} (y),
  \end{eqnarray*}
  where $k, k' \in \integers_{> 0}$, $k \leqslant k'$, $y \in
  \bool{}^{k'}$, $A \in \bool{}^{k \times k'}$ is a full rank
  matrix, $b \in \boolk{}$, $c \in \bool{}$ and $\beta \in \boolkp$. Let
  $\Mk_{k \rightarrow k'}$ denote the set of all maps $M_{A, b, \beta,
  c}$ from $\Fk_k \rightarrow \Fk_{k'}$. For convenience, we
  often denote $M_{A, b, \beta, c}$ by $M$, where the $A, b, \beta, c$ are all
  implicit.
\end{definition}

Since these mappings are reminiscent of affine maps from linear algebra, we
call them \emph{affine maps}. However, they are not affine maps in the
classical sense.

The function $M (f) \in \Fk_{k'}$ is called the \emph{$M$-lift of
$f$}. It is not hard to see that the $M$-lift of an affine function is an
affine function. More generally, $M$-lifts always preserve the dimension of
Boolean functions.

\begin{proposition}
  \label{prop:dim}Given $f \in \Fk_k$ and $M \in \Mk_{k
  \rightarrow k'}$, $k \leqslant k'$, then $\dimension (M (f)) = \dimension (f)$.
\end{proposition}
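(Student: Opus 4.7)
The plan is to directly compute the Fourier transform of $M(f)$ in terms of the Fourier transform of $f$, and then observe that the $M$-lift acts on the Fourier support by an injective affine map, which preserves affine dimension.

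First, I would start from the definition $M(f)(y) = f(Ay + b)(-1)^c \chi_\beta(y)$ and expand $f$ via its Fourier series (Proposition \ref{prop:finv}). Using the identity $\chi_\alpha(Ay + b) = (-1)^{(\alpha, b)} \chi_{A^\top \alpha}(y)$, which follows from the bilinearity of $(\cdot,\cdot)$, I get
\begin{equation*}
M(f)(y) \;=\; \sum_{\alpha \in \boolk} (-1)^{c + (\alpha,b)} \, \hat{f}_\alpha \, \chi_{A^\top \alpha + \beta}(y).
\end{equation*}
Because the expression $\chi_{A^\top \alpha + \beta}$ is itself a linear Boolean function on $\boolkp$, reading off Fourier coefficients of $M(f)$ on the right side requires that the map $\alpha \mapsto A^\top \alpha + \beta$ be injective, so that no two terms collapse into the same $\chi_\gamma$. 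Since $A$ has full rank $k$ and $k \leqslant k'$, $A^\top : \boolk \to \boolkp$ is injective, and hence so is the affine map $\alpha \mapsto A^\top \alpha + \beta$.

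From this it follows that
\begin{equation*}
\widehat{M(f)}_\gamma \;=\; \begin{cases} (-1)^{c + (\alpha,b)} \hat{f}_\alpha & \text{if } \gamma = A^\top \alpha + \beta \text{ for some } \alpha, \\ 0 & \text{otherwise.} \end{cases}
\end{equation*}
In particular, $\widehat{M(f)}_\gamma \neq 0$ if and only if $\gamma = A^\top \alpha + \beta$ for some $\alpha$ with $\hat{f}_\alpha \neq 0$. Thus the Fourier support of $M(f)$ is the image of the Fourier support of $f$ under the injective affine map $T(\alpha) = A^\top \alpha + \beta$.

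Finally, since $T$ is injective and affine, it carries the affine span of the support of $\hat f$ bijectively onto the affine span of the support of $\widehat{M(f)}$, and affine bijections preserve the dimension of affine subspaces. Therefore $\affine(M(f)) = T(\affine(f))$ and $\dimension(M(f)) = \dimension(f)$, as required. The only slightly subtle step is ensuring that the index collapse does not occur when reading off Fourier coefficients, but this is exactly the full-rank condition built into the definition of $\Mk_{k\to k'}$; the rest is a routine Fourier computation.
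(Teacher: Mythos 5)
Your proof is correct and follows essentially the same route as the paper: a direct Fourier computation showing $M(f)(y) = (-1)^c\sum_\alpha \chi_{A^T\alpha+\beta}(y)\,\hat f_\alpha\,\chi_b(\alpha)$, followed by the observation that full rank of $A$ makes $\alpha \mapsto A^T\alpha+\beta$ injective, so it carries $\affine(f)$ onto $\affine(M(f))$ preserving dimension. Your version just spells out the injectivity/no-collapse point that the paper leaves implicit.
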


\begin{proof}
  It follows from a direct calculation that
  \begin{eqnarray*}
    M_{A, b, \beta, c} (f) (y) & = & (- 1)^c \sum_{\alpha \in \{ 0, 1 \}^k}
    \chi_{A^T \alpha + \beta} (y)  \hat{f}_{\alpha} \chi_b (\alpha) .
  \end{eqnarray*}
  This shows that the affine mapping $M$ moves $\affine (f)$ to
  $\affine (M (f)) = \{ A^T \alpha + \beta : \alpha \in \affine
  (f) \}$. Furthermore, since $A$ is a full rank matrix, $\dimension (f) = \dimension (M
  (f))$. 
\end{proof}

Affine maps also preserve the (normalised Hamming) distance of affine
functions.

\begin{proposition}
  \label{prop:dist}Given $f_1, f_2 \in \Fk_k$ and $M \in
  \Mk_{k \rightarrow k'}$, $k \leqslant k'$, then $\dist (M
  (f_1), M (f_2)) = \dist (f_1, f_2)$.
\end{proposition}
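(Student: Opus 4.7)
The plan is to reduce the claim to a straightforward change of variables, using the full-rank hypothesis on $A$ to relate a sum over $\mathbb{F}_2^{k'}$ to one over $\mathbb{F}_2^k$. The key observation is that when we form the product $M(f_1)(y) \cdot M(f_2)(y)$, the ``twist'' factor $(-1)^c \chi_\beta(y)$ appears once in each factor and squares to $1$, so it disappears:
\[
  M(f_1)(y) \cdot M(f_2)(y) \;=\; f_1(Ay+b)\, f_2(Ay+b) \cdot \bigl((-1)^c \chi_\beta(y)\bigr)^2 \;=\; f_1(Ay+b)\, f_2(Ay+b).
\]
Therefore the ``twist'' part of an affine map is irrelevant for computing distance, and the claim reduces to understanding the effect of the substitution $y \mapsto Ay+b$.

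Next I would plug this into the definition of $\dist$ and compute
\[
  \dist(M(f_1), M(f_2)) \;=\; \frac{1}{2^{k'}} \sum_{y \in \mathbb{F}_2^{k'}} \frac{1 - f_1(Ay+b)\, f_2(Ay+b)}{2}.
\]
The only remaining step is to count how many times each value $x \in \mathbb{F}_2^k$ is attained by $y \mapsto Ay+b$. Since $A \in \mathbb{F}_2^{k \times k'}$ is full rank and $k \leq k'$, the linear map $y \mapsto Ay$ is surjective onto $\mathbb{F}_2^k$ and its kernel has size $2^{k'-k}$; translation by $b$ does not affect this, so every $x \in \mathbb{F}_2^k$ has exactly $2^{k'-k}$ preimages. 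Substituting, the sum becomes
\[
  \frac{2^{k'-k}}{2^{k'}} \sum_{x \in \mathbb{F}_2^k} \frac{1 - f_1(x)\, f_2(x)}{2} \;=\; \frac{1}{2^k} \sum_{x \in \mathbb{F}_2^k} \frac{1 - f_1(x)\, f_2(x)}{2} \;=\; \dist(f_1, f_2),
\]
which finishes the proof.

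There is essentially no obstacle here: the statement is really just the observation that affine maps, by design, act as measure-preserving covers of $\mathbb{F}_2^k$ by $\mathbb{F}_2^{k'}$ up to a global $\pm 1$ twist. The only point that needs any care is invoking the full-rank hypothesis on $A$ to guarantee uniform fiber size; this mirrors the role played by the same hypothesis in the preceding Proposition \ref{prop:dim}, where it ensured that affine span dimensions were preserved.
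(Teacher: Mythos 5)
Your proof is correct and matches the paper's argument essentially line for line: cancel the twist factor $(-1)^c\chi_\beta(y)$ since it appears squared, then use the full-rank hypothesis on $A$ to conclude that every $x \in \bool^k$ has exactly $2^{k'-k}$ preimages under $y \mapsto Ay+b$. Nothing further is needed.
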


\begin{proof}
  Let $M = M_{A, b, \beta, c}$. Note that $\dist (M (f_1), M (f_2))$ only depends on $A$ and $b$ since
  \begin{eqnarray*}
    \dist (M (f_1), M (f_2)) & = & \frac{1}{2^{k'}} \sum_{y \in \boolkp}
    \frac{1 - M (f_1) (y) M (f_2) (y)}{2}\\
    & = & \frac{1}{2^{k'}} \sum_{y \in \boolkp} \frac{1 - f_1 (Ay + b) f_2
    (Ay + b)}{2} .
  \end{eqnarray*}
  Furthermore, since $A$ is a full rank $k \times k'$ Boolean matrix, the
  kernel of $A$ has dimension $k' - k$ and size $2^{k' - k}$, so
  \begin{eqnarray*}
    \sum_{y \in \{ 0, 1 \}^{k'}} f_1 (Ay + b) f_2 (Ay + b) & = & 2^{k' - k}
    \sum_{x \in \{ 0, 1 \}^k} f_1 (x) f_2 (x) .
  \end{eqnarray*}
  This shows that $\dist (M (f_1), M (f_2)) = \dist (f_1, f_2)$.
\end{proof}

The last notable property of the affine maps is that they form a group under
composition. This property is needed to be able to apply the techniques from
Appendix \ref{sec:flowsym} to the $\rsLP(G)$ and to the $\rsinfLP(G)$ in
order to ``compress'' them.

\begin{proposition}
  \label{prop:group}$\Mk_{k \rightarrow k}$ under composition forms a group.
\end{proposition}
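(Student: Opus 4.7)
The plan is to verify the four group axioms for $(\Mk_{k \rightarrow k}, \circ)$ directly. Associativity is inherited from composition of functions, so the real content is closure, an identity element, and inverses. Everything reduces to one key computation: expanding the composition $M_{A_1, b_1, \beta_1, c_1} \circ M_{A_2, b_2, \beta_2, c_2}$ and rewriting it in the canonical form of Definition \ref{def:affinemap}.

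For closure, I would apply Definition \ref{def:affinemap} twice. Starting from
\[
(M_1 \circ M_2)(f)(y) \;=\; M_2(f)(A_1 y + b_1) \cdot (-1)^{c_1} \chi_{\beta_1}(y)
\]
and unfolding the inner application, the argument of $f$ collapses to $(A_2 A_1) y + (A_2 b_1 + b_2)$, while the overall sign picks up an additional factor $(-1)^{c_2} \chi_{\beta_2}(A_1 y + b_1)$. The one manipulation that actually requires thought is the identity
\[
\chi_{\beta_2}(A_1 y + b_1) \;=\; \chi_{A_1^T \beta_2}(y) \cdot (-1)^{(\beta_2, b_1)},
\]
which follows from bilinearity of the mod-$2$ inner product. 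Collecting the pieces yields
\[
M_1 \circ M_2 \;=\; M_{\,A_2 A_1,\; A_2 b_1 + b_2,\; A_1^T \beta_2 + \beta_1,\; c_1 + c_2 + (\beta_2, b_1)\,}.
\]
Since $A_1$ and $A_2$ are invertible $k \times k$ matrices over $\bool$, so is $A_2 A_1$, giving closure.

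From this composition formula both the identity and the inverses drop out. Setting $(A, b, \beta, c) = (I, 0, 0, 0)$ gives a map that acts as the identity on $\Fk_k$. For the inverse of $M_{A, b, \beta, c}$, I would simply solve the four coordinate equations coming from the composition formula; this yields $M_{\,A^{-1},\; A^{-1} b,\; (A^T)^{-1} \beta,\; c + (\beta, A^{-1} b)\,}$, which lies in $\Mk_{k \rightarrow k}$ since $A^{-1}$ is again full rank. A short symmetric computation confirms that this element is a two-sided inverse, using $x + x = 0$ over $\bool$ and $((A^T)^{-1}\beta,\, b) = (\beta,\, A^{-1} b)$.

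The only mildly delicate point in the whole argument is keeping correct track of the scalar $(-1)^{(\beta_2, b_1)}$ that arises when the character $\chi_{\beta_2}$ is evaluated at a translated argument. This is the term that couples the $\beta$-coordinate of one factor to the $c$-coordinate of the composition; omitting it would make the composition fail to land in the canonical form and would prevent the inverse computation from closing. Beyond this bookkeeping, the proof is a routine substitution.
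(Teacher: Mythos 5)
Your proposal is correct and follows essentially the same route as the paper: derive the explicit composition formula for two affine maps (your formula $M_{A_2A_1,\,A_2b_1+b_2,\,A_1^T\beta_2+\beta_1,\,c_1+c_2+(\beta_2,b_1)}$ agrees with the paper's, including the cross term $(\beta, b')$ coupling one map's $\beta$ to the other's $b$), and then read off the two-sided inverse. The paper merely states these formulas, whereas you also sketch their derivation and note the identity element explicitly, but the argument is the same.
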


\begin{proof}
  The composition of two affine maps $M_{A', b', \beta', c'} \circ M_{A, b,
  \beta, c}$, is an affine map $M_{A'', b'', \beta'', c''}$, where
  \begin{eqnarray*}
    A'' & = & AA',\\
    b'' & = & Ab' + b,\\
    \beta'' & = & (A')^T \beta + \beta',\\
    c'' & = & (b', \beta) + c' + c.
  \end{eqnarray*}
  Furthermore, the left and right inverse of an affine map $M_{A, b, \beta, c}$
  is given by $M_{A', b', \beta', c'}$ where
  \begin{eqnarray*}
    A' & = & A^{- 1},\\
    b' & = & A^{- 1} b,\\
    \beta' & = & (A^{- 1})^T \beta,\\
    c' & = & c + (A^{- 1} b, \beta) .
  \end{eqnarray*}
  This shows that $\Mk_{k \rightarrow k}$ forms a group under
  composition.  
\end{proof}

\subsection{\texorpdfstring{$M$}{M}-lifts of sink and sources}

Recall that the source/sink placements of the $\rsLP(G)$ and the
$\rsinfLP(G)$ are described using a Boolean function $g \in \Fk_k$,
\begin{eqnarray*}
  g (\alpha) & = & \left\{\begin{array}{ll}
    1 & \text{iff } v_{\chi_{\alpha}}  \text{ is a sink},\\
    - 1 & \text{iff } v_{\chi_{\alpha}}  \text{ is a source.}
  \end{array}\right.
\end{eqnarray*}
Note that $M$-lifts move the sink and source nodes. If $k = k'$, then the
$M$-lift permutes the sink and source nodes. If $k < k'$, then the $M$-lift
``lifts'' the sink and source nodes onto a higher dimensional hypercube. This
means that there exists multiple different source/sink placements $g' \in
\Fk_{k'}$ that all match the lifted positions of the sinks and
sources. The condition for when an $M$-lift of a source/sink placement $g \in
\Fk_k$ is described by a source/sink placement $g' \in
\Fk_{k'}$ is given by the following proposition.

\begin{proposition}
  \label{prop_sinksourcelift}An $M$-lift will map sink nodes in
  $\Fk_{k}$ onto sink nodes of $\Fk_{k'}$ and source nodes in
  $\Fk_k$ onto source nodes in $\Fk_{k'}$ if and only if
  \begin{eqnarray*}
    M_{A^T, \beta, b, c} (g') & = & g.
  \end{eqnarray*}
\end{proposition}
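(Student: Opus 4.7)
The plan is to unwind both sides of the claimed equivalence into the same scalar identity parameterised by $\alpha \in \boolk$, using the explicit formula for $M$-lifts obtained in the proof of Proposition \ref{prop:dim}. First I would compute $M(\chi_\alpha)$ in closed form: specialising the Fourier expansion
\[
    M_{A,b,\beta,c}(f)(y) \;=\; (-1)^c \sum_{\alpha' \in \boolk} \chi_{A^T\alpha' + \beta}(y)\, \hat{f}_{\alpha'}\, \chi_b(\alpha')
\]
to $f = \chi_\alpha$ (for which $\hat{f}_{\alpha'} = \delta_{\alpha,\alpha'}$) yields
\[
    M(\chi_\alpha) \;=\; (-1)^{c+(\alpha,b)}\, \chi_{A^T\alpha+\beta},
\]
and hence $M(g(\alpha)\chi_\alpha) = g(\alpha)(-1)^{c+(\alpha,b)}\, \chi_{A^T\alpha+\beta}$.

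Next I would interpret the source/sink preservation condition node-wise. Recalling from the remark following Definition \ref{sound} that in the $g$-placement the sinks of the $k$-dimensional problem sit at the nodes $v_{g(\alpha)\chi_\alpha}$ and the sources at $v_{-g(\alpha)\chi_\alpha}$, and likewise in the $k'$-dimensional problem with $g'$, the $M$-lift preserves the source/sink labelling precisely when, for every $\alpha \in \boolk$, the image node $M(g(\alpha)\chi_\alpha)$ coincides with the sink/source indicator at $A^T\alpha + \beta$ in the $k'$-problem, i.e.
\[
    g(\alpha)(-1)^{c+(\alpha,b)}\, \chi_{A^T\alpha+\beta} \;=\; g'(A^T\alpha+\beta)\, \chi_{A^T\alpha+\beta}.
\]
Stripping off the common factor $\chi_{A^T\alpha+\beta}$, this reduces to the scalar equation $g'(A^T\alpha+\beta) = g(\alpha)(-1)^{c+(\alpha,b)}$ for every $\alpha \in \boolk$.

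Finally I would recognise the right-hand side as an affine map evaluated at $g'$. Applying Definition \ref{def:affinemap} with parameters $(A^T, \beta, b, c)$ gives, for $\alpha \in \boolk$,
\[
    M_{A^T,\beta,b,c}(g')(\alpha) \;=\; g'(A^T\alpha + \beta)\, (-1)^c \chi_b(\alpha) \;=\; g'(A^T\alpha+\beta)(-1)^{c+(\alpha,b)}.
\]
Multiplying the scalar equation by $(-1)^{c+(\alpha,b)}$ (an involution) shows that it is equivalent to $M_{A^T,\beta,b,c}(g')(\alpha) = g(\alpha)$ for every $\alpha$, i.e.\ $M_{A^T,\beta,b,c}(g') = g$. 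Since every step is an if-and-only-if, this gives the proposition in both directions simultaneously.

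The main obstacle is mostly bookkeeping: checking the symmetry $(\alpha,b)=(b,\alpha)$ of the inner product over $\bool$, being careful that the roles of $(A,b,\beta,c)$ swap to $(A^T,\beta,b,c)$ in the correct way (the column-vector $b$ of the original map becomes the frequency vector $\beta' := b$ of the inverse-direction map, and vice versa), and noting that although $M_{A^T,\beta,b,c}$ runs in the direction $\Fk_{k'} \to \Fk_k$ (so the rank condition on $A$ guarantees $A^T$ is full column rank), the formula of Definition \ref{def:affinemap} still makes syntactic sense and is the correct object to compare against $g$ pointwise on $\boolk$.
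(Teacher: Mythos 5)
Your proposal is correct and follows essentially the same route as the paper's proof: compute $M(\chi_\alpha) = (-1)^{c+(b,\alpha)}\chi_{A^T\alpha+\beta}$, read off the pointwise sink/source condition $g(\alpha) = g'(A^T\alpha+\beta)(-1)^c\chi_b(\alpha)$, and recognise it as $M_{A^T,\beta,b,c}(g') = g$. Your version is just slightly more explicit about the node-wise bookkeeping and the direction of the map $M_{A^T,\beta,b,c}$, which the paper leaves implicit.
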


\begin{proof}
  \
  
  Note that the $M_{A, b, \beta, c}$-lift of $\chi_{\alpha}$ is $M_{A, b,
  \beta, c} (\chi_{\alpha}) = \chi_{A^T \alpha + \beta} (x)  (- 1)^c \chi_b
  (\alpha)$. Using the source/sink placement $g'$ we can tell whether a node
  $v_{\chi_{\alpha}}$ is lifted onto a sink node or a source node,
  \begin{eqnarray*}
    g' (A^T \alpha + \beta)  (- 1)^c \chi_b (\alpha) & = &
    \left\{\begin{array}{ll}
      1 & \text{iff } v_{M_{A, b, \beta, c} (\chi_{\alpha})}  \text{ is a sink according to } g',\\
      - 1 & \text{iff } v_{M_{A, b, \beta, c} (\chi_{\alpha})}  \text{ is a
      source according to } g' .
    \end{array}\right.
  \end{eqnarray*}
  This implies that the sufficient and necessary condition to make all sinks
  in $\Fk_k$ to be $M_{A, b, \beta, c}$-lifted to sinks in
  $\Fk_{k'}$ and all sources in $\Fk_k$ to be $M_{A, b, \beta,
  c}$-lifted to sources in $\Fk_{k'}$, is that
  \begin{eqnarray*}
    g (\alpha) & = & g' (A^T \alpha + \beta) (- 1)^c \chi_b (\alpha)
  \end{eqnarray*}
  for all $\alpha \in \boolk$. This is identical to requiring that $M_{A^T,
  \beta, b, c} (g') = g$. 
\end{proof}

\begin{definition}
  The operator $M_{A^T, \beta, b, c}$ is denoted by $M^{\#}_{A, b, \beta, c}$.
\end{definition}

\subsection{Lifting gadgets and flows}

It is possible to extend the definition of $M$-lifting to \hadkttlint{} gadgets $G$ by defining $M \cdot G$ as
\begin{eqnarray*}
  (M \cdot G) (f_1', f_2') & = & \sum_{\begin{array}{l}
    f_1 \in M^{- 1} (f'_1),\\
    f_2 \in M^{- 1} (f'_2)
  \end{array}} G (f_1, f_2) .
\end{eqnarray*}
This moves the capacity $G (f_1, f_2)$ of edge $\{ v_{f_1}, v_{f_2} \}$ onto
edge $\{ v_{M (f_1)}, v_{M (f_2)} \}$. Furthermore, let the \emph{full $k \rightarrow k'$ lift} of $G$ be defined as the average of all
possible $M$-lifts, i.e.
\begin{eqnarray*}
  \lift_{k \rightarrow k'} (G) & = & \frac{1}{| \Mk_{k
  \rightarrow k'} |} \sum_{M \in \Mk_{k \rightarrow k'}} (M \cdot G) .
\end{eqnarray*}
Completely analogue to the definition of $M$-lifts of gadgets, let the
$M$-lift of a flow $w$ of the $\rs (G)$ LP be defined as
\begin{eqnarray*}
  (M \cdot w) (f'_1, f_2', g') & = & \sum_{\begin{array}{l}
    f_1 \in M^{- 1} (f'_1),\\
    f_2 \in M^{- 1} (f'_2)
  \end{array}} w (f_1, f_2, M^{\#} (g')),
\end{eqnarray*}
and let the full $k \rightarrow k'$ lift of $w$ be defined as
\begin{eqnarray*}
  \lift_{k \rightarrow k'} (w) & = & \frac{1}{| \Mk_{k
  \rightarrow k'} |} \sum_{M \in \Mk_{k \rightarrow k'}} (M \cdot w) .
\end{eqnarray*}
By connecting these two concepts of lifting gadgets and flows, we can show
the following proposition.

\begin{proposition}
  \label{prop:Gprim}The full lift of $G$ is a \hadkttlint{} gadget $G'$ where $c (G') = c (G)$ and $\rs (G') \leqslant
  \rs (G)$.
\end{proposition}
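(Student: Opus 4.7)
My plan is to establish the three claims separately and to reduce them to statements about a single $M\cdot G$, then average using linearity. For the claim that $G'$ is a valid \hadkttlint[k']{} gadget, I would note that each $M$ is injective on $\Fk_k$ (the Fourier expansion from the proof of Proposition \ref{prop:dim} shows $M$ is determined by $A^T$, which is injective since $A$ has full rank), so $M\cdot G$ is a probability distribution on $\binom{\Fk_{k'}}{2}$. By Proposition \ref{prop:dist} an antipodal pair $\{f', -f'\}$ has preimage (under $M$) equal to either the empty set or a single antipodal pair, and since $G$ vanishes on antipodal pairs, so does $M\cdot G$. Taking the average $G' = \lift_{k\to k'}(G)$ inherits both properties. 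For $c(G') = c(G)$, Definition \ref{def:comp} expresses the completeness as one minus the expected edge length; since Proposition \ref{prop:dist} preserves lengths, $c(M\cdot G) = c(G)$ for every $M$, and linearity gives $c(G') = c(G)$.

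The main content is the inequality $\rs(G') \leq \rs(G)$. My plan is to show that for any feasible flow $w$ of $\rsLP(G)$, the flow $w' = \lift_{k\to k'}(w)$ is feasible for $\rsLP(G')$ and satisfies $\expect_{g'}\val_{g'}(w') = \expect_g\val_g(w)$, which immediately yields the desired inequality. By linearity in $M$, it suffices to verify the analogous statements for each $M\cdot w$ against $M\cdot G$. Constraint \eqref{eq1} reduces to a sum of the corresponding constraints for $w$ over preimage pairs, giving exactly $(M\cdot G)(f_1', f_2')$ on the right. Constraint \eqref{eq2} is trivial at nodes $f'$ outside the image of $M$ since both $\fout$ and $\fin$ vanish there; for $f' = M(f)$ one computes $\fout_{M\cdot w}(f', g') = \fout_w(f, M^\# g')$ and likewise for $\fin$, and Proposition \ref{prop:dim} guarantees $\dimension(f') = \dimension(f)$, so the conservation law at dimension $\geq 1$ nodes of $w$ transfers to the correct nodes of $M\cdot w$.

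For the value calculation, I would unfold $\val_{g'}(M\cdot w) = \sum_{\alpha\in\boolkp}[\fout_{M\cdot w}(g'(\alpha)\chi_\alpha, g') - \fin_{M\cdot w}(g'(\alpha)\chi_\alpha, g')]$ and argue that only those $\alpha$ for which $g'(\alpha)\chi_\alpha$ lies in the image of $M$ contribute. The explicit formula $M(\chi_{\alpha'}) = (-1)^c\chi_b(\alpha')\chi_{A^T\alpha' + \beta}$ identifies this contributing set as exactly $\alpha \in A^T\boolk + \beta$. Setting $g := M^\# g'$, Proposition \ref{prop_sinksourcelift} identifies $g'(\alpha)\chi_\alpha$ with $M(g(\alpha')\chi_{\alpha'})$ for the corresponding $\alpha'$, and the sum collapses to $\val_g(w)$. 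Finally, since $A^T$ is injective, the map $M^\# : \Fk_{k'}\to\Fk_k$ pushes the uniform measure to the uniform measure (each preimage has the same cardinality $2^{2^{k'}-2^k}$), so $\expect_{g'}\val_{g'}(M\cdot w) = \expect_g\val_g(w)$. Averaging over $M$ preserves this equality for $\lift(w)$, and the proof is complete.

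The step I expect to be the main obstacle is the bookkeeping in the value calculation, specifically keeping straight how $M$ and its dual $M^\#$ interact with source/sink placements and verifying that the uniform pushforward identity holds. Everything else is a routine unwinding of the definitions together with applications of Propositions \ref{prop:dist}, \ref{prop:dim}, and \ref{prop_sinksourcelift}.
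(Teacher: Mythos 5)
Your proposal is correct and follows essentially the same route as the paper: lift the optimal feasible flow of $\rsLP(G)$ together with the gadget, observe that feasibility (constraints \eqref{eq1} and \eqref{eq2}) and the value are preserved, and conclude $\rs(G')\leqslant\rs(G)$, with completeness preserved because affine maps preserve distances. The paper's proof is just a terser statement of these same facts; your version fills in the bookkeeping (injectivity of $M$, the identity $\fout_{M\cdot w}(M(f),g')=\fout_w(f,M^{\#}(g'))$, and the uniform pushforward under $M^{\#}$) that the paper leaves implicit.
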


\begin{proof}
  Let $w$ be a feasible flow of $G$ and let $w'=\lift_{k \rightarrow
  k'} (w)$. Note that $w'$ is a feasible flow of $G'$ since the capacity of
  $G$ is lifted together with the flow $w$. So constraints $\eqref{eq1}$ and
  $\eqref{eq2}$ are satisfied by $w'$. Additionally,
  \begin{eqnarray*}
    \expect_{g \in \Fk_k} \val_g (w) & = & \expect_{g' \in
    \Fk_{k'}} \val_{g'} (w') .
  \end{eqnarray*}
  since any lift preserves the amount of flow going in and out of sink nodes and source nodes.
\end{proof}

The final Proposition that we need for Appendix \ref{sec_proof} is that the
full lift of a leaky flow $w$ of the $\rsLP(G)$ is a leaky flow of the
$\rsLP(G')$, and that the full lift does not affect the value of the flow.
This is a fundamental property of lifts that is used in Appendix
\ref{sec_proof} to upper bound $\rs (G')$ when $k' \rightarrow \infty$.

\begin{proposition}
  \label{prop:leaky}Let $G'$ be the full lift of $G$, and let $w'$ be the full
  lift of a leaky flow $w$ of the $\rsLP(G)$. Then $w'$ is a leaky
  flow of the $\rsLP (G')$, and $\expect_{g \in \Fk_k}
  \val_g (w) = \expect_{g' \in \Fk_{k'}} \val_{g'}
  (w')$.
\end{proposition}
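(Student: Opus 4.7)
The plan is to mirror the proof of Proposition \ref{prop:Gprim}, but to observe that a leaky flow of the $\rsLP(G)$ only needs to satisfy the capacity constraint \eqref{eq1} and not the conservation constraint \eqref{eq2}, so the argument is strictly easier than for feasible flows. I will verify the two assertions — that $w' = \lift_{k \rightarrow k'}(w)$ is a leaky flow of $G' = \lift_{k \rightarrow k'}(G)$ and that $\expect_g \val_g(w) = \expect_{g'} \val_{g'}(w')$ — first for a single $M \cdot w$ with $M \in \Mk_{k \rightarrow k'}$, and then average over $M$.

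For the capacity constraint, the verification is immediate from the definitions. For any $f_1', f_2', g' \in \Fk_{k'}$, unfolding $M \cdot w$ gives
\[
  (M \cdot w)(f_1', f_2', g') + (M \cdot w)(f_2', f_1', g') = \sum_{\substack{f_1 \in M^{-1}(f_1')\\ f_2 \in M^{-1}(f_2')}} \bigl[w(f_1, f_2, M^{\#}(g')) + w(f_2, f_1, M^{\#}(g'))\bigr],
\]
which is at most $\sum_{f_1, f_2} G(f_1, f_2) = (M \cdot G)(f_1', f_2')$ by applying \eqref{eq1} to $w$ at the placement $M^{\#}(g') \in \Fk_k$. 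Averaging this inequality over $M$ yields the capacity constraint for $w'$ with respect to $G'$, so $w'$ is a leaky flow of the $\rsLP(G')$.

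For the value preservation, the key point is that $M$ is injective on $\Fk_k$ and, by Proposition \ref{prop:dim}, preserves dimension, so it maps the affine (dimension-zero) nodes bijectively onto a subset of the affine nodes in $\Fk_{k'}$. Combined with Proposition \ref{prop_sinksourcelift}, which characterises when $M$ sends the sinks/sources labelled by $g$ to the sinks/sources labelled by $g'$ (namely precisely when $g = M^{\#}(g')$), this lets me show that for each $\alpha' \in \boolkp$ the quantity $\fout_{M \cdot w}(g'(\alpha')\chi_{\alpha'}, g')$ is either zero (when $g'(\alpha')\chi_{\alpha'}$ has no $M$-preimage) or equals $\fout_w(g(\alpha)\chi_\alpha, g)$ for the unique matching $\alpha \in \boolk$, with $g = M^{\#}(g')$; the analogous statement holds for $\fin$. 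Summing over $\alpha'$ the zero terms drop out and the remaining $2^k$ terms reproduce $\val_g(w)$, giving $\val_{g'}(M \cdot w) = \val_{M^{\#}(g')}(w)$. Since $A^T$ is injective, a short count shows $M^{\#}$ pushes the uniform distribution on $\Fk_{k'}$ forward to the uniform distribution on $\Fk_k$, so $\expect_{g'} \val_{g'}(M \cdot w) = \expect_g \val_g(w)$; averaging over $M \in \Mk_{k \rightarrow k'}$ then gives the same equality for $w'$ and finishes the proof.

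The main obstacle is the bookkeeping in the second step: correctly identifying which affine nodes of the $k'$-dimensional hypercube arise as $M$-lifts of affine nodes of the $k$-dimensional hypercube, and aligning this identification with the source/sink relabelling $g \leftrightarrow g'$ dictated by Proposition \ref{prop_sinksourcelift}. Once this is set up everything else is linear, and the absence of a conservation constraint (compared to Proposition \ref{prop:Gprim}) removes what would otherwise be the delicate part.
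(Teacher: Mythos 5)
Your proposal is correct and follows the same route as the paper's (much terser) proof: the capacity constraint holds because the flow is lifted together with the capacities, and the value is preserved because each $M$-lift carries the source/sink nodes of $g=M^{\#}(g')$ bijectively onto the corresponding nodes of $g'$ while all other affine nodes of $\Fk_{k'}$ receive zero flow. Your write-up simply fills in the bookkeeping (injectivity of $M$ and of $\alpha\mapsto A^T\alpha+\beta$, uniformity of $M^{\#}(g')$) that the paper leaves implicit.
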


\begin{proof}
  Let $w$ be a leaky flow of $G$ and let $w'=\lift_{k \rightarrow k'}
  (w)$. Note that constraint $\eqref{eq1}$ is satisfied by $w'$ since the
  capacity of $G$ is lifted together with the flow $w$. So $w'$ is a leaky
  flow. Additionally,
  \begin{eqnarray*}
    \expect_{g \in \Fk_k} \val_g (w) & = & \expect_{g' \in
    \Fk_{k'}} \val_{g'} (w') .
  \end{eqnarray*}
  since any lift preserves the amount of flow going in and out of sink nodes
  and source nodes.
\end{proof}

\section{Proving that \texorpdfstring{$\rsinf(G)$}{rsinf(G)} can be attained in the limit
\label{sec_proof}}

The goal of this section is to prove Lemma \ref{thm_important}, which relates
the infinity relaxed soundness to the relaxed soundness. Let $G$ be the
\hadkttlint{} gadget in Lemma \ref{thm_important} and
let $w$ be the optimal flow of the $\rsinfLP(G)$, which implies that
$\rsinf (G) = 1 - \expect_{g \in \Fk_k} \val_g
(w)$. Let $k'$ be some integer greater than $k$ and define $G' =
\lift_{k \rightarrow k'} (G)$ and $w' = \lift_{k \rightarrow k'}
(w)$. According to Proposition \ref{prop:Gprim} $G'$ is a
\hadkttlint{} with $c (G') = c (G)$ and according to
Proposition \ref{prop:leaky} $w'$ is a leaky flow of the $\rsLP (G')$ and
$\expect_{g \in \Fk_k} \val_g (w) = \expect_{g' \in
\Fk_{k'}} \val_{g'} (w')$. We prove that as $k'$ tends to
infinity the total leakage of $G'$ converges to $0$. After we have established
this, Lemma \ref{thm_important} follows from Theorem \ref{leaky}.

\subsection{Total leakage approaches \texorpdfstring{$0$ as $k' \rightarrow \infty$}{0 as k' -> inf}}

Let us start by formally defining the leaks of $w$ and $w'$, where $w$ is a
leaky flow of the $\rsLP (G)$ and $w'$ is a leaky flow of the $\rsLP (G')$.
Recall that the $\rsLP(G')$  describe the expectation of the maximum flow of a
graph with a random source/sink placement $g' \in \Fk_{k'}$. It is for
this reason that the total leakage of $w'$ is defined as an expectation over
$g' \in \Fk_{k'}$ of the total leakage of the graph with source/sink
placement given by $g'$.

\begin{definition}
  Let $L_{k'}$ denote the total leakage of $w'$, \
  \begin{eqnarray*}
    L_{k'} & = & \expect_{g' \in \Fk_{k'}} \left(
    \sum_{\begin{array}{l}
      f' \in \Fk_{k'}\\
      s.t. \dimension (f') > 0
    \end{array}} | \leak_{w'} (f', g')  | \right),
  \end{eqnarray*}
  where
  \begin{eqnarray*}
    \leak_{w'} (f', g') & = & \fout_{w'} (f', g') - \fin_{w'}
    (f', g')\\
    & = & \frac{1}{| \Mk_{k \rightarrow k'} |} \sum_{M \in
    \Mk_{k \rightarrow k'}} \left( \sum_{\begin{array}{l}
      f \in \Fk_k\\
      s.t. M (f) = f'
    \end{array}} \leak_w (f, M^{\#} (g')) \right) .
  \end{eqnarray*}
\end{definition}

The aim of this subsection is to prove that $L_{k'} \rightarrow 0$ as $k'
\rightarrow \infty$. We do this by proving the following upper bound on
$L_{k'}$ through a second order moment analysis.

\begin{proposition}
  \label{prop_goal}
  \begin{eqnarray*}
    L_{k'} & \leqslant & \frac{{2^{2^k + k}} }{\sqrt{2^{k'} - 2^k}} .
  \end{eqnarray*}
\end{proposition}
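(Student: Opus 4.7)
The strategy is a second-moment (variance) argument for the lifted leaks, exploiting the defining property \eqref{eq:weird} of the infinity-relaxed flow. Observe first that each affine map $M \in \Mk_{k\rightarrow k'}$ is injective on $\Fk_k$, since $A$ is full rank (by the argument in Proposition \ref{prop:dim}); hence $M^{-1}(f')$ is either empty or a single point, and
\begin{eqnarray*}
  \leak_{w'}(f',g') & = & \frac{1}{N} \sum_{M\,:\,f' \in M(\Fk_k)} \leak_w\bigl(M^{-1}(f'),\, M^{\#}(g')\bigr),
\end{eqnarray*}
with $N = |\Mk_{k\rightarrow k'}|$. Only $f'$ with $1 \leqslant \dimension(f') \leqslant k$ can contribute. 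A direct consequence of \eqref{eq:weird} is that $\expect_{g_1 \in \Fk_k}[\leak_w(f,g_1)] = 0$ whenever $\dimension(f)\geqslant 1$, so every single-$M$ summand is mean zero under uniform $g'$.

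Next, I would reduce $L_{k'}$ to a second moment via Cauchy--Schwarz. Applying the inequality pointwise yields $\expect_{g'}|\leak_{w'}(f',g')| \leqslant \sqrt{\expect_{g'}[\leak_{w'}(f',g')^2]}$, and a second Cauchy--Schwarz across $f'$ reduces the task to bounding both the number of contributing $f'$ and the sum $\sum_{f'} \expect_{g'}[\leak_{w'}(f',g')^2]$. A natural reparametrization indexes contributions by $\Fk_k$-preimages $f$ rather than by $f'$, which is what produces the $2^{2^k}$ factor in the target bound.

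The heart of the argument is to expand the square:
\begin{eqnarray*}
  \expect_{g'}[\leak_{w'}(f',g')^2] & = & \frac{1}{N^2} \sum_{M_1, M_2} \expect_{g'}\bigl[\leak_w(f_1, M_1^{\#}(g'))\, \leak_w(f_2, M_2^{\#}(g'))\bigr],
\end{eqnarray*}
with $f_i = M_i^{-1}(f')$. Crucially, $M_i^{\#}(g')$ depends on $g'$ only through its restriction to the $k$-dimensional affine subspace $V_i \subseteq \boolkp$ determined by $M_i$, and the ``forced'' part of this restriction (the lift of $\affine(f_i)$) is shared by any such pair. I would classify pairs $(M_1,M_2)$ by $\dimension(V_1 \cap V_2)$: for pairs whose free directions in $V_1\setminus V_2$ (or $V_2\setminus V_1$) include a non-trivial direction outside the lifted $\affine$-subspace, summing $g'$ along that direction and invoking \eqref{eq:weird} makes the cross-expectation vanish. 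The cross-term therefore survives only on a combinatorially small set of ``bad'' pairs.

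Finally, I would count the bad pairs. The fraction of $(M_1, M_2)$ whose subspaces overlap beyond the forced intersection is of order $2^k/(2^{k'}-2^k)$, matching the chance that a uniformly chosen $k$-dimensional affine subspace of $\boolkp$ contains an extra specified direction. Bounding leaks by the gadget capacities (which sum to $1$) and tracking constants through both Cauchy--Schwarz steps gives $L_{k'} \leqslant 2^{2^k+k}/\sqrt{2^{k'}-2^k}$. The main obstacle will be the combinatorial bookkeeping: precisely classifying the bad $(M_1,M_2)$ pairs using the affine-map formalism of Appendix \ref{sec:affine}, and accounting carefully enough for the numerator to come out as exactly $2^{2^k+k}$ rather than something larger.
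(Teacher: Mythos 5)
Your proposal follows essentially the same route as the paper's proof: decompose the lifted leak by preimages $f\in\Fk_k$, pass to the second moment via Jensen/Cauchy--Schwarz, expand the square over pairs $(M_1,M_2)$, kill the cross-terms for pairs with $\dimension(\affine(M_1)\cap\affine(M_2))=\dimension(f)$ using constraint \eqref{eq:weird}, and bound the remaining ``bad'' pairs by a union bound over the $2^k-2^{\dimension(f)}$ free points (which is where the paper gets $(2^k-2^d)^2/(2^{k'}-2^d)$ rather than your stated $2^k/(2^{k'}-2^k)$, a bookkeeping point you already flag). The paper avoids your second Cauchy--Schwarz across $f'$ by simply summing the per-$(f,f')$ square-root bounds and using $\sum_{f'}|\Nk_{f\rightarrow f'}|=|\Mk_{k\rightarrow k'}|$, but this is an organizational difference, not a substantive one.
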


The proof of Proposition \ref{prop_goal} relies on the following Proposition
describing the relationship between random pairs of affine maps $M_1, M_2 \in
\Mk_{k \rightarrow k'}$ such that $M_1 (f) = M_2 (f)$ for some fixed
$f \in \Fk_k$.

\begin{definition}
  Given $M_{A, b, \beta, c} \in \Mk_{k \rightarrow k'}$, let $T_M :
  \boolk \rightarrow \boolkp$ denote the affine map $T_M (x) = A^T x + \beta$.
  Furthermore, let $\affine (M_{A, b, \beta, c})$ denote the affine
  subspace $\left\{ T_M (x)  : x \in \boolk \right\} {\subseteq
  \real^{k'}} $.
\end{definition}

\begin{proposition}
  \label{prop_pair}Given $f \in \Fk_k$ and $f' \in \Fk_k$ with
  $\dimension (f) = \dimension (f') = d$. Then
   \[
    | \{ (M_1, M_2) \in \Nk_{f \rightarrow f'} \times \Nk_{f
    \rightarrow f'} : \dimension (\affine (M_1) \cap \affine (M_2)) > d
    \} | \leqslant | \Nk_{f \rightarrow f'} |^2 \frac{\left( {2^k}
    - 2^d \right)^2}{2^{k'} - 2^d},
  \]
  where $\Nk_{f \rightarrow f'} = \{ M \in \Mk_{k \rightarrow
  k'} : M (f) = f' \}$ denotes the set of affine maps in $\Mk_{k
  \rightarrow k'}$ that lifts $f$ to $f'$.
\end{proposition}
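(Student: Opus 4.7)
The plan is to translate the dimension condition into a geometric condition on the images $\affine (M_i)$, apply a union bound over witnesses, and then exploit a symmetry of $\Nk_{f \rightarrow f'}$ to evaluate the resulting sum. For any $M \in \Nk_{f \rightarrow f'}$, the map $T_M$ restricts to an affine bijection $\affine (f) \rightarrow \affine (f')$: it is injective because $A$ has rank $k$, and its image is $\affine (f')$ by the Fourier computation at the heart of the proof of Proposition \ref{prop:dim}. Hence $\affine (M_1)$ and $\affine (M_2)$ are $k$-dimensional affine subspaces of $\boolkp$ that both contain $\affine (f')$, and $\dimension (\affine (M_1) \cap \affine (M_2)) > d$ precisely when the intersection contains some $v \notin \affine (f')$, equivalently when there exist $x_1, x_2 \in \boolk \setminus \affine (f)$ with $T_{M_1} (x_1) = T_{M_2} (x_2)$. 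A union bound over witnesses then bounds the number of bad pairs by
\[
    \sum_{v \in \boolkp \setminus \affine (f')} b_v^2, \qquad b_v := |\{ (M, x) \in \Nk_{f \rightarrow f'} \times (\boolk \setminus \affine (f)) : T_M (x) = v \}|,
\]
and, since $|\affine (M) \setminus \affine (f')| = 2^k - 2^d$ for every $M \in \Nk_{f \rightarrow f'}$, we have $\sum_v b_v = |\Nk_{f \rightarrow f'}| \, (2^k - 2^d)$.

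The main step is to show that $b_v$ is the same for every $v \in \boolkp \setminus \affine (f')$; this forces $b_v = |\Nk_{f \rightarrow f'}| (2^k - 2^d) / (2^{k'} - 2^d)$ and makes $\sum_v b_v^2$ equal the claimed bound. Given $v_1, v_2 \notin \affine (f')$, I would construct $M' \in \Mk_{k' \rightarrow k'}$ with $M' (f') = f'$ and $T_{M'} (v_1) = v_2$ as follows. Fix $\alpha_0 \in \affine (f')$, set $V = \affine (f') - \alpha_0$, and pick an invertible linear map $L$ on $\boolkp$ that restricts to the identity on $V$ and satisfies $L (v_1 - \alpha_0) = v_2 - \alpha_0$, which is possible because $v_1 - \alpha_0$ and $v_2 - \alpha_0$ both lie outside $V$. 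Then set $A' = L^T$, $\beta' = (I - L) \alpha_0$, and $b' = c' = 0$. The resulting $T_{M'} (z) = L (z - \alpha_0) + \alpha_0$ fixes $\affine (f')$ pointwise, so a direct Fourier check shows $M' (f') = f'$. Left-composition $M \mapsto M' \circ M$ then bijects $\Nk_{f \rightarrow f'}$ onto itself with $\affine (M' \circ M) = T_{M'} (\affine (M))$, giving $b_{v_1} = b_{v_2}$.

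The main obstacle is exactly this uniformity step. A naive attempt to use all $T_{M'}$ that merely preserve $\affine (f')$ setwise would force one to match ratios $\hat{f'}_{T_{M'} (\alpha')} / \hat{f'}_{\alpha'}$ against an affine character in $\alpha'$, which is not automatic for a general Hadamard-type $f'$ and would require extra case analysis. Restricting to those $T_{M'}$ that fix $\affine (f')$ pointwise sidesteps this entirely, since the ratio is then identically $1$, while still providing enough transitivity on $\boolkp \setminus \affine (f')$ to close the counting argument.
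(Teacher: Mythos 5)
Your proof is correct and follows essentially the same route as the paper's: both reduce the condition $\dimension(\affine(M_1)\cap\affine(M_2))>d$ to the existence of a common witness point outside $\affine(f')$, apply a union bound over such witnesses, and invoke the equidistribution of the maps in $\Nk_{f \rightarrow f'}$ over the complement of $\affine(f')$ to evaluate the count. The only real difference is that you actually prove this equidistribution (via left-composition with maps whose $T_{M'}$ fixes $\affine(f')$ pointwise), whereas the paper simply asserts that $T_{M_2}(x)$ is uniformly distributed over the complement of $\affine(f')$.
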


\begin{proof}
  Note that for any $M_1, M_2 \in \Nk_{f \rightarrow f'}$, the
  dimension of $\affine (M_1) \cap \affine (M_2)$ is at least $d$,
  since according to the proof of Proposition \ref{prop:dim} both $T_{M_1}$
  and $T_{M_2}$ must map $\affine (f)$ onto $\affine (f')$, so
  $\dimension (\affine (M_1) \cap \affine (M_2) \cap \affine (f'))
  = d$. However, the two maps $T_{M_1}$ and $T_{M_2}$ can map the complement
  of $\affine (f)$ in different ways since there is no restriction to
  how they map the complement of $\affine (f)$.
  
  Fix $M_1$ and uniformly at random pick $M_2$ from $\Nk_{f
  \rightarrow f'} .$ Given any fix $x \nin \affine (f)$, the probability
  that $T_{M_2} (x) \in \affine (M_1)$ is $(2^k - 2^d) / (2^{k'} - 2^d)$
  since $| \affine (M_1) \setminus \affine (f') | = 2^k - 2^d$ and
  $T_{M_2} (x)$ is \ uniformly distributed over the complement of
  $\affine (f')$. Taking a union bound over all $x \nin \affine (f)$
  shows that
  \begin{eqnarray*}
    P_{M_2 \in \Nk_{f \rightarrow f'}} 
    [\dimension (\affine (M_1) \cap
    \affine (M_2)) > d] & \leqslant & \frac{\left( {2^k}  - 2^d
    \right)^2}{2^{k'} - 2^d} .
  \end{eqnarray*}
  Proposition \ref{prop_pair} follows directly from this inequality.
\end{proof}

The takeaway from Proposition \ref{prop_pair} is that if $M_1$ and $M_2$ are
two random affine maps such that $M_1 (f) = M_2 (f)$ for some fixed $f \in
\Fk_k$, then with high probability $\affine (M_1) \cap
\affine (M_2) = \affine (f)$. This allows us to create a bound on
the second order moment of the terms that define $L_{k'}$.

\begin{lemma}
  \label{lemma_key}Given $f \in \Fk_k$, $f' \in \Fk_{k'}$ and
  $g' \in \Fk_{k'}$, where $\dimension (f) = \dimension (f') = d > 0$, then
  \begin{eqnarray*}
    \expect_{g' \in \Fk_{k'}} \left( \left|
    \sum_{\begin{array}{l}
      M \in \Nk_{f \rightarrow f'}
    \end{array}} \leak_w (f, M^{\#} (g')) \right|^2 \right) & \leqslant
    & | \Nk_{f \rightarrow f'} |^2 \frac{\left( {2^k}  - 2^d
    \right)^2}{2^{k'} - 2^d} .
  \end{eqnarray*}
\end{lemma}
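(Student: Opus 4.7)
The plan is to expand the square and split the resulting double sum based on whether the pair of affine maps sees ``generic'' or ``excess'' overlap. Writing
\[
  \expect_{g' \in \Fk_{k'}} \left|\sum_{M \in \Nk_{f \rightarrow f'}} \leak_w(f, M^{\#}(g'))\right|^2 = \sum_{M_1, M_2 \in \Nk_{f \rightarrow f'}} \expect_{g'}\bigl[\leak_w(f, M_1^{\#}(g')) \, \leak_w(f, M_2^{\#}(g'))\bigr],
\]
I would partition the pairs $(M_1, M_2)$ by whether $\dimension(\affine(M_1) \cap \affine(M_2))$ equals $d$ (the minimal possible value, since both subspaces contain $\affine(f')$) or strictly exceeds it.

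The core claim is that pairs with $\affine(M_1) \cap \affine(M_2) = \affine(f')$ contribute zero. For such a pair, decompose the uniform random $g' : \boolkp \rightarrow \{1, -1\}$ into four mutually independent blocks: its restrictions $g_0', g_1', g_2', g_3'$ to $\affine(f')$, to $\affine(M_1) \setminus \affine(f')$, to $\affine(M_2) \setminus \affine(f')$, and to the complement. Since $M_i^{\#}(g')(x) = g'(T_{M_i}(x)) (-1)^{c_i} \chi_{b_i}(x)$ and $\affine(M_i) = T_{M_i}(\boolk)$, the lift $M_i^{\#}(g')$ is determined by $g'|_{\affine(M_i)}$; in particular $M_1^{\#}(g')$ depends only on $(g_0', g_1')$ and $M_2^{\#}(g')$ only on $(g_0', g_2')$. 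Furthermore, because $T_{M_i}$ is an injective affine map that sends $\affine(f)$ bijectively onto $\affine(f')$, its restriction to $\boolk \setminus \affine(f)$ is a bijection onto $\affine(M_i) \setminus \affine(f')$. Hence, conditioning on $g_0'$ freezes $M_i^{\#}(g')|_{\affine(f)}$ to a deterministic value, and as $g_1'$ varies uniformly the function $M_1^{\#}(g')$ ranges uniformly over all $h \in \Fk_k$ extending this fixed restriction. The infinity-relaxed constraint \eqref{eq:weird} then gives
\[
  \expect_{g_1' \mid g_0'}\bigl[\leak_w(f, M_1^{\#}(g'))\bigr] = 0,
\]
so by conditional independence and the tower property the entire cross-term vanishes.

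The pairs with $\dimension(\affine(M_1) \cap \affine(M_2)) > d$ are handled by brute force. The capacity constraint \eqref{eq1} implies $\fout_w(f, g), \fin_w(f, g) \leq \sum_{f_2} G(f, f_2) \leq 1$, so $|\leak_w(f, g)| \leq 1$ and every surviving summand has magnitude at most $1$. Proposition \ref{prop_pair} caps the number of such pairs by $|\Nk_{f \rightarrow f'}|^2 (2^k - 2^d)^2 / (2^{k'} - 2^d)$, which is exactly the target bound.

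The main obstacle is the uniform-ranging claim: as $g_1'$ varies uniformly over its $2^k - 2^d$ coordinates, the induced $M_1^{\#}(g')$ must hit each extension of its fixed restriction to $\affine(f)$ exactly once. This reduces to the bijectivity of $T_{M_1}$ on $\boolk \setminus \affine(f)$, which is elementary, but it has to be tracked through the sign twist $(-1)^{c_1}\chi_{b_1}$ (which merely relabels bits without changing the measure) and the bijection between bit positions of $g_1'$ and evaluation points of $M_1^{\#}(g')$ outside $\affine(f)$. Once this bookkeeping is settled, applying constraint \eqref{eq:weird} to the conditional expectation and then invoking Proposition \ref{prop_pair} on the excess pairs completes the proof.
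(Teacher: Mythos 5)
Your proposal is correct and follows essentially the same route as the paper: expand the square, bound the pairs with $\dimension(\affine(M_1)\cap\affine(M_2))>d$ by $1$ each and count them via Proposition \ref{prop_pair}, and show the generic pairs contribute zero by conditioning on $g'|_{\affine(f')}$ and invoking constraint \eqref{eq:weird}. Your treatment of the conditional independence of the two lifts given $g_0'$ and of the uniform-ranging claim via the injectivity of $T_M$ is, if anything, slightly more explicit than the paper's.
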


\begin{proof}
  Expanding the square we need to prove that,
  \[
    \sum_{\begin{array}{l}
      M_1, M_2 \in \Nk_{f \rightarrow f'}
    \end{array}} \expect_{g' \in \Fk_{k'}}  (\leak_w (f,
    M_1^{\#} (g')) \leak_w (f, M_2^{\#} (g'))) \leqslant  |
    \Nk_{f \rightarrow f'} |^2 \frac{\left( {2^k}  - 2^d
    \right)^2}{2^{k'} - 2^d} .
  \]
  Split the terms up into two cases, either $\dimension (\affine (M_1) \cap
  \affine (M_2)) > d$ or $\dimension (\affine (M_1) \cap \affine
  (M_2)) = d$. By Proposition \ref{prop_pair} the number of terms of the first
  type is at most $| \Nk_{f \rightarrow f'} |^2  \left( {2^k}  - 2^d
  \right)^2 / (2^{k'} - 2^d)$. Each term is bounded by one since the sum of
  capacities in the $\rs (G)$ LP is equal to $1$, so the absolute value
  of a leak is always smaller than or equal to $1$ at any node and for any
  source/sink placement.
  
  In the other case, when $\dimension (\affine (M_1) \cap \affine (M_2))
  = d$, then the two random functions $M_1^{\#} (g')$ and $M_2^{\#} (g')$ are
  equal on $\affine (f)$, and independently uniformly random $\{ 1, - 1
  \}$ on the complement of $\affine (f)$. This allows us to rewrite the
  expectation over $g'$ as
  \begin{equation*}
  \begin{split}
    & \expect_{g' \in \Fk_{k'}} (\leak_w (f, M^{\#}_1
    (g')) \leak_w (f, M^{\#}_2 (g')))\\
    = & \expect_{g' \in \Fk_{k'}} \left( \leak_w (f,
    M^{\#}_1 (g')) \expect_{\begin{array}{l}
      g_2' \in \Fk_{k'}\\
      s.t. M^{\#}_2 (g_2') |_{\affine (f)} = M^{\#}_1 (g') 
      |_{\affine (f)}
    \end{array}} \leak_w (f, M^{\#}_2 (g'_2)) \right)\\
    = & \expect_{g \in \Fk_k} \left( \leak_w (f, g)
    \expect_{\begin{array}{l}
      g_2 \in \Fk_k\\
      s.t. g_2 |_{\affine (f)} = g  |_{\affine (f)}
    \end{array}} \leak_w (f, g_2) \right) .
  \end{split}
  \end{equation*}

    This is equal to 0, since for any
    infinity relaxed flow $w$ (see
    Definition
    \ref{def:relaxed_constraint}) the
    expectation of $\leak_w (f,
    g_2)$ over $g_2$ given $g$ is 0.
\end{proof}

We are now at the point where we can prove Proposition \ref{prop_goal} using
Lemma \ref{lemma_key}.

\begin{proof}[Proof of Proposition \ref{prop_goal}]
  A trivial upper bound of $L_{k'}$ using the triangle inequality is
  \begin{eqnarray*}
    L_{k'} & \leqslant & \frac{1}{| \Mk_{k \rightarrow k'} |} \sum_{f
    \in \Fk_k} \sum_{\begin{array}{l}
      f' \in \Fk_{k'}\\
      s.t. \dimension (f') > 0
    \end{array}} \expect_{g' \in \Fk_{k'}} \left( \left|
    \sum_{
      M \in \Nk_{f \rightarrow f'}
    } \leak_w (f, M^{\#} (g')) \right| \right) .
  \end{eqnarray*}
  Applying Jensen's inequality to the expectation over $g' \in
  \Fk_{k'}$ gives
  \[
    \expect_{g' \in \Fk_{k'}} \left( \left|
    \sum_{
      M \in \Nk_{f \rightarrow f'}
    } {\leak_w}  (f, M^{\#} (g')) \right| \right) 
    \leqslant \sqrt{\expect_{g' \in \Fk_{k'}} \left( \left|
    \sum_{
      M \in \Nk_{f \rightarrow f'}
    } \leak_w (f, M^{\#} (g')) \right|^2 \right)},
  \]
  which according to to Lemma \ref{lemma_key} can be further upper bounded by
  \begin{eqnarray*}
    \sqrt{\expect_{g' \in \Fk_{k'}} \left( \left|
    \sum_{
      M \in \Nk_{f \rightarrow f'}
    } \leak_w (f, M^{\#} (g')) \right|^2 \right)} &
    \leqslant & \frac{{2^k}  - 2^{\dimension (f)}}{\sqrt{2^{k'} - 2^{\dimension (f)}}} |
    \Nk_{f \rightarrow f'} |\\
    & \leqslant & \frac{{2^k} }{\sqrt{2^{k'} - 2^k}} | \Nk_{f
    \rightarrow f'} | .
  \end{eqnarray*}
  We have so far shown that
  \begin{eqnarray*}
    L_{k'} & \leqslant & \frac{{2^k} }{\sqrt{2^{k'} - 2^k}} \sum_{f \in
    \Fk_k} \sum_{\begin{array}{l}
      f' \in \Fk_{k'}\\
      s.t. \dimension (f') > 0
    \end{array}} \frac{| \Nk_{f \rightarrow f'} |}{| \Mk_{k
    \rightarrow k'} |} .
  \end{eqnarray*}
  Finally, note that $\sum_{f' \in \Fk_{k'}} | \Nk_{f
  \rightarrow f'} | = | \Mk_{k \rightarrow k'} |$ since
  $\Nk_{f \rightarrow f'}$ are disjoint subsets of $\Mk_{k
  \rightarrow k'}$ for different $f' \in \Fk_{k'}$ and their union
  over $f' \in \Fk_{k'}$ is equal to $\Mk_{k \rightarrow k'}$.
  So
  \begin{eqnarray*}
    L_{k'} & \leqslant & \frac{{2^k} }{\sqrt{2^{k'} - 2^k}} \sum_{f \in
    \Fk_k} \sum_{\begin{array}{l}
      f' \in \Fk_{k'}\\
      s.t. \dimension (f') > 0
    \end{array}} \frac{| \Nk_{f \rightarrow f'} |}{| \Mk_{k
    \rightarrow k'} |} \leqslant \frac{{2^k} }{\sqrt{2^{k'} - 2^k}} \sum_{f
    \in \Fk_k} 1 \leqslant \frac{{2^{2^k + k}} }{\sqrt{2^{k'} - 2^k}}.
  \end{eqnarray*}
  
\end{proof}

\subsection{The proof of Lemma \ref{thm_important}}

All that remains is to tie up the loose ends by proving Lemma
\ref{thm_important} using Proposition \ref{prop_goal} combined with Theorem
\ref{leaky}.

\begin{proof}[Proof of Lemma \ref{thm_important}]
  Since $w'$ is a leaky flow of the $\rsLP (G')$, it follows from Theorem
  \ref{leaky} that there exists a feasible flow $\tilde{w}'$ of the $\rsLP
  (G')$ such that
  \begin{eqnarray*}
    \expect_{g' \in \Fk_{k'}} \val_{g'} (\tilde{w}') + L_{k'}
    & \geqslant & \expect_{g' \in \Fk_{k'}} \val_{g'} (w') .
  \end{eqnarray*}
  Note that $\rs(G') \geqslant 1 - \expect_{g' \in \Fk_{k'}}
  \val_{g'} (\tilde{w}')$ since $\tilde{w}'$ is a feasible flow of the
  $\rsLP (G')$. Furthermore, recall that $\rsinf (G) = 1
  - \expect_{g' \in \Fk_{k'}} \val_{g'} (w')$. So
  \begin{eqnarray*}
    \rs (G') - L_{k'} & \leqslant & \rsinf (G) .
  \end{eqnarray*}
  Proposition \ref{prop_goal} implies that $L_{k'} \rightarrow 0$ as $k'
  \rightarrow \infty$, which proves that $\forall \varepsilon > 0$ there
  exists a gadget $G'$ with $c (G') = c (G)$ such that $\rs (G') -
  \varepsilon \leqslant \rsinf (G)$.
\end{proof}

\section{Gadget construction and verification \label{sec_comp}}

This section contains the details for how to practically compute
\hadkttlint{} gadgets using the $\rsLP(G)$ and the
$\rsinfLP(G)$ . These LPs have far too many variables and constraints to
directly be solved by a computer when $k \geqslant 4$. The solution is to make
use of the symmetries of the LP:s to construct smaller LP:s with the same
optimum. This is done in two steps. Step 1 is to use Proposition
\ref{prop:Gprim} to argue that best gadgets are the symmetrical gadgets. This
means that we only need to take into account symmetrical gadgets when solving
the $\rsLP (G)$ and the $\rsinfLP (G)$. Step 2 is to use the fact that if $G$
is symmetrical, then Theorem \ref{thm_symmax} allows us to compress the LP,
merging a huge number of variables into a single variable.

\subsection{Symmetrical \texorpdfstring{\hadkttlint{}}{Hadk-to-2Lin(2)} gadgets are
optimal}

The meaning of a \hadkttlint{} gadget $(G, \mathbb{X},
\mathbb{Y})$ being \emph{optimal} is that there exists no
\hadkttlint{} gadget $(\tilde{G}, \mathbb{X},
\mathbb{Y})$ such that $c (G) = c (\tilde{G})$ and $\rs (G) > \rs
(\tilde{G})$. The following Proposition states that symmetric gadgets are
optimal. By symmetric, we refer to the property that the gadget $G$ is
invariant under $M$-lifts.

\begin{proposition}
  \label{prop_sym}Given any \hadkttlint{} gadget $(G,
  \mathbb{X}, \mathbb{Y})$, there exists a symmetric \hadkttlint{} gadget $(\tilde{G}, \mathbb{X}, \mathbb{Y})$ such that $c
  (G) = c (\tilde{G})$ and $\rs (G) \geqslant \rs (\tilde{G})$.
\end{proposition}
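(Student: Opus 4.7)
The plan is to take $\tilde{G}$ to be the full self-lift of $G$, namely
\[
\tilde{G} \;=\; \lift_{k \rightarrow k}(G) \;=\; \frac{1}{|\Mk_{k \rightarrow k}|} \sum_{M \in \Mk_{k \rightarrow k}} M \cdot G.
\]
This is a natural averaging construction: I mix together all affine-map images of $G$ with equal weight. The goal is then to verify three things in sequence: that $\tilde{G}$ is a valid $\hadkttlint{}$ gadget, that it is symmetric (invariant under every $M \in \Mk_{k \rightarrow k}$), and that it has the same completeness but no worse relaxed soundness than $G$.

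For validity, each $M \cdot G$ is a probability distribution over $\binom{\Fk_k}{2}$ since $M$ permutes $\Fk_k$ and satisfies $M(-f) = -M(f)$ (inspecting Definition~\ref{def:affinemap} shows that negating $f$ negates the output), so the zero-on-antipodes constraint is preserved; a convex combination of such distributions is again one. For symmetry, I use Proposition~\ref{prop:group}: for any fixed $M' \in \Mk_{k \rightarrow k}$,
\[
M' \cdot \tilde{G} \;=\; \frac{1}{|\Mk_{k \rightarrow k}|} \sum_{M \in \Mk_{k \rightarrow k}} (M' \circ M) \cdot G \;=\; \tilde{G},
\]
because left-multiplication by $M'$ permutes the group $\Mk_{k \rightarrow k}$ bijectively.

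For the two inequalities, I just invoke Proposition~\ref{prop:Gprim} with $k' = k$: the full lift preserves completeness and can only decrease relaxed soundness, giving $c(\tilde{G}) = c(G)$ and $\rs(\tilde{G}) \leqslant \rs(G)$. This is essentially the whole argument; the only thing a careful reader needs to double-check is that the statement of Proposition~\ref{prop:Gprim}, although phrased for lifts $k \rightarrow k'$, makes sense and goes through unchanged in the degenerate case $k' = k$, which it does (affine maps preserve dimension by Proposition~\ref{prop:dim} and distance by Proposition~\ref{prop:dist}, and the flow-lifting argument in the proof of Proposition~\ref{prop:Gprim} never uses $k' > k$). The main obstacle, if any, is purely notational: one must keep straight the two different uses of ``lift'' (of gadgets and of flows) and confirm that ``symmetric'' in the sense used in the next section coincides with invariance under $\Mk_{k \rightarrow k}$, which is exactly what the construction gives.
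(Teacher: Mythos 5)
Your proposal is correct and follows essentially the same route as the paper: define $\tilde{G} = \lift_{k \rightarrow k}(G)$, invoke Proposition~\ref{prop:Gprim} (which indeed covers $k' = k$) for the completeness and relaxed-soundness claims, and use the group structure of $\Mk_{k \rightarrow k}$ from Proposition~\ref{prop:group} to conclude invariance under every $M$-lift. Your extra check that each $M \cdot G$ remains a valid distribution respecting the antipodal constraint is a minor detail the paper leaves implicit, but the argument is the same.
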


\begin{proof}
  Let $\tilde{G} = \text{lift}_{k \rightarrow k} (G)$. According to
  Proposition \ref{prop:Gprim}, $c (G) = c (\tilde{G})$ and $\rs (G)
  \geqslant \rs (\tilde{G})$. Furthermore, $\tilde{G}$ is a symmetric
  gadget since for any $f_1, f_2 \in \Fk_k$ and $M \in \Mk_{k
  \rightarrow k}$,
  \begin{eqnarray*}
    (M \cdot \tilde{G}) (f_1, f_2) & = & \frac{1}{| \Mk_{k
    \rightarrow k} |} \sum_{M_2 \in \Mk_{k \rightarrow k}} ((M \circ
    M_2) \cdot \tilde{G}) (f_1, f_2)\\
    & = & \frac{1}{| \Mk_{k \rightarrow k} |} \sum_{M_2
    \in M \circ \Mk_{k \rightarrow k}} (M_2 \cdot \tilde{G}) (f_1,
    f_2) .
  \end{eqnarray*}
  According to Proposition \ref{prop:group}, $\Mk_{k \rightarrow k}$
  forms a group, so $M \circ \Mk_{k \rightarrow k} = \Mk_{k
  \rightarrow k}$. We have shown that $M \cdot \tilde{G} = \tilde{G}$ and thus
  $\tilde{G}$ is a symmetric gadget.
\end{proof}

\subsection{Compressing the \texorpdfstring{$\rsLP (G)$ and $\rsinfLP (G)$}{rsLP(G) and rsinfLP(G)} }\label{sub_comp}

As discussed earlier, both the $\rsLP (G)$ and the $\rsinfLP (G)$ can be
interpreted as \maxflow{} problems. Furthermore, if $G$ is symmetric under
$M$-lifts, then $\Mk_{k \rightarrow k}$ is a symmetry group for both
of these \maxflow{} problems. This means that we can apply Theorem
\ref{thm_symmax} to compress the \maxflow{} problems, giving us the
\emph{compressed $\rsLP (G)$} and the \emph{compressed $\rsinfLP
(G)$}.

One of the symmetries that the compression is able to capture is that many
different source/sink placements are equivalent. In a sense, the source/sink
placements of the compressed LPs consist of one representative source/sink
placement from each set of equivalent source/sink placements. This symmetry
turns out to be the main contributor as to why the compressed LP is
significantly smaller than the original LP.

Without the compression, the LPs each have $2^{3 \cdot 2^k}$ variables, which
for $k \geqslant 4$ is computationally infeasible. However, even with the
compression, for $k = 4$ the LPs are still large enough that it is
computationally challenging to solve them.

\subsubsection{Further restricting the compressed LPs }\label{sec:rest}

To further restrict the size of the LPs in the case of $k=4$, we heuristically identify a list of
beneficial gadget variables by solving the compressed LPs with floating point
numbers using Gurobi. Any gadget variable that is given non-zero weight in at
least one floating point solution is added to the list. Using this list, we
define the \emph{restricted compressed LP} as the compressed LP but with
all other gadget variables that are not on the list, removed. The list we use
can be found in Table \ref{fig13} in Appendix \ref{sec:used}. Note that one possible drawback to
restricting the LPs like this is that the restriction could lead to
construction of sub-optimal gadgets.

\cref{tab1,tab2,tab3} show the sizes of the LPs depending on if
compression or restriction is being applied. Note that the restricted and
compressed LP:s have significantly fewer variables than the original LP:s.

\begin{table}
\centering
  \begin{tabular}{|l|lll|lll|}
    \hline
    \  & \multicolumn{3}{c|}{$\rsLP (G)$} & \multicolumn{3}{c|}{$\rsinfLP (G)$} \\
    \hline
    Original & $163$ & $343$ & $534$ & $163$ & $343$ & $534$\\
    Compressed & $23$ & $38$ & $106$ & $23$ & $38$ & $106$\\
    \hline
  \end{tabular}
  \caption{Sizes of the $\rsLP (G)$ and $\rsinfLP (G)$ for
  \hadkttlint[2]{} gadgets $G$. The three numbers are the
  number of linear constraints, number of variables and number of non-zero
  entries in the constraints. All variables have the implicit constraint of being non-negative. \label{tab1}}
\end{table}

\begin{table}
\centering
  \begin{tabular}{|l|lll|lll|}
    \hline
    \  & \multicolumn{3}{c|}{$\rsLP (G)$} & \multicolumn{3}{c|}{$\rsinfLP (G)$} \\
    \hline
    Original & $8 \cdot 10^6$ & $2 \cdot 10^7$ & $5 \cdot 10^7$ & $8 \cdot
    10^6$ & $2 \cdot 10^7$ & $5 \cdot 10^7$\\
    Compressed & $298$ & $546$ & $2330$ & 243  & 462 & 1987\\
    \hline
  \end{tabular}
  \caption{Sizes of the $\rsLP (G)$ and $\rsinfLP (G)$ for
  \hadkttlint[3]{} gadgets $G$. The three numbers are the
  number of linear constraints, number of variables and number of non-zero
  entries in the constraints. All variables have the implicit constraint of being non-negative. \label{tab2}}
\end{table}

\begin{table}
\centering
  \begin{tabular}{|l|lll|lll|}
    \hline
    \  & \multicolumn{3}{c|}{$\rsLP (G)$} & \multicolumn{3}{c|}{$\rsinfLP (G)$} \\
    \hline
    Original & $1 \cdot 10^{14}$ & $3 \cdot 10^{14}$ & $4 \cdot 10^{14}$ & $1
    \cdot 10^{14}$ & $3 \cdot 10^{14}$ & $4{\cdot}10^{14}$\\
    Restricted & $2 \cdot 10^{11}$ & $4 \cdot 10^{11}$ & $6 \cdot 10^{11}$ &
    $2 \cdot 10^{11}$ & $4 \cdot 10^{11}$ & $6 \cdot 10^{11}$\\
    Compressed & $4 \cdot 10^5$ & $7 \cdot 10^5$ & $1 \cdot 10^7$ & $3 \cdot
    10^5$ & $6 \cdot 10^5$ & $9 \cdot 10^6$\\
    Restricted \& compressed & $3 \cdot 10^4$ & $6 \cdot 10^4$ & $2 \cdot
    10^5$ & $3 \cdot 10^4$ & $5 \cdot 10^4$ & $2 \cdot 10^5$ \\
    \hline
  \end{tabular}
  \caption{Sizes of the $\rsLP (G)$ and $\rsinfLP (G)$ for
  \hadkttlint[4]{}. The three numbers are the
  number of linear constraints, number of variables and number of non-zero
  entries in the constraints. \label{tab3}}
\end{table}

There is a special case where we do not need the restrictions. If the
completeness of a gadget is $1 - 2^{- k}$, then the gadget only has non-zero
weight on edges of length $2^{- k}$. There are comparatively relatively few edges of length
$2^{- k}$. This allows us to directly construct the gadget by solving the
non-restricted LP. So in the case of completeness $1 - 2^{- k}$, the gadgets
we construct are guaranteed to be optimal since we do not make use of any
restrictions.

\subsection{Implementation details}

The compressed $\rsLP (G)$ and compressed $\rsinfLP (G)$ are constructed using
a Python script where all of the calculations are done using integer
arithmetic. The script makes use of affine maps to efficiently compute the
symmetries of the two LPs, in order to compress them. The time and memory
complexities of the script are roughly $O (2^{2 \cdot 2^k})$, so the script is
able to handle $k = 2$, $3$ and $4$. In theory it would be possible to also
make the script support $k = 5$, but that would require both more powerful
hardware, as well as improving the time complexity to roughly $O (2^{2^k})$
time.

After having computed the compressed $\rsLP (G)$ and compressed $\rsinfLP
(G)$, the list of beneficial gadget variables found in Section \ref{list} are
used to construct the restricted compressed LPs. In order to solve the
compressed LP we use the exact rational number LP solver
QSopt\_ex{\cite{LPSOLVER}}. This results in a gadget described only using
rational numbers, as well as an accompanying compressed flow, also described
only using rational numbers.

\subsection{Verification of \texorpdfstring{$\rs (G)$ and $\rsinf (G)$}{rs(G) and rsinf(G)}}

It is significantly simpler to verify the relaxed soundness and the infinity
relaxed soundness of a gadget than it is to construct the gadget. The
verification can be done almost directly on the original LPs, without needing
the restricted compressed LPs or the compressed LPs.

The input to the verification program is a gadget $G :
\binom{\Fk_k}{2} \rightarrow [0, 1]$ together with a flow $w_g :
\Fk_k \times \Fk_k \rightarrow \real$, for each
source/sink placement equivalence class representative $g$. The flow acts as a
witness for the relaxed soundness / infinity relaxed soundness of the gadget.
In order to avoid floating point errors, we require both $G$ and the $w_g$ to
be rational.

The verification process is done in five steps.
\begin{enumerate}
  \item For each source/sink placement representative $g$, verify that the
  flow $w_g$ satisfies the capacity constraints of the $\rs (G)$ LP /
  $\rsinf (G)$ LP, i.e. that $w_g (f_1, f_2) + w_g (f_2, f_1)
  \leqslant G (f_1, f_2)$ for all $f_1, f_2 \in \Fk_k$.
  
  \item Verify that the gadget $G$ is symmetric under action by $M \in
  \Mk_{k \rightarrow k} $, meaning that for all functions $f_1, f_2
  \in \Fk_k$ and affine maps $M \in \Mk_{k \rightarrow k}$, it
  holds that $G (f_1, f_2) = G (M (f_1), M (f_2))$.
  
  \item For each source/sink placement representative $g$ and each function $f
  \in \Fk_k$, compute $\fin (f, g)$ and $\fout (f, g)$. Now
  extend $\fin$ and $\fout$ to be defined for all $f$ and
  $g$ in $\Fk_k$. For any source/sink placements $\tilde{g} \in
  \Fk_k$ that is not a representative, pick a map $M \in
  \Mk_{k \rightarrow k}$ and representative $g$ such that $g = M^{\#}
  (\tilde{g})$, and define $\fin (f, \tilde{g})$ as $\fin (M^{- 1}
  (f), g)$ and $\fout (f, \tilde{g})$ as $\fout (M^{- 1} (f), g)$.
  
  \item Verify the conservation of flow constraint in the $\rsLP (G)$ /
  $\rsinfLP (G')$ by iterating over all $(f, g) \in \Fk_k \times
  \Fk_k$ that are not sinks or sources. For the $\rsLP(G)$  this just
  involves checking that $\fin (f, g) = \fout (f, g)$. For the
  $\rsinfLP (G)$ this involves checking that $\sum_{g'} \fin (f, g') =
  \sum_{g'} \fout (f, g')$, where the sum is over all $g'$ such that $g'
  \barsuchthat_{\affine (f)} = g \barsuchthat_{\affine (f)}$.
  
  \item Compute and output the completeness and $\rs$ /
  $\rsinf$ of the gadget using the extended inflow and outflow as
  a witness.
\end{enumerate}
Note that the first step verifies the capacity constraints only for
representatives of equivalent source/sink placements. The second step checks
that the gadget $G$ is symmetric, which combined with the first step implies
that any extension of the flow to an arbitrary source/sink placement will
fulfil the capacity constraints. The fourth step checks that the conservation
of flow constraint is fulfilled, which in the case of the $\rsinfLP (G)$
involves computing the affine support of all possible source/sink placements.

The LP's we use and the gadgets we present in this paper can be found at
\url{https://github.com/bjorn-martinsson/NP-hardness-of-Max-2Lin-2}, 
as
well as a stand alone implementation of a verification script written in
Python. As described in the verification process above, the verification
requires a flow $w_g$ as input. So on the Github, there is also a script used
to generate this witness flow. This is done by solving the restricted
compressed $\rsLP(G)$  / $\rsinfLP (G)$ using an integral \maxflow{} solver, and
then uncompressing the result.

\section{Edges used/unused in constructed gadgets}\label{sec:used}

During the numerical analysis, we solve LPs to construct the gadgets. A gadget can be interpreted as a probability distribution over (undirected) edges.
\cref{fig11,fig12,fig13} list all edges that have been given non-zero weight in at least one solution to an LP, for $k = 2, 3,
4$. Recall that every gadget that we construct is symmetrical under the mappings of
$\Mk_{k \rightarrow k}$, so edges from the same edge orbit share the
same capacity. More specifically, the tables contain a list of all edge
orbits that are used in at least one constructed gadget.

\begin{table}
\centering
  $\begin{array}{llll}
    f_1 & f_2 & \mathrm{Ham} . \mathrm{dist} . & \mathrm{size}\\
    \hline
    0000 & 1000 & 1 & 32\\
    0000 & 1100 & 2 & 24
  \end{array}$
  \caption{The relevant edge orbits for \hadkttlint[2]{}
  gadgets. The edges of a \hadkttlint[2]{} gadget has a
  total of $4$ edge orbits, but only two are ever used in our constructed
  gadgets. The rest of the edges were always given capacity $0$ by the
  (rational) LP-solver. \label{fig11}}

  $\begin{array}{llll}
    f_1 & f_2 & \mathrm{Ham} . \mathrm{dist} . & \mathrm{Size}\\
    \hline
    00000000 & 10000000 & 1 & 128\\
    10000000 & 11000000 & 1 & 896\\
    00000000 & 11000000 & 2 & 448\\
    00000000 & 11110000 & 4 & 112
  \end{array}$
  \caption{The relevant edge orbits for \hadkttlint[3]{}
  gadget. The edges of a \hadkttlint[3]{} gadget has a
  total of 26 edge orbits, but only four are ever used in our constructed
  gadgets. The rest of the edges were always given capacity $0$ by the
  (rational) LP-solver.\label{fig12}}

  $\begin{array}{llll}
    f_1 & f_2 & \mathrm{Ham} . \mathrm{dist} . & \mathrm{Size}\\
    \hline
    0000000000000000 & 1000000000000000 & 1 & 512\\
    1000000000000000 & 1100000000000000 & 1 & 7680\\
    1100000000000000 & 1110000000000000 & 1 & 53760\\
    1110000000000000 & 1111000000000000 & 1 & 17920\\
    1110000000000000 & 1110100000000000 & 1 & 215040\\
    1110100000000000 & 1110100010000000 & 1 & 215040\\
    0000000000000000 & 1100000000000000 & 2 & 3840\\
    1100000000000000 & 1111000000000000 & 2 & 26880\\
    1100000000000000 & 1110100000000000 & 2 & 322560\\
    1110000000000000 & 1111100000000000 & 2 & 107520\\
    1110000000000000 & 1110110000000000 & 2 & 161280\\
    1111000000000000 & 1110100000000000 & 2 & 107520\\
    1110100000000000 & 1110100011000000 & 2 & 322560\\
    0000000000000000 & 1110000000000000 & 3 & 17920\\
    1100000000000000 & 1111100000000000 & 3 & 322560\\
    1100000000000000 & 1110101000000000 & 3 & 215040\\
    1110000000000000 & 1110100010001000 & 3 & 860160\\
    0000000000000000 & 1111000000000000 & 4 & 4480\\
    0000000000000000 & 1110100000000000 & 4 & 53760\\
    0000000000000000 & 1111100000000000 & 5 & 53760\\
    0000000000000000 & 1111111100000000 & 8 & 480
  \end{array}$
  \caption{The relevant edge orbits for \hadkttlint[4]{}
  gadget. The edges of a \hadkttlint[4]{} gadget has a
  total of 1061 edge orbits, but only 21 are ever used in our constructed
  gadgets.
  Note that as discussed in Appendix \ref{sec:rest}, this list of edges was identified using the Gurobi LP-solver, and not using a rational LP solver. See Appendix \ref{sec:rest} for more information.
  \label{fig13}}
\end{table}

\end{document}